\newtheorem{definition}{Definition}
\newtheorem{theorem}{Theorem}
\newtheorem{lemma}{Lemma}
\newtheorem{proposition}{Proposition}
\newtheorem{conjecture}{Conjecture}
\newtheorem{assumption}{Assumption}
\DeclareMathOperator*{\argmax}{arg\,max}
\newif\ifsupp
\newif\ifarxivpreprint
\newcommand{\bref}[2]{
  \ifsupp
    \ref{#1}
  \else
    #2
  \fi
}
\begin{document}

\title{Understanding the Nature of Depth-1 Equivariant Quantum Circuit} 

\author{Jonathan TEO\textsuperscript{1}}
\email{jrteo.2022@smu.edu.sg}
\orcid{0009-0001-2132-9900}

\author{Xin Wei LEE\textsuperscript{1}}
\email{xwlee@smu.edu.sg}
\orcid{0000-0002-8509-8697}

\author{Hoong Chuin LAU\textsuperscript{1}}
\email{hclau@smu.edu.sg (Corresponding Author)}
\orcid{0000-0002-5326-411X}

\affiliation{\textsuperscript{1}School of Computing and Information Systems, Singapore Management University, Singapore 178902}

\maketitle

\begin{abstract}
    The Equivariant Quantum Circuit (EQC) for the Travelling Salesman Problem (TSP) has been shown to achieve near-optimal performance in solving small TSP problems ($\leq$ 20 nodes) using only two parameters at depth 1. However, extending EQCs to larger TSP problem sizes remains challenging due to the exponential time and memory for quantum circuit simulation, as well as increasing noise and decoherence when running on actual quantum hardware.
    In this work, we propose the \emph{Size-Invariant Grid Search} (SIGS), an efficient training optimization for Quantum Reinforcement Learning (QRL), and use it to simulate the outputs of a trained Depth-1 EQC up to 350-node TSP instances---well beyond previously tractable limits. 
    At TSP with 100 nodes, we reduce total simulation times by 96.4\%, when comparing to RL simulations with the analytical expression (151 minutes using RL to under 6 minutes using SIGS on TSP-100), while achieving a mean optimality gap within 0.005 of the RL trained model on the test set. 
    SIGS provides a practical benchmarking tool for the QRL community, allowing us to efficiently analyze the performance of QRL algorithms on larger problem sizes. 
    We provide a theoretical explanation for SIGS called the \emph{Size-Invariant Properties} that goes beyond the concept of equivariance discussed in prior literature. 
\end{abstract}
\section{Introduction} \label{sect:intro}

Quantum reinforcement learning (QRL) studies parameterized quantum circuits (PQCs) as function approximators within deep reinforcement learning (DRL) pipelines. A common instantiation replaces the neural network in a Deep Q-Network (DQN) with a PQC trained end to end by a classical optimizer, yielding a hybrid agent. Early demonstrations show that PQC-based DQNs can learn near-optimal policies on small benchmarks (e.g., FrozenLake, Cognitive Radio), establishing feasibility and hinting at alternative inductive biases relative to classical value approximators~\cite{chen-qrl-2020}.

This paper is concerned with solving combinatorial optimization problems with QRL. In the classical domain, \emph{Neural Combinatorial Optimization} (NCO) develops DRL/attention architectures that construct high-quality solutions to discrete problems, achieving near-optimal tours for symmetric TSP up to 100 nodes~\cite{Kool2019Attention}; more recent diffusion/backbone advances further improve the quality--speed trade-off~\cite{Li2024FastT2T}.

For QRL to match or even surpass NCO counterparts, two gaps remain. \textbf{Interpretability.} PQC architectures with far fewer parameters than classical networks can perform well, yet the role of each parameter is often opaque for concrete combinatorial tasks. \textbf{Scalability.} Prevalent encodings map one decision variable to one qubit and rely on highly entangling layers; both simulation and hardware execution grow costly with problem size, even with efficient tensor-network simulators. These gaps motivate the developments in this paper.

\subsection{Overview of the Literature}
\textbf{QRL for Combinatorial Optimization}. QRL involves replacing the value function approximator in \textit{Deep Reinforcement Learning} (DRL) with a \textit{Parameterized Quantum Circuit} (PQC). 
Skolik et al. (2023) marked the first systematic application of QRL to Combinatorial Optimization via the Equivariant Quantum Circuit (EQC) for TSP \cite{Skolik2022EquivariantQC}. Using a \textit{Symmetry Preserving Ansatz} (SPA), they demonstrated that respecting graph isomorphisms can drastically reduce the number of trainable parameters of the circuit down to two values, while producing near-optimal tours for instances of up to 20 nodes. 
This also serves as a starting point of our work. More recently, Kruse et al. (2024) generalized this framework to \textit{Quadratic Unconstrained Binary Optimization} (QUBO) formulations, with an \textit{sge-sgv} ansatz that closely related to the EQC (having also two parameters per layer), demonstrating the ability to learn near-optimal policies on other Combinatorial Optimization problems such as \textit{Weighted-MaxCut}, \textit{Knapsack} and the \textit{Unit Containment Problem} \cite{Kruse_2024_HEA}.  

\noindent\textbf{Ansatz design and trainability}.
An \textit{ansatz} specifies the \textit{a certain structure} of a PQC to prepare a variational state or wavefunction \cite{Tilly2022}. The structure of the ansatz can be chosen with or without prior knowledge of the problem. In QRL, the most popular ansatz used is the \textit{Hardware Efficient Ansatz} (HEA) \cite{chen-qrl-2020, Kruse_2024_HEA, dragan2022qrlHEA} due to their device-friendly layouts. However, due to its high expressivity and problem independence, these ansatz can trigger \textit{Barren Plateaus} (BP), making gradients vanish as problem sizes increase \cite{BP_Holmes_2022,BP_McClean_2018}. This has fueled a shift toward problem-tailoned ansatzes for QRL (to be symmetry-aware) in order to improve trainability at shallow depths.

We can interpret the work of Skolik et al. (2023) in three ways. Firstly, from a \textit{Geometric Quantum Machine Learning} (GQML) perspective, Schatzki et al. (2024) formalizes SPAs that are equivariant under the symmetry group $S_n$, and prove that $S_n$-equivariant ansatzes do not suffer from barren plateaus, quickly reach over-parameterization, and generalize well from small amounts of data \cite{Schatzki_2024, Larocca_2023}. Both the EQC and the \textit{sge-sgv} ansatzes are examples of $S_n$-equivariant \textit{Quantum Neural Networks} (QNNs). Secondly, Skolik et al. (2023) further empirically demonstrates that SPAs outperform their non-symmetry preserving counterparts through ablation simulations which gradually break the equivariance of the EQC \cite{Skolik2022EquivariantQC}. An alternate lens for interpreting the EQC is via the \textit{Quantum Approximate Optimization Algorithm} (QAOA). The structure of the EQC mirrors the structure of the QAOA ansatz: where the parameters $\gamma$ and $\beta$ play analogous roles to the cost and mixer Hamiltonians \cite{farhi2014quantumapproximateoptimizationalgorithm}. QAOA is well-known in solving combinatorial optimization problems, prominently on the MaxCut problem with proven bounds~\cite{Wurtz_2021,Marwaha_2021} and empirical benchmarks~\cite{leo2020,Lotshaw_2021}. Lastly, the idea of ensuring that embeddings of graphs to be invariant under node permutations closely follows to the design principles of classical NCO algorithms that respect graph structures through message passing and attention \cite{gillman2024combiningreinforcementlearningtensor, kaempfer2019learningmultipletravelingsalesmen, kool2019attentionlearnsolverouting}.

\subsection{Contributions}
Our work makes the following contributions:

\begin{itemize}
    \item \textbf{A size-invariant understanding of parameters:} We provide a new explanation for the roles of the trainable parameters ($\gamma$ and $\beta$) at Depth~1 of the EQC, extending beyond the concept of equivariance in \cite{Skolik2022EquivariantQC}. By analyzing the closed-form expressions of the Depth~1 EQC \cite{akshay2021parameter,analytical-qaoa,exp-qaoa-ising,qaoa-ferm}, we show how these parameters govern effective QRL policies for TSP. On one hand, while QAOA uses variational optimization to minimize Hamiltonian expectations, QRL (using Q-Learning) involves optimizing the Temporal-Difference Loss \cite{sutton-barto-2018}, with Q-values derived from $ZZ$ expectations and guided by a \textit{Markov Decision Process}(MDP). This fundamental methodological difference motivates our analysis of Depth~1 EQCs in reinforcement learning.
    
    \item \textbf{Size-invariant trainability:}. As a consequence of the size-invariant interpretation of parameters, we conclude that effective QRL policies with Depth~1 EQCs reside in a compact, problem size-invariant parameter region. Leveraging this, we propose the \textit{Size-Invariant Grid Search} (SIGS), which reduces training time for EQC-based QRL by 27 times (e.g., 151 minutes (using the Analytical Expression) to 5.43 minutes for TSP with 100 nodes), and enables efficient simulation of instances with up to 350 nodes (longest run completes under $140$ minutes). This allows us to analyze scalability of Depth~1 EQCs well beyond the tractable limits of classical quantum simulation.
\end{itemize}
\textbf{Limitation}. Our findings are restricted to Depth 1 EQCs. While, in theory, deeper circuits should improve the agent's performance, our empirical results at EQC Depths 2 to 4 only increases the time cost of simulation, with negligible difference in performance when compared to Depth 1 (see Section \ref{sect:discussion}). This negligible difference in performance is consistent with prior Depth-4 results in \cite{Skolik2022EquivariantQC}. 

The remainder of the manuscript proceeds as follows: Section \ref{sect:background} outlines the existing methodology in Skolik et al. (2023)\cite{Skolik2022EquivariantQC} which this project builds upon. Section \ref{sect:size-invariant-theory} then outlines the \textit{Size-Invariant Properties} of the EQC at Depth 1, showing that good QRL policies are found within a constrained search space of parameters. Section \ref{sect:size-invariant-method} presents the \emph{Size-Invariant Grid-Search} (SIGS) procedure. Finally, Section \ref{sect:size-invariant-exp} empirically validates our theorems through various simulation results.
\section{Background and Existing Methodology} \label{sect:background} \label{sect:background-methodology}
In this section, we review the relevant background and EQC-based methodology introduced in \cite{Skolik2022EquivariantQC}, providing the necessary context for the contributions presented in the subsequent sections.

\subsection{Markov Decision Process for TSP} \label{sect:background-methodology-mdp} The formulation of TSP as a Markov Decision Process (MDP) used in Skolik et al. (2023) \cite{Skolik2022EquivariantQC} is described here. We follow the same MDP in this work. 

\noindent Given a TSP size $|V|$, with $|E| = \frac{|V|(|V|-1)}{2}$ edges, each component of the MDP is defined as follows: 
\begin{enumerate}
    \item \textit{State Space}: Each (Reinforcement Learning) state is defined as the tuple $(s,d, e,t)$. $s \in \{0, \pi\}^{|V|}$ denotes a one-hot encoding for which nodes have been visited in a graph. The element $s_i$ is set to $\pi$ for unexplored nodes, and $0$ for explored nodes. $d \in \mathbb{R}^{|E|}$ denotes a vector of pairwise Euclidean Distance, and $e$ is a vector of scaled edge weights $e = \tan^{-1}(d) \in \mathbb{R}^{|E|}$, following the reference implementation in \cite{Skolik2022EquivariantQC}. Note that $f(x) = \tan^{-1}(x)$ is a monotonically increasing function in edge distance $x$. $t \in [1,|V|]$ denotes the current node where the agent resides.
    \item \textit{Action Space}: Each action $a$ is a node index (ie. $1\leq a \leq |V|$, $s_a =\pi$). We restrict possible values of $a$ to nodes that have not been explored. The environment then appends this node to the end of the tour.
    \item \textit{Reward}: At each step, the reward is the negative increase in the tour length, measured by Euclidean Distance. Suppose the tour length at step $t$ is $L_t$ and the tour length at step $t+1$ is $L_{t+1}$. Then the reward to the agent is $L_t - L_{t+1}$. We always assume that the last node of the partial tour connects to the first node, and factor in this edge weight accordingly.
    \item \textit{Transition Probabilities}: In a natural TSP formulation as an MDP, taking an action logically results in a deterministic next state: the node $a$ becomes explored, and $s_a \leftarrow  0$ with probability 1.  
    \item \textit{Episodes and Steps}: An \textit{episode} represents a complete TSP tour. The \textit{initial} step $t=0$ is the initial state the RL agent is in, denoted by $s_0$. Without loss of generality, our tours begin with the node at index $1$ (assuming one-indexed), where only node 1 is visited. The terminal step $T$ is the step $|V|-1$, which is when all nodes are visited, and $t$ being some permutation of the $|V|$ vertices. 
\end{enumerate}

\subsection{Parameterized Quantum Circuit} \label{sect:background-methodology-pqc} The \textit{Equivariant Quantum Circuit} (EQC) is an example of an PQC. A diagram of the EQC is provided in Figure \ref{fig:eqc-diagram} for a $n=4$ qubit system. In the formulation of the EQC, there is a one-to-one mapping between qubits and TSP nodes, which means an $n$ qubit system is only able to handle TSP instances of size $n$. There are two kinds of parameterized gates in the EQC as shown: a $ZZ$-gate and a $R_x$ gate. Note that $\gamma$ is shared across all $ZZ$-gates, while $\beta$ is shared across all $R_x$ gates. Measurements are recorded with the $ZZ = Z\bigotimes Z$ observable. The analytical expression for the output of the EQC in the basis of the observable $\langle Z_tZ_a\rangle$, where $t$ is the last node of the present tour and $a$ is an unexplored next node, is given in \cite{Skolik2022EquivariantQC}:
\begin{equation}
    \langle Z_tZ_a ; (\gamma,\beta)\rangle = \sin (\pi\beta)\sin(\gamma e_{ta})\prod_{k=1, k\neq t}^n \cos(\gamma e_{ak}).
    \label{eq:eqc-analytical}
\end{equation}
\begin{figure}[!t]
    \centering
    \input{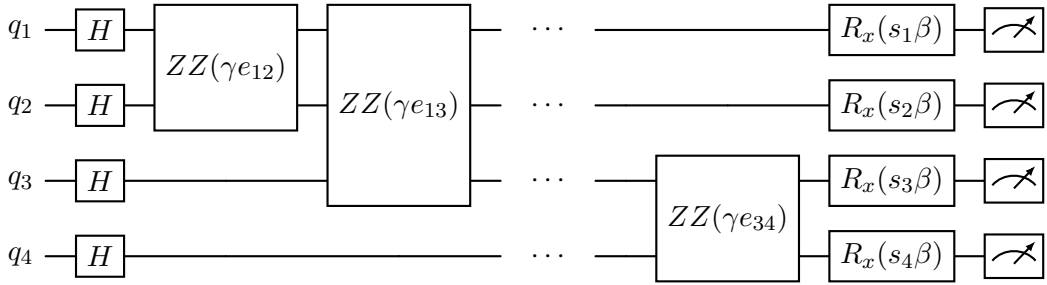}
    \caption{Equivariant Quantum Circuit (EQC) (Depth 1). $\gamma$ and $\beta$ are trainable parameters. Each qubit $q_i$ corresponds to the node $i$. So an $n=4$ qubit system can serve as a PQC for TSP-4 instances. The state $s_i$ is equal to $\pi$ if node $i$ is yet to be visited in the tour, otherwise $s_i = 0$.}
    \label{fig:eqc-diagram}
\end{figure}

\subsection{Quantum Deep Q Learning} \label{sect:background-methodology-quantum-dqn} In the EQC-based QRL framework, the RL state $(s,d,e,t)$ modifies the EQC's rotation parameters, where $s_i$ encodes the $R_x$-rotation angle for qubit $i$, and $e_{ij}$ is the $R_z$-rotation angle on the two-qubit $ZZ$-gate with control qubit $i$, target qubit $j$. 

\noindent The Q-value of moving to an unexplored node $a$ from the present state $(s,d,e,t)$ is given by \ref{eq:eqc-qval} (see equation 17 of \cite{Skolik2022EquivariantQC}):
\begin{equation}
    \begin{aligned}
        Q((s,d,e,t),a ;(\gamma, \beta)) &= d_{ta} \langle Z_tZ_a; (\gamma,\beta)\rangle\\
        &= d_{ta}\sin (\pi\beta)\sin(\gamma e_{ta})\prod_{k=1, k\neq t,a}^n \cos(\gamma e_{ak})
    \end{aligned}
    \label{eq:eqc-qval}
\end{equation}
where $\gamma$, $\beta$ denote the parameters of the \textit{behaviour} policy. The parameters of the behaviour policy are optimized by minimizing the temporal difference loss \cite{mnih2015dqn}.

\noindent Under the Q-Learning framework, the agent's policy at each step is the arg-max over actions:
\begin{equation}
    \pi((s,d,e,t);(\gamma,\beta))= \argmax_{a \cap \{i: s_i = \pi\}} Q((s,d,e,t), a;(\gamma,\beta))
\end{equation}

\subsection{Policy Evaluation Metrics} \label{sect:background-methodology-metrics} 
The authors in \cite{Skolik2022EquivariantQC} used a dataset of TSP instances of size 5, 10 and 20, with node locations uniformly generated in a unit square. We adopted a similar generation procedure for node locations for our simulations, with more specific details discussed in Supplementary Information \bref{sect:experimental-config}{B}.

We can evaluate a \textit{heuristic} for TSP by computing empirical averages of the \textit{optimality gap} on TSP instances in a test set. For one dataset, we evaluate the heuristic using the \textit{Mean Optimality Gap} (Mean Gap) and \textit{Worst Optimality Gap} (Worst Gap) (see Equation \ref{eq:mean-and-worst-gap}). In the below equations, let $x$ be an instance in a dataset $\cal D$, and $\text{QRL}(x)$  be the length of tour produced by the QRL algorithm for instance $x$, and $\text{OPT}(x)$ be the optimal tour length determined for instance $x$. 
\begin{equation}
    \text{Gap}(x) = \frac{\text{QRL}(x)}{\text{OPT}(x)}
    \label{eq:gap}
\end{equation}
\begin{equation}
    \text{Mean Gap} = \frac{1}{|{\cal D}|}\sum_{x \in {\cal D}}\text{Gap}(x), \quad \text{Worst Gap} =  \max_{x \in {\cal D}}\text{Gap}(x)
    \label{eq:mean-and-worst-gap}
\end{equation}
\section{Size-Invariant Properties of the Depth 1 EQC}\label{sect:size-invariant-theory}

In this section, we present Theorems \ref{thm:good-beta}, \ref{thm:gamma-greedy} and Proposition \ref{prop:gamma-range}, before supporting them with empirical results in Section \ref{sect:size-invariant-exp}. We use these theorems to show that the characteristics of the optimization landscape when using the EQC across all TSP sizes at depth 1 are similar and restricted to a small search space. 
Throughout this section, we use the formulation of the MDP in \cite{Skolik2022EquivariantQC}, outlined in Section \ref{sect:background-methodology-mdp}.

\begin{lemma} \label{lem:monotonic-edgeweights}
    \textit{(Relative Ordering of Q-values)} For a fixed, unexplored node $a$ and a set of candidate last nodes $T$, the Q-value of visiting node $a$ from node $t \in T$ is proportional to $d_{ta} \tan (\gamma e_{ta})$. 
\end{lemma}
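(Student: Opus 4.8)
The plan is to start directly from the closed-form Q-value in Equation~\eqref{eq:eqc-qval} and isolate its dependence on the last node $t$ while holding the target node $a$ fixed. Writing
\[
Q((s,d,e,t),a;(\gamma,\beta)) = d_{ta}\,\sin(\pi\beta)\,\sin(\gamma e_{ta})\prod_{k=1,\,k\neq t,a}^{n}\cos(\gamma e_{ak}),
\]
I note that the factor $\sin(\pi\beta)$ carries no $t$-dependence and can be absorbed into a proportionality constant at once. The only remaining source of $t$-dependence, besides the explicit factors $d_{ta}\sin(\gamma e_{ta})$, is the choice of which index is omitted from the product over $k$.

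The key manipulation is to decouple that omission. Let $C_a := \prod_{k=1,\,k\neq a}^{n}\cos(\gamma e_{ak})$ be the product over all nodes other than $a$; this quantity depends only on $a$ and $\gamma$, not on $t$. Assuming $\cos(\gamma e_{at})\neq 0$, I rewrite the truncated product as $\prod_{k\neq t,a}\cos(\gamma e_{ak}) = C_a/\cos(\gamma e_{at})$. Substituting this and using the symmetry of the (undirected) edge weights $e_{at}=e_{ta}$, the Q-value collapses to
\[
Q = \sin(\pi\beta)\,C_a \cdot d_{ta}\,\frac{\sin(\gamma e_{ta})}{\cos(\gamma e_{ta})} = \bigl(\sin(\pi\beta)\,C_a\bigr)\, d_{ta}\tan(\gamma e_{ta}).
\]

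Since both $\sin(\pi\beta)$ and $C_a$ are independent of the last node $t$, the bracketed prefactor is a single constant shared by every $t\in T$, so the Q-value of visiting $a$ from any candidate node is proportional to $d_{ta}\tan(\gamma e_{ta})$, which is exactly the claim. The one delicate point I expect is the division step, which requires the omitted cosine factor $\cos(\gamma e_{at})$ not to vanish; I would therefore state the non-degeneracy hypothesis $\cos(\gamma e_{at})\neq 0$ explicitly, noting that it holds automatically in the small-$\gamma$ regime of interest where each $\gamma e_{ak}$ stays below $\pi/2$. Everything else is routine algebra, and no appeal to the underlying $R_x$-rotation or entangling structure beyond the already-simplified form of Equation~\eqref{eq:eqc-qval} is required.
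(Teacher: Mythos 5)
Your proposal is correct and follows essentially the same route as the paper: both rewrite the truncated product $\prod_{k\neq t,a}\cos(\gamma e_{ak})$ as the $t$-independent full product divided by $\cos(\gamma e_{ta})$, fold the resulting $\sin/\cos$ into $\tan(\gamma e_{ta})$, and absorb the remaining $t$-independent factors into the proportionality constant. Your explicit non-degeneracy hypothesis $\cos(\gamma e_{ta})\neq 0$ is a minor point of extra care that the paper leaves implicit (it holds automatically in the regime $0<\gamma e_{ij}<\pi/2$ used in the subsequent theorems).
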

\begin{proof}
    Starting from the form in Equation \ref{eq:eqc-qval}, we can show that:
    \begin{align}
        Q((s,d,e,t),a;(\gamma,\beta)) &= d_{ta}\sin(\pi\beta) \sin(\gamma e_{ta})\frac{\prod_{k=1, k\neq a}^{n} \cos(\gamma e_{ak})}{\cos(\gamma e_{ta})} \\
        &=d_{ta}\sin(\pi\beta)\tan(\gamma e_{ta})\prod_{k=1, k\neq a}^{n}\cos(\gamma e_{ak})\\
        &\propto d_{ta} \tan (\gamma e_{ta}) \ \text{for fixed node } a.
    \end{align}
\end{proof}

\begin{theorem} \label{thm:good-beta}
    \textit{(Optimality Condition for $\beta$)} When $\gamma$ satisfies $0 < \gamma e_{ij} < \pi/2$ for all edges $e_{ij}$, due to Lemma \ref{lem:monotonic-edgeweights}, good QRL policies are found only in regions where $\sin(\pi\beta)<0$.
\end{theorem}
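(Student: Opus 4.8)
The plan is to reduce the policy's argmax over actions to a single sign comparison, and then to pin down which sign of $\sin(\pi\beta)$ makes that argmax behave like a greedy, short-edge-preferring rule, which is the operational meaning of a ``good'' TSP policy under the reward defined in Section~\ref{sect:background-methodology-mdp}.

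First I would fix the current node $t$ and regard the Q-value as a function of the candidate action $a$. Using the same factorization derived for Lemma~\ref{lem:monotonic-edgeweights}, I write
\begin{equation}
    Q((s,d,e,t),a;(\gamma,\beta)) = \sin(\pi\beta)\, g(a), \qquad g(a) := d_{ta}\tan(\gamma e_{ta})\prod_{k=1,k\neq a}^{n}\cos(\gamma e_{ak}).
\end{equation}
Invoking the hypothesis $0<\gamma e_{ij}<\pi/2$, every factor of $g(a)$ is strictly positive: $d_{ta}>0$, $\tan(\gamma e_{ta})>0$, and each $\cos(\gamma e_{ak})>0$. Hence $g(a)>0$ for every admissible action, so the sign of $Q$ is constant across actions and equals the sign of $\sin(\pi\beta)$.

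Because $g(a)>0$, the policy's argmax collapses to $\argmax_a g(a)$ when $\sin(\pi\beta)>0$, to $\argmin_a g(a)$ when $\sin(\pi\beta)<0$, and to an arbitrary tie-break when $\sin(\pi\beta)=0$ (since then $Q\equiv 0$). The next step connects the ordering of $g(a)$ to edge length. Since the reward is the negative tour-length increment, a good policy must act greedily, preferring the nearest unexplored node, i.e.\ the smallest $d_{ta}$. By Lemma~\ref{lem:monotonic-edgeweights} the decision-relevant quantity is $d_{ta}\tan(\gamma e_{ta})$, which is strictly increasing in $d_{ta}$ (recall $e=\tan^{-1}(d)$ is increasing and $\tan$ is increasing on $(0,\pi/2)$), so $g(a)$ grows with $d_{ta}$. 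Consequently $\argmin_a g(a)$ selects the nearest node and $\argmax_a g(a)$ the farthest. Matching the greedy rule therefore forces the $\argmin$ branch, i.e.\ $\sin(\pi\beta)<0$; the case $\sin(\pi\beta)>0$ gives an anti-greedy farthest-node policy and $\sin(\pi\beta)=0$ gives a degenerate all-zero value, neither of which can be good. This establishes the claimed necessary condition.

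The main obstacle is the product factor $\prod_{k\neq a}\cos(\gamma e_{ak})$, which depends on $a$ through all edges incident to $a$, so $g(a)$ is not literally a function of $d_{ta}$ alone and two nodes equidistant from $t$ need not share the same $g$. I would control this by arguing that under the stated regime the factor $d_{ta}\tan(\gamma e_{ta})$ dominates the ordering: each cosine lies in $(0,1)$ and contributes a bounded, slowly varying positive weight that rescales magnitudes without reversing the monotone trend in edge length that drives the greedy comparison. Making this domination quantitative---for instance by bounding the variation of the product across candidate nodes, or by restricting to the small-$\gamma e$ regime where each cosine is near $1$---is the delicate part; the sign reduction in the earlier steps is routine, and the empirical validation in Section~\ref{sect:size-invariant-exp} is what ultimately corroborates that the product term does not overturn the conclusion.
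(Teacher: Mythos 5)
Your opening reduction is sound: under $0<\gamma e_{ij}<\pi/2$ every factor of $g(a)=d_{ta}\tan(\gamma e_{ta})\prod_{k\neq a}\cos(\gamma e_{ak})$ is positive, so the sign of every Q-value equals the sign of $\sin(\pi\beta)$, and the argmax over actions becomes $\argmax_a g(a)$ or $\argmin_a g(a)$ accordingly. The gap is in the next step, and it is exactly the obstacle you flag yourself: you need $g(a)$ to be increasing in $d_{ta}$ \emph{across actions $a$ for fixed $t$}, and this is false in general on the stated hypothesis. The product $\prod_{k\neq a}\cos(\gamma e_{ak})$ changes with $a$: a node close to $t$ but far from all remaining nodes can carry a much smaller cosine product than a node slightly farther from $t$ but centrally located, which reverses the ordering. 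The paper itself relies on this effect --- Theorem \ref{thm:gamma-greedy} and the visualizations near $\gamma\approx 1.63$ show that as $\gamma$ approaches $\gamma_{\max}$ the agent begins selecting its \emph{farthest} neighbours even with $\sin(\pi\beta)<0$, i.e.\ the ordering of $g$ across actions is genuinely not monotone in $d_{ta}$ throughout the safe region. Consequently your proposed ``domination'' bound on the cosine product cannot be made rigorous for the whole range $0<\gamma e_{ij}<\pi/2$; it only holds in the small-$\gamma$ limit, which proves a strictly weaker statement than the theorem.

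The paper's proof sidesteps this entirely by comparing in the transposed direction: Lemma \ref{lem:monotonic-edgeweights} fixes the unexplored node $a$ and varies the candidate last node $t$. In that comparison the product $\prod_{k\neq a}\cos(\gamma e_{ak})$ depends only on $a$, so it is a common positive factor that cancels, and the ordering of Q-values is \emph{exactly} that of $d_{ta}\tan(\gamma e_{ta})$ --- monotone in the edge weight with no approximation. From this the paper concludes that when $\sin(\pi\beta)>0$, longer edges into any fixed node $a$ always have strictly larger Q-values for every admissible $\gamma$, so the greedy agent systematically prefers long edges, contradicting the MDP objective of maximizing return; hence good policies require $\sin(\pi\beta)<0$. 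To repair your argument you would either have to restrict to the near-zero-$\gamma$ regime (weakening the theorem) or restructure it around the fixed-$a$, varying-$t$ comparison as the paper does; the quantitative control of the cosine product across actions that your plan calls for does not exist.
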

\begin{proof}
    Our assumption on $\gamma$ implies that $\tan(\gamma e_{ta})>0$ (in Lemma \ref{lem:monotonic-edgeweights}).
    
    Suppose we have a set of candidate last nodes $t_1, t_2,t_3\in T$ and an unexplored node $a$ such that $e_{t_1a} < e_{t_2}a < e_{t_3}a$. Refer to Figure \ref{fig:vis-thm2} for a visualization. Also, under the Q-Learning framework, the RL policy select the maximum-valued action at any state: $\pi(s) = \max_{a} Q(s,a)$ \cite{sutton-barto-2018}. 
    \begin{itemize}
        \item When $\sin(\pi\beta) > 0$, by Lemma \ref{lem:monotonic-edgeweights}, 
        \begin{equation}
            Q((s, d, e, t_1),a; \cdot) < Q((s,d, e,t_2), a; \cdot) < Q((s,d,e,t_3), a ; \cdot).
        \end{equation} 
        \noindent This means that for a fixed node $a$, \textit{larger} edge weights to node $a$ will always have strictly \textit{larger} Q-values than smaller edge weights to node $a$, for any value of $\gamma$. 
        \item When $\sin(\pi\beta) < 0$, by Lemma \ref{lem:monotonic-edgeweights}, 
        \begin{equation}
            Q((s,d,e,t_1),a; \cdot) > Q((s,d,e,t_2),a; \cdot) > Q((s,d, e,t_3), a;\cdot).
        \end{equation}
        \noindent This means that for a fixed node $a$, \textit{larger} edge weights to node $a$ will always have strictly \textit{smaller} Q-values than smaller edge weights to node $a$, for any value of $\gamma$. 
    \end{itemize}
    When $\sin(\pi\beta) > 0$, when the agent selects the maximum valued action in terms of Q-value, it would select an edge from the largest few edge weights, instead of the smaller few edge weights. This results in the agent's expected discounted return being minimized. This is a contradiction to the goal of the agent's MDP.
    \begin{figure}[t!]
        \centering
        \includegraphics[width=0.7\linewidth]{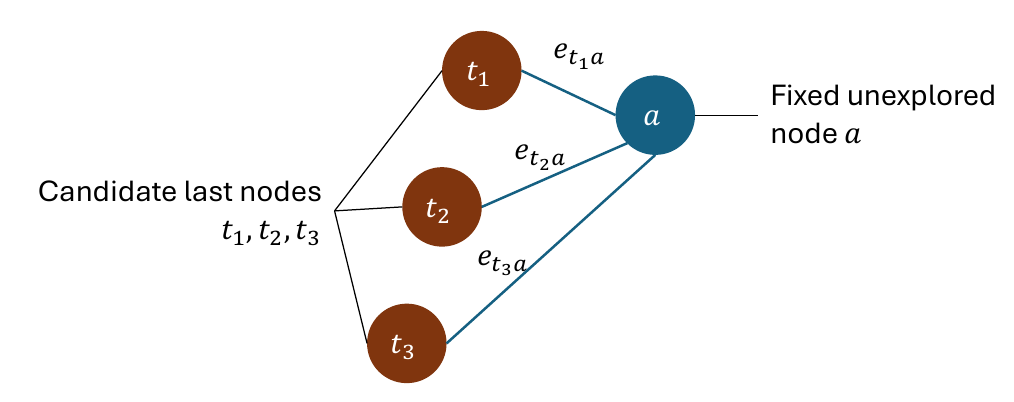}
        \caption{Visualization of $e_{t_1a}<e_{t_2a} < e_{t_3a}$ (Theorem 2)}
        \label{fig:vis-thm2}
    \end{figure}
\end{proof}


\begin{definition} \label{def:gamma-range}
    \textit{(Range for $\gamma$)} In Theorems \ref{thm:good-beta} and \ref{thm:gamma-greedy}, we assumed that $0 <\gamma e_{ij} < \pi/2$ for all edges $i,j$. This means that $\gamma$ must satisfy
    \begin{equation}
        \gamma \max_{(i,j)\in E} e_{ij} < \pi/2 \Leftrightarrow \gamma < \frac{\pi/2}{\max_{(i,j)\in E} e_{ij}}
    \end{equation}
    where $E$ denotes all scaled edge weights in the instance.

    \noindent From this point on, we define
    \begin{equation}
        \gamma_{\max} \coloneq \frac{\pi/2}{\max_{(i,j)\in E} e_{ij}}
    \end{equation}
\end{definition}
\begin{theorem} \label{thm:gamma-greedy}
    \textit{(Interpretation for $\gamma$)} When $\beta$ satisfies $\sin(\pi\beta)<0$  and $\gamma$ satisfies $ 0 < \gamma e_{ij} < \pi/2$ for all edges $e_{ij}$, the value of $\gamma$ controls how strongly the agent favors the nearest neighbor as opposed to further nodes.  
\end{theorem}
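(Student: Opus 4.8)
The plan is to read the greedy action directly off the closed form and isolate how $\gamma$ reweights the two competing geometric quantities that determine it. First I would convert the $\argmax$ into a minimization: since $\sin(\pi\beta)<0$ is a fixed negative constant and every remaining factor in Equation~\ref{eq:eqc-qval} is positive under $0<\gamma e_{ij}<\pi/2$, the policy selects, among unexplored nodes, the $a$ minimizing the positive quantity
$$g(a)=d_{ta}\sin(\gamma e_{ta})\prod_{k=1,\,k\neq t,a}^{n}\cos(\gamma e_{ak}).$$
I would then decompose $g(a)=L(a)\,G(a)$ into a \emph{local} factor $L(a)=d_{ta}\sin(\gamma e_{ta})=\tan(e_{ta})\sin(\gamma e_{ta})$, using $d=\tan e$ from the scaling $e=\tan^{-1}d$, which depends only on the edge from the current node $t$ to $a$, and a \emph{global} factor $G(a)=\prod_{k\neq t,a}\cos(\gamma e_{ak})$, which measures how far $a$ sits from the remaining nodes.

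The core is a limiting analysis plus a monotonicity computation. In the regime $\gamma\to0^+$ I would Taylor expand to get $G(a)\to1$ and $L(a)\approx\gamma\,d_{ta}e_{ta}$, so $g(a)\approx\gamma\,d_{ta}e_{ta}$; since $d_{ta}e_{ta}=d_{ta}\tan^{-1}(d_{ta})$ is strictly increasing in $d_{ta}$, the minimizer is exactly the nearest neighbour, so at small $\gamma$ the agent is a pure nearest-neighbour heuristic. Next I would note that $L(a)=\tan(e_{ta})\sin(\gamma e_{ta})$ is strictly increasing in $e_{ta}$ for \emph{every} admissible $\gamma$, so the local factor alone always ranks the nearest node first; any departure from nearest-neighbour must therefore originate from $G(a)$. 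To see how $\gamma$ tunes this, I would compare a near candidate $e_{ta_1}$ against a far one $e_{ta_2}>e_{ta_1}$ and show the local advantage shrinks as $\gamma$ grows: from $\frac{d}{d\gamma}\log\frac{\sin(\gamma e_{ta_2})}{\sin(\gamma e_{ta_1})}=e_{ta_2}\cot(\gamma e_{ta_2})-e_{ta_1}\cot(\gamma e_{ta_1})$ and the fact that $u\cot u$ is strictly decreasing on $(0,\pi/2)$ (its derivative has numerator $\tfrac12\sin 2u-u<0$), the derivative is negative, so the penalty ratio $\sin(\gamma e_{ta_2})/\sin(\gamma e_{ta_1})$ strictly decreases in $\gamma$. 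Simultaneously each $\cos(\gamma e_{ak})$ in $G$ strictly decreases in $\gamma$, sharply toward $0$ for edges approaching $e_{\max}$ as $\gamma\to\gamma_{\max}$ (Definition~\ref{def:gamma-range}). Together these give the interpretation: $\gamma$ sets the weight of the global-isolation term relative to the nearest-neighbour pull, with small $\gamma$ enforcing strict nearest-neighbour behaviour and larger $\gamma$ progressively letting the global term compete with, and potentially override, it.

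I expect the main obstacle to be making ``how strongly'' fully quantitative in an instance-independent way. Unlike $L(a)$, the factor $G(a)$ couples all of node $a$'s edges and depends on the entire instance geometry, so whether, and at which $\gamma$, the global term actually flips the greedy choice admits no closed-form, instance-free threshold. My plan is therefore to keep the rigorous backbone to the two clean statements, namely the exact nearest-neighbour limit as $\gamma\to0^+$ and the strict monotonicity in $\gamma$ of both the local penalty ratio and each cosine factor, and to present the resulting ``strength'' as a monotone trend whose quantitative calibration across problem sizes is established by the empirical parameter sweeps in Section~\ref{sect:size-invariant-exp}.
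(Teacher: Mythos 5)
Your proposal is correct, and its backbone coincides with the paper's own proof: both arguments use $\sin(\pi\beta)<0$ to reduce the greedy choice to minimizing the positive quantity $d_{ta}\sin(\gamma e_{ta})\prod_{k\neq t,a}\cos(\gamma e_{ak})$, both establish the $\gamma\to 0^+$ limit by noting the cosine product tends to $1$ while $\tan(e_{ta})\sin(\gamma e_{ta})$ is increasing in $e_{ta}$ (so the agent is exactly a nearest-neighbour heuristic there), and both attribute the large-$\gamma$ behaviour to the collapse of the cosine product as $\gamma\to\gamma_{\max}$, which lets nodes isolated from the rest of the graph compete with near nodes. Where you go beyond the paper is the middle regime: the paper's proof consists solely of these two endpoint limits, with the intermediate ``dial'' behaviour asserted informally and supported empirically in Section \ref{sect:size-invariant-exp}, whereas you make the local/global factorization $g(a)=L(a)\,G(a)$ explicit and prove an actual interpolation statement --- the penalty ratio $\sin(\gamma e_{ta_2})/\sin(\gamma e_{ta_1})$ for $e_{ta_2}>e_{ta_1}$ is strictly decreasing in $\gamma$. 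Your computation checks out: writing $h(u)=u\cot u$, the log-derivative equals $\frac{1}{\gamma}\bigl(h(\gamma e_{ta_2})-h(\gamma e_{ta_1})\bigr)$, and $h'(u)=\bigl(\tfrac12\sin 2u-u\bigr)/\sin^2 u<0$ on $(0,\pi/2)$, so the ratio indeed shrinks. This buys a genuinely stronger reading of ``controls how strongly'': the paper interpolates between its two limits only rhetorically, while you show the preference for nearer nodes weakens \emph{monotonically} in $\gamma$, from two complementary mechanisms (the flattening sine ratio and each decreasing cosine factor). The cost is the same in both treatments: since $G(a)$ couples the whole instance geometry, neither you nor the paper can give an instance-free threshold at which the global term overrides the local one, and both defer that calibration to the empirical sweeps --- a limitation you state explicitly, which is the honest way to frame a theorem whose statement is itself qualitative.
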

\begin{proof}
    When $\sin(\pi\beta)$ < 0, all Q-values are negative, so the maximal Q-value is the smallest absolute value of $\langle Z_tZ_a\rangle$. 
    \begin{itemize}
        \item When $\gamma \rightarrow 0$, then $\gamma e_{jk} \rightarrow 0$ and $\prod_{k=1,k\neq t}^n \cos(e_{ka}) \rightarrow 1$. As such, $Q((s,d,e,t),a;\cdot) \rightarrow d_{ta} \sin(\pi\beta)\sin(\gamma e_{ta})$. The function $f(e_{ta}) = d_{ta} \sin(\gamma e_{ta}) = \tan(e_{ta}) \sin(\gamma e_{ta})$ is an increasing function in $e_{ta}$ over the domain of $0 < e_{ta} < \tan^{-1} {\sqrt{2}}$. Therefore, the agent begins to award the nearest node with the highest Q-value most of the time.
        \item When $\gamma \rightarrow \frac{\pi/2}{\max_{(i,j)\in E} e_{ij}}$, then $\prod_{k=1, k\neq t}^n \cos(\gamma e_{ta}) \rightarrow 0$. At this stage, both a small edge weight $e_{ta}$ or a large distance from other nodes will maximize the Q-value. This results in the agent selecting from sometimes the nearest nodes and sometimes from the furthest few nodes. Note that the agent still selects the nearest nodes more often, due to the inherent ordering of the nodes discussed in Theorem \ref{thm:good-beta}. 
    \end{itemize}
\end{proof}
The remainder of this section intuitively explains our design choice to restrict optimization to $\gamma \in (0,\gamma_{\max})$, under the assumption that edges are drawn from a continuous distribution. We show that optimization is fragile in this domain, as we observe that the $\arg \max$ of the Q-values changes frequently due to sign flips in the cosine terms of the Q-value.

\begin{assumption} \label{assumption:non-resonance}
    (Continuous Distribution of Edge Weights) For each instance edge weights $(d,e)$, $d_{ij}$ between any distinct pairs of nodes $i, j$ where $i\neq j$ are sampled from a continuous distribution. With high probability, for all unordered node pairs $1\leq i,j \leq |V|, i\neq j$,
    \begin{enumerate}
        \item All $d_{ij}$ are distinct;
        \item Hence $e_{ij} = \tan^{-1}(d_{ij})$ (see Section \ref{sect:background-methodology-mdp}) are distinct;
        \item Consequently, solutions $\gamma$ to $\cos(\gamma e_{ij}) = 0$ are distinct.
    \end{enumerate}
\end{assumption}
\begin{definition} \label{def:cosine-zero-sets}
    (\textit{Cosine-Zero Sets}) Fix the state $(s,d,e,t)$. For an interval $I\subset (0, \infty)$ (we use $I = [\gamma_{\max} , 2\pi]$), define 
    \begin{equation}
        \mathcal{Z}_{(s,d,e,t)}(I)\coloneq \{\gamma \in I; \exists\  a\neq t, b \neq t, a\neq b; \cos(\gamma e_{ab}) = 0\}.
    \end{equation}
\end{definition}
\begin{lemma} \label{lem:safety-of-gamma-range}
    (Safety of the domain $\gamma \in (0, \gamma_{\max})$ ) . Let $\gamma_{\max} = \frac{\pi/2}{\max_{i<j}e_{ij}}$, then $\gamma \notin {\cal Z}_{(s,d,e,t)}((0,\infty))$ for every state $(s,d,e,t)$.  
\end{lemma}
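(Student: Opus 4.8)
The plan is to show that the open interval $(0,\gamma_{\max})$ avoids every zero of every cosine factor that can appear in the Q-value, uniformly over all states. I read the statement as asserting that \emph{every} $\gamma\in(0,\gamma_{\max})$ satisfies $\gamma\notin\mathcal{Z}_{(s,d,e,t)}((0,\infty))$ for every state $(s,d,e,t)$, i.e.\ $(0,\gamma_{\max})\cap\mathcal{Z}_{(s,d,e,t)}((0,\infty))=\emptyset$. The whole argument is a direct consequence of how $\gamma_{\max}$ was defined.

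First I would fix an arbitrary state $(s,d,e,t)$, an arbitrary $\gamma\in(0,\gamma_{\max})$, and an arbitrary admissible pair $a\neq b$ with $a,b\neq t$, exactly as in Definition \ref{def:cosine-zero-sets}. By the definition of the Cosine-Zero set, membership of $\gamma$ in $\mathcal{Z}_{(s,d,e,t)}((0,\infty))$ is equivalent to the existence of such a pair with $\cos(\gamma e_{ab})=0$, so the goal reduces to verifying $\cos(\gamma e_{ab})\neq 0$ for every admissible pair.

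The key step is a single chain of inequalities that keeps the cosine argument strictly between $0$ and its first zero at $\pi/2$. Since all distances are positive we have $e_{ab}>0$, and since $e_{ab}$ is one of the scaled edge weights, $e_{ab}\leq\max_{(i,j)\in E}e_{ij}$. Multiplying by $\gamma>0$ and invoking $\gamma<\gamma_{\max}=\frac{\pi/2}{\max_{(i,j)\in E}e_{ij}}$ from Definition \ref{def:gamma-range} yields
\[
0 < \gamma e_{ab} \leq \gamma \max_{(i,j)\in E} e_{ij} < \gamma_{\max}\,\max_{(i,j)\in E} e_{ij} = \frac{\pi}{2}.
\]
Because $\cos$ is strictly positive on $(0,\pi/2)$ and vanishes first only at the endpoint $\pi/2$, it follows that $\cos(\gamma e_{ab})>0$, hence nonzero.

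Finally I would observe that the pair $(a,b)$ and the state were arbitrary, and that the global bound $\max_{(i,j)\in E}e_{ij}$ dominates every $e_{ab}$ regardless of which node $t$ is excluded; the conclusion is therefore uniform, so no cosine factor can vanish for any $\gamma\in(0,\gamma_{\max})$, giving $\gamma\notin\mathcal{Z}_{(s,d,e,t)}((0,\infty))$. I do not expect a genuine obstacle here, as the claim falls out of the definition of $\gamma_{\max}$; the one point requiring care is the \emph{strictness} of both inequalities---the lower bound $\gamma e_{ab}>0$ coming from $\gamma>0$ and $e_{ab}>0$, and the upper bound $\gamma e_{ab}<\pi/2$ coming from the strict inequality $\gamma<\gamma_{\max}$---since $\cos$ touches zero precisely at $\pi/2$, so it is exactly the openness of the interval that keeps us clear of the first zero.
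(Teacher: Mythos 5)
Your proposal is correct and follows essentially the same route as the paper's own (one-line) proof: both arguments observe that for every edge, $0 < \gamma e_{ab} \le \gamma \max_{(i,j)\in E} e_{ij} < \pi/2$ by the definition of $\gamma_{\max}$, so every cosine factor is strictly positive and no zero-crossing can occur. Your version merely spells out the chain of inequalities and the strictness considerations that the paper leaves implicit.
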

\begin{proof}
    For all edges $(i,j)$, $0 < \gamma e_{ij} < \pi/2$, so $\cos(e_{ij} \gamma) > 0$ for all $e_{ij}$.
\end{proof}
\begin{definition} \label{def:policy-critical}
   (Policy Critical) Let $\delta >0$. We define a policy to be critical at $\gamma_0$ when $\gamma_0$ satisfies: 
   \begin{equation}
       \pi((s,d,e,t);(\gamma_0-\delta,\beta)) \neq \pi((s,d,e,t);(\gamma_0 +\delta, \beta))
   \end{equation}
\end{definition}
\begin{proposition} \label{prop:gamma-range}
    (Instability for $\gamma \in [\gamma_{\max}, 2\pi]$) Assume that edge weights are drawn from a continuous distribution (see Assumption \ref{assumption:non-resonance}). Fix a state $(s,d,e,t)$ and let $\gamma^\# \in {\cal Z}_{(s,d,e,t)}$. Then at $\gamma = \gamma^\#$, the relative ordering of two actions' Q-values from $t$ inverts.

    \noindent Let $\delta$ denote a small number, $\delta > 0$. If at $\gamma^\#-\delta$ or at $\gamma^\# +\delta$, either of these two actions has the top-1 Q-value, this inversion is {policy critical} at $\gamma^\#$ (see Definition \ref{def:policy-critical}).  
\end{proposition}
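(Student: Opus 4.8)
The plan is to exploit a structural feature of the analytical Q-value in Equation~\ref{eq:eqc-qval}: the factor $\cos(\gamma e_{ab})$ appears simultaneously in the Q-values of \emph{both} candidate actions $a$ and $b$. Concretely, since the product defining $Q((s,d,e,t),a;\cdot)$ runs over all $k\neq t,a$, it contains the term $\cos(\gamma e_{ab})$ (take $k=b$, which is legal as $b\neq t,a$); symmetrically, $Q((s,d,e,t),b;\cdot)$ contains $\cos(\gamma e_{ba})=\cos(\gamma e_{ab})$ (take $k=a$), using $e_{ab}=e_{ba}$. I would therefore factor
\begin{equation}
    Q((s,d,e,t),a;\cdot) = \cos(\gamma e_{ab})\,R_a(\gamma), \qquad Q((s,d,e,t),b;\cdot) = \cos(\gamma e_{ab})\,R_b(\gamma),
\end{equation}
where $R_a,R_b$ collect the remaining smooth factors. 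Crucially, for any third action $c\neq a,b$ the product defining $Q((s,d,e,t),c;\cdot)$ does \emph{not} contain $\cos(\gamma e_{ab})$, so $Q(\cdot,c)$ is smooth and, by Assumption~\ref{assumption:non-resonance}, generically nonzero near $\gamma^\#$.

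First I would establish that $\gamma^\#$ is a \emph{simple} zero of $\gamma\mapsto\cos(\gamma e_{ab})$: since $\cos(\gamma^\# e_{ab})=0$ forces $\sin(\gamma^\# e_{ab})=\pm1$, the derivative $-e_{ab}\sin(\gamma^\# e_{ab})$ is nonzero, so $\cos(\gamma e_{ab})$ strictly changes sign across $\gamma^\#$. Writing the gap between the two Q-values as
\begin{equation}
    Q(\cdot,a) - Q(\cdot,b) = \cos(\gamma e_{ab})\,\bigl(R_a(\gamma) - R_b(\gamma)\bigr),
\end{equation}
I would then argue that for $\delta$ small enough the factor $R_a-R_b$ keeps a constant nonzero sign by continuity, while $\cos(\gamma e_{ab})$ flips sign; hence $Q(\cdot,a)-Q(\cdot,b)$ flips sign across $\gamma^\#$. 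This is exactly the claimed inversion of the relative ordering of the two actions, proving the first assertion.

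For the policy-critical claim, suppose without loss of generality that $a$ attains the top-1 Q-value at $\gamma^\#-\delta$, i.e.\ $Q(\cdot,a)\geq Q(\cdot,c)$ for every unexplored $c$, and in particular $Q(\cdot,a)\geq Q(\cdot,b)$. By the inversion just established, at $\gamma^\#+\delta$ we have $Q(\cdot,b)>Q(\cdot,a)$, so $a$ can no longer be the arg-max; whatever action attains the maximum at $\gamma^\#+\delta$, it is not $a$. Therefore $\pi((s,d,e,t);(\gamma^\#-\delta,\beta))=a\neq\pi((s,d,e,t);(\gamma^\#+\delta,\beta))$, which is precisely policy criticality in the sense of Definition~\ref{def:policy-critical}. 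The remaining cases (top-1 attained by $b$, or attained on the $\gamma^\#+\delta$ side) follow by the same argument after relabeling.

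The main obstacle I anticipate is the non-degeneracy bookkeeping rather than the core sign-flip argument: I must rule out that some other factor of $R_a$ or $R_b$ vanishes at $\gamma^\#$, that $R_a(\gamma^\#)=R_b(\gamma^\#)$ (which would make the inversion non-strict), or that a second cosine $\cos(\gamma e_{a'b'})$ vanishes at the same point. Each of these is a measure-zero coincidence excluded with high probability by Assumption~\ref{assumption:non-resonance} (distinctness of the edge weights and of their induced cosine zeros), which also guarantees that $\gamma^\#$ is isolated, so that $\delta$ can be chosen small enough that no other Q-value changes sign on $(\gamma^\#-\delta,\gamma^\#+\delta)$. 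I would state these genericity conditions explicitly and invoke the assumption to discharge them.
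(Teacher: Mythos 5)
Your proposal is correct and takes essentially the same route as the paper's proof: both hinge on the observation that $\cos(\gamma e_{ab})$ is a common factor of $Q(\cdot,a)$ and $Q(\cdot,b)$, flips sign at $\gamma^\#$ while all remaining factors stay (approximately) fixed for small $\delta$, so the relative ordering of these two Q-values inverts, and any top-1 status of $a$ or $b$ then forces a policy change. Your factorization $Q(\cdot,a)-Q(\cdot,b)=\cos(\gamma e_{ab})\bigl(R_a(\gamma)-R_b(\gamma)\bigr)$ is simply a cleaner, more rigorous rendering of the paper's approximate-ratio argument, and your explicit genericity bookkeeping (isolated zeros, $R_a(\gamma^\#)\neq R_b(\gamma^\#)$) makes precise what the paper leaves implicit under Assumption \ref{assumption:non-resonance}.
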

\noindent \textit{Proof.} See Supplementary Information \bref{appendix:thm-gamma-range-proof}{A}.

\noindent \textit{Remark. } If $\{e_{ab}\}$ are drawn from a continuous distribution the cosine zeros are dispersed throughout $[\gamma_{\max}, 2\pi]$, it is difficult to provide a closed-form count of policy-critical occurrences. Instead, we report an empirical Policy-Critical rate (the fraction of grid points where the greedy action changes). When $\delta = 0.01$, this Policy-Critical rate rises from 0.05\% in the safe $\gamma$ region to about 4\% beyond the safe $\gamma$ region (see Figure \ref{fig:prop-gamma-range-flip-rate}) - indicating frequent policy changes in this regime. This resulting brittleness motivates our design choice to restrict the search for SIGS to $\gamma\in (0, \gamma_{\max})$ in Section \ref{sect:size-invariant-method}.

To summarize, Theorem \ref{thm:good-beta} suggest that a good RL policy can only be found when $\sin(\pi\beta) < 0$. At Depth 1, the $\sin(\pi\beta)$ term merely acts as a multiplier of the Q-value, and does not affect the policy. Theorem \ref{thm:gamma-greedy} show that under the condition that $\sin(\pi\beta) <0$ and $0 < \gamma e_{ij} < \pi/2$, the value of $\gamma$ controls the greediness of the agent's policy. Theorem \ref{thm:good-beta} and \ref{thm:gamma-greedy} operates under the assumption that $0 < \gamma e_{ij} < \pi/2$ for all nodes $i,j$. In Proposition \ref{prop:gamma-range}, we discuss why training in the regime where $\gamma \in [\gamma_{\max}, 2\pi]$ gives rise to brittle policies, where optimization is unstable as individual cosine terms approach zero or turn negative. This gives rise to policies where the policy function to output different actions despite small perturbations in the value of $\gamma$.

\section{Size-Invariant Grid Search} \label{sect:size-invariant-method}
Performing quantum simulations on classical computers is challenging due to exponential time and memory required, and researchers in QRL for NCO have been limited to small problem sizes (TSP $\leq$ 20 nodes \cite{Skolik2022EquivariantQC}, $\leq$ 15 QUBO decision variables \cite{Kruse_2024_HEA}). To overcome this, we develop a simple \textit{Size-Invariant Grid Search} (SIGS) that restricts the search space of $\gamma$ and $\beta$ and employs a grid-search strategy. This approach exploits the size-invariant properties established in Sections \ref{sect:size-invariant-theory} and enables us to scale simulations well beyond tractable limits of reinforcement learning or quantum circuit based training.
We want to emphasize that, since the nature of our method is a grid search, and the optimal region of the search space is restricted by the size-invariant properties, it inherently avoids the \emph{barren plateaus} phenomenon that is well-known in variational quantum algorithms~\cite{bp-qnn,adjoint-bp,lie-bp}.

Simulation Configurations (Dataset generation procedure, QRL Configurations and compute resources) are outlined in Supplementary Information \bref{sect:experimental-config}{B}. We first formalize the training methodology and simulation configurations for SIGS \ref{sect:size-invariant-method-train-method}). We then (i) compare the performance of SIGS against the RL-based training procedure from Supplementary Information \bref{sect:experimental-config-qrl}{B.2} (Section \ref{sect:size-invariant-method-exp-comp-rl}), and (ii) apply it to study Depth-1 EQC performance on TSP instances of up to 350 nodes (Section \ref{sect:size-invariant-method-exp-scale-tsp350}).

\subsection{SIGS Training Methodology} \label{sect:size-invariant-method-train-method}
SIGS consists of selecting any $\beta$ satisfying $\sin(\pi \beta)<0$ and fixing $\beta$ as a constant (see Theorem \ref{thm:good-beta}) (we use $\beta = 1.01$ for all simulations). We then evaluate the model across 16 candidate $\gamma$ values $\gamma \in [0.1, 1.7)$ (see Theorem \ref{thm:gamma-greedy}, Proposition \ref{prop:gamma-range}), with step-size $\Delta\gamma = 0.1$. Note that no training dataset or gradient-based optimization is required - only a validation set to select $\gamma$ and a test set to obtain an unbiased estimate of performance after selection of $\gamma$.

 When we used a finer grid for $\gamma$ ($\Delta\gamma = 0.0001$), we found the difference in mean optimality gap between a $\Delta\gamma =0.1$ gridsearch and $\Delta\gamma = 0.0001$ gridsearch to be less than $0.02$ (1.65\% relative error) across TSP-10, TSP-100 and TSP-350. Note that this discretization error can be reduced by optionally doing a coarse-to-fine gridsearch (by first finding the best $\gamma$ using step-size $\Delta\gamma =0.1$, and then searching in the domain $\gamma \in [\gamma-\Delta\gamma, \gamma +\Delta\gamma]$ using a finer grid of $\gamma$. We leave this as future work. See Supplementary Information \bref{appendix:size-invariant-method-discretization-error}{F} for more information.

\subsection{Simulation 1: Comparison with RL} \label{sect:size-invariant-method-exp-comp-rl}
\begin{figure}[t]
    \centering
    \begin{subfigure}[b]{0.48\linewidth}
        \centering
        \includegraphics[width=\linewidth]{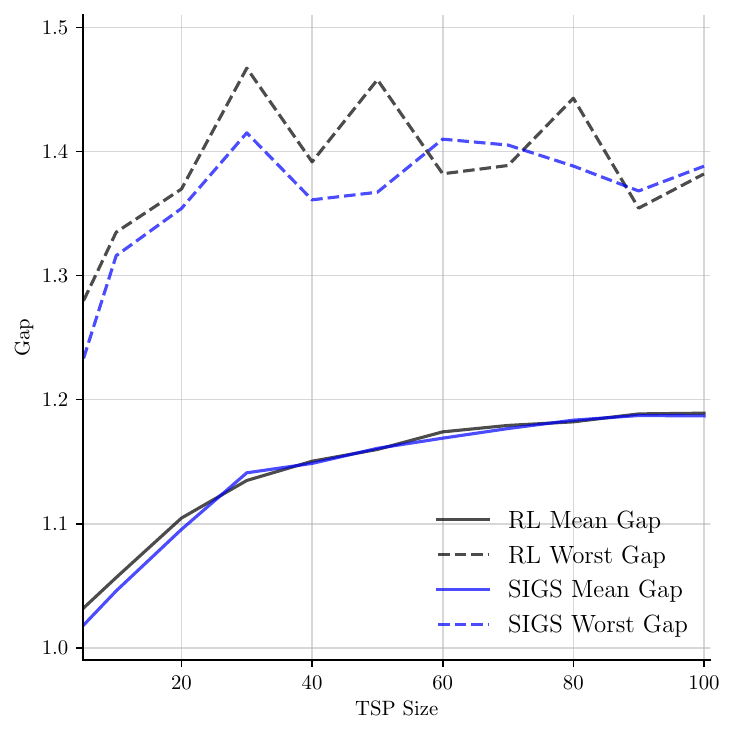}
        \caption{Mean and Worst Test Gap}
        \label{fig:rl-vs-brute-mean-gap}
    \end{subfigure}
    \begin{subfigure}[b]{0.48\linewidth}
        \centering
        \includegraphics[width=\linewidth]{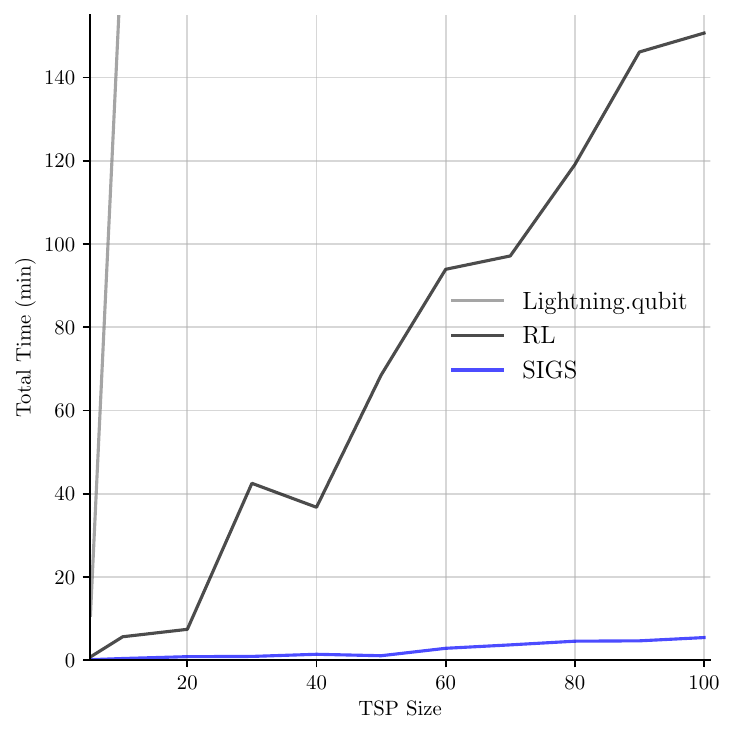}
        \caption{Total Time (minutes) used for training and testing}
        \label{fig:rl-vs-brute-wall-time}
    \end{subfigure}
    \caption{Comparison of Mean Gaps, Worst Gaps on the Test Set and Total Training Time between RL and SIGS. Details of \textbf{SIGS} is discussed in Section \ref{sect:size-invariant-method-train-method}. The methodology for "RL" is discussed in Supplementary Information \bref{sect:experimental-config-qrl}{B.2}. Tabulated results can be found in Supplementary Information \bref{appendix:size-invariant-tm-full-comp}{D}.}
    \label{fig:rl-vs-SIGS}
\end{figure}
We compare the performance of SIGS with RL-based training at Depth 1 (Figure \ref{fig:rl-vs-SIGS}). Across 11 TSP sizes (5-100), the mean difference in test gap is only $\bar \Delta = 0.005$ with a maximum of $\Delta_{\max} = 0.014$, demonstrating that SIGS achieves performance nearly identical to RL. Crucially, SIGS reduces training and evaluation time significantly on larger TSP sizes - for example, from 151 minutes (RL) to under 6 minutes (SIGS) on TSP-100, a 96.4\% reduction. This efficiency makes SIGS a practical tool for analyzing QRL performance on larger problem instances. A more detailed comparison, consisting of a total time breakdown and exact optimality gaps for Figures \ref{fig:rl-vs-brute-mean-gap} and \ref{fig:rl-vs-brute-wall-time}, can be found in Supplementary Information \bref{appendix:size-invariant-tm-full-comp}{D}.

Due to slight implementation and dataset differences, we find that the Mean Optimality Gaps of our RL implementation have small replication errors with that of Skolik et al. (2023) (see Figure 5b of \cite{Skolik2022EquivariantQC}). Specifically in terms of Mean Optimality Gap, TSP-5 has a replication error of 0.012 (reported 1.02, ours 1.032); TSP-10 has a replication error of 0.007 (reported 1.05, ours is 1.057); and TSP-20 has a replication error of 0.015 (reported 1.12, ours 1.105). 

\subsection{Simulation 2: Scale to TSP-350} \label{sect:size-invariant-method-exp-scale-tsp350}
\begin{figure}[t]
    \begin{subfigure}[b]{0.32\textwidth}
        \centering
        \includegraphics[width=\linewidth]{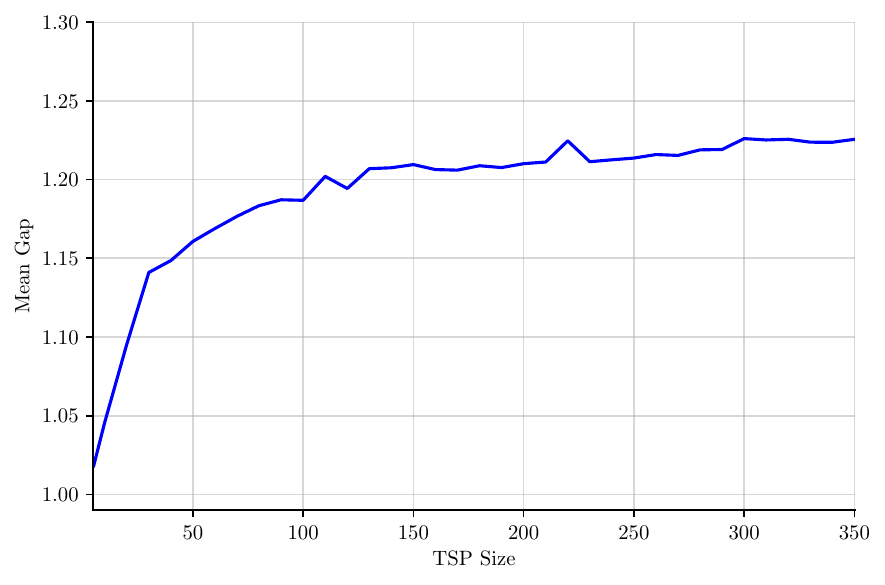}
        \caption{Mean Gap}
        \label{fig:gap-to-tsp350}
    \end{subfigure}
    \begin{subfigure}[b]{0.32\textwidth}
        \centering
        \includegraphics[width=\linewidth]{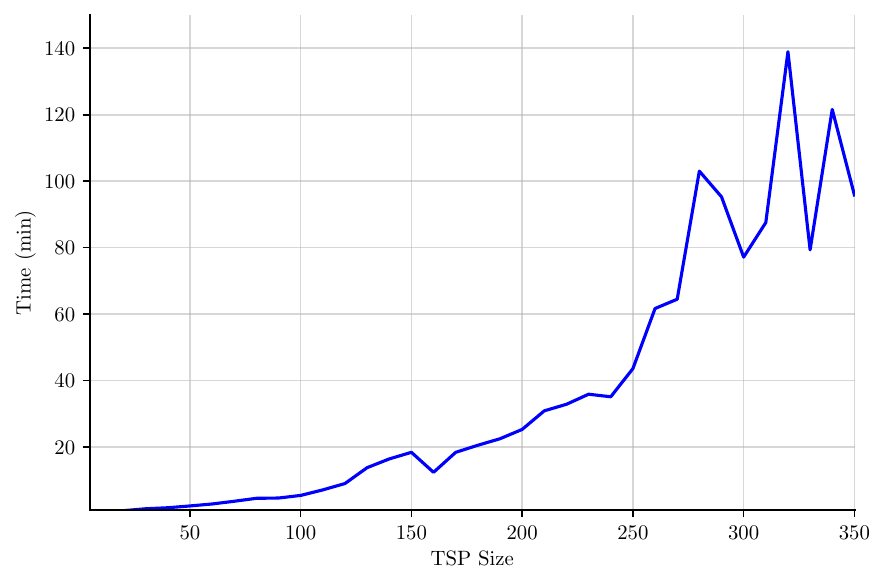}
        \caption{Total Time (min)}
        \label{fig:time-to-tsp350}
    \end{subfigure}
    \begin{subfigure}[b]{0.32\textwidth}
        \centering
        \includegraphics[width=\linewidth]{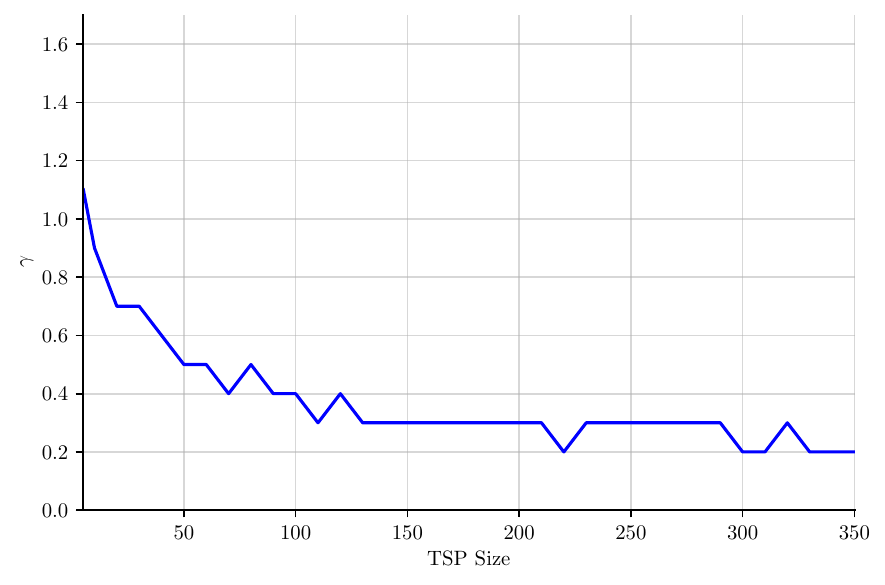}
        \caption{Value of $\gamma^*_{SIGS}$}
        \label{fig:gamma-to-tsp350}
    \end{subfigure}
    \caption{Mean Test Gaps, Total time and value of $\gamma^*_{SIGS}$ when scaling up to TSP-350, using $\Delta\gamma = 0.1$. Note that the time taken consists of the time taken to select parameter $\gamma$ and its evaluation on the test set}
    \label{fig:scale-to-tsp350}
\end{figure}
One of the advantages of the SIGS is that it enables simulations on problem sizes far beyond the limits of QRL-based training. RL optimization becomes intractable even at TSP-20, requiring days of training and large memory resources (see Supplementary Information \bref{appendix:size-invariant-tm-full-comp}{D}). In contrast, SIGS requires only a grid-search over $\gamma$ with a fixed $\beta$.

Figure \ref{fig:gap-to-tsp350} illustrates the mean test gaps obtained by SIGS across TSP instances up to 350 nodes, using $\Delta\gamma=0.1$ as justified in Supplementary Information \bref{appendix:size-invariant-method-discretization-error}{F}. Importantly, this is the first evaluation of a Depth-1 EQC on TSP instances exceeding 20 nodes. These results establish SIGS as a practical tool for benchmarking QRL in regimes that are inaccessible to RL-based training. We significantly reduce runtimes, as shown in Figure \ref{fig:time-to-tsp350}, the longest time in minutes to execute one particular run is 139 minutes (TSP-320). This may be because of other jobs currently executing on the GPU on that particular point in time.

Lastly, Figure \ref{fig:gamma-to-tsp350} highlights an important nuance of the size-invariant properties. Here, let $\gamma^*_{SIGS}$ denote the $\gamma$ selected by the SIGS method for a particular TSP size. Theorems \ref{thm:good-beta} and \ref{thm:gamma-greedy} do not guarantee that the value of $\gamma^*_{SIGS}$ is also constant across TSP size. If that were the case, our training strategy would just be to claim that a particular value of $\gamma$ would be near optimal for all TSP-sizes. Specifically, $\gamma^*_{SIGS}$ is decreasing as TSP size increases. We conjecture that $\gamma^*$ needs to decrease in order to preserve the tradeoff between making locally optimal and locally suboptimal moves, which is further explained in Conjecture \ref{conjecture:opt-gamma-decreases} below.

\begin{definition} \label{def:subgraph}
    (A subgraph $m$ nodes of an $n$ node TSP instance) Let $(d_n,e_n)$ be a TSP instance of size $n$, where $n>3$, and $d_n,e_n$ contain all edge distances between unordered pairs $1\leq i,j\leq n, i\neq j$. Without loss of generality, define a subgraph of size $m$ ($m< n$) as the induced subgraph on the first $m$ vertices of the TSP instance. Formally
    \begin{equation}
        d_m = \{d_{ij} = ||x_i - x_j||_2\ |\ \forall \ 1\leq i < j \leq m\}, \quad e_m \coloneq \tan^{-1}(d_m),
    \end{equation}
    \noindent where $x_i\in\mathbb{R}^2$ denotes the coordinates of node $i$. 
\end{definition}
\begin{lemma}
Let $d_n, e_n$ denote a TSP instance of size $n$ and $d_m,e_m$ denote the induced subgraph of size $m$ (as per Definition \ref{def:subgraph}). Let $s_n \in \{0,\pi\}^{n}, s_m\in \{0,\pi\}^{m}$ denote the respective one-hot encoding of visited nodes, and $t$ be the current node that an agent resides in that is common to both instances (so $t$ satisfies $1\leq t \leq m$). 
For fixed parameters $\gamma,\beta$,
$ Q((s_n,d_n,e_n,t), a;(\gamma,\beta)) \leq Q((s_m,d_m,e_m,t), a;(\gamma,\beta))$.  \label{lem:reducing-qvals}
\end{lemma}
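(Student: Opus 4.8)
The plan is to reduce the inequality to a statement about a single product of cosines and then read off its effect from the sign of the shared prefactor. Using the closed form in Equation~\ref{eq:eqc-qval} for both instances, I would first observe that the two Q-values share every factor except the cosine product: the prefactor $d_{ta}\sin(\pi\beta)\sin(\gamma e_{ta})$ is identical (the edge $(t,a)$ lies among the first $m$ vertices by Definition~\ref{def:subgraph}, and the requirement $s_a=\pi$ that produces the $\sin(\pi\beta)$ factor is imposed in both states), and the cosine factors for $k\le m$ coincide because $e_n$ restricted to the first $m$ nodes equals $e_m$. This step also makes the standing hypotheses explicit: the action $a$ must be an unexplored node of the subgraph, so $1\le a\le m$ with $a\neq t$, while $t\le m$ is given.

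Second, I would factor out the common part (abbreviating the two sides as $Q_n$ and $Q_m$ respectively) to obtain the clean identity
\[
Q((s_n,d_n,e_n,t),a;(\gamma,\beta)) = R \cdot Q((s_m,d_m,e_m,t),a;(\gamma,\beta)), \qquad R \coloneq \prod_{k=m+1}^{n}\cos(\gamma e_{ak}),
\]
where $R$ collects exactly the $n-m$ extra edges from $a$ to the newly added nodes. Since $|\cos(\cdot)|\le 1$, this already yields the unconditional magnitude bound $|Q_n|\le|Q_m|$. Restricting to the SIGS domain $\gamma\in(0,\gamma_{\max})$ of Definition~\ref{def:gamma-range} sharpens it: there $0<\gamma e_{ak}<\pi/2$ for every edge, so each factor lies in $(0,1)$ and hence $R\in(0,1)$ (strictly, since $n>m$ contributes at least one factor and, under Assumption~\ref{assumption:non-resonance}, $e_{ak}>0$ forces that factor below $1$).

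Finally, the signed comparison follows from a one-line sign check on the shared prefactor. Writing $Q_n=R\,Q_m$ with $0<R<1$, the ordering of $Q_n$ and $Q_m$ is governed entirely by the sign of $Q_m$, i.e.\ by the sign of $d_{ta}\sin(\pi\beta)\sin(\gamma e_{ta})$; in the safe regime $d_{ta}>0$ and $\sin(\gamma e_{ta})>0$, so this reduces to the sign of $\sin(\pi\beta)$. I expect this sign bookkeeping to be the main obstacle, precisely because the direction of the inequality flips with $\sin(\pi\beta)$: one must reconcile the stated $\le$ with the good-policy regime $\sin(\pi\beta)<0$ of Theorem~\ref{thm:good-beta} (where all Q-values are negative, so the shrinkage by $R$ moves $Q_n$ toward zero), and state the comparison—or its regime-independent magnitude form $|Q_n|\le|Q_m|$—consistently with the convention under which this lemma feeds into Conjecture~\ref{conjecture:opt-gamma-decreases}.
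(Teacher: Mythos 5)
Your factorization is exactly the paper's own proof: the paper likewise writes
\begin{equation*}
Q((s_n,d_n,e_n,t),a;\cdot) \;=\; Q((s_m,d_m,e_m,t),a;\cdot)\,\prod_{k=m+1}^{n}\cos(\gamma\,{e_n}_{ak})
\end{equation*}
and then concludes the inequality in a single step. Where you differ is that you carry out the sign bookkeeping that the paper skips, and your concern there is well founded. Writing $Q_n = R\,Q_m$ with $R=\prod_{k=m+1}^{n}\cos(\gamma\,{e_n}_{ak})$, the step $R\,Q_m \le Q_m$ is equivalent to $Q_m(1-R)\ge 0$, so with $R\in(0,1)$ it requires $Q_m \ge 0$. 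The paper asserts this final inequality with no hypothesis on the sign of $Q_m$, yet in the very regime the paper identifies as the only one admitting good policies (Theorem \ref{thm:good-beta}, $\sin(\pi\beta)<0$) all Q-values are negative, and then the inequality reverses: $Q_n = R\,Q_m \ge Q_m$. So the lemma as literally stated holds only when $\sin(\pi\beta)>0$ and $\gamma$ lies in the safe region of Definition \ref{def:gamma-range} (so that $d_{ta}\sin(\gamma e_{ta})>0$ and every cosine factor lies in $(0,1)$); your regime-independent magnitude bound $|Q_n|\le|Q_m|$ is the version that survives in all cases, and it is also the statement the paper actually uses downstream, since the discussion of Conjecture \ref{conjecture:opt-gamma-decreases} invokes the lemma as saying ``the magnitude of the Q-value generally decreases.'' In short: same decomposition as the paper, correct, and it repairs a genuine sign gap in the paper's one-line conclusion.
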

\begin{proof}
    Without loss of generality, we let the nodes present in $s_m$. 
    Starting from the form of Equation \ref{eq:eqc-qval}, we have
    \begin{align}
        Q((s_n, d_n, e_n, t),a;\cdot) &= {d_n}_{ta}\sin(\pi\beta)\sin(\gamma {e_n}_{ta})\prod_{k=1,k\neq t}^n\cos(\gamma {e_n}_{ak})\\
        &=[{d_m}_{ta} \sin(\pi\beta) \sin(\gamma {e_m}_{ta})\prod_{k=1,k\neq t}^m\cos(\gamma {e_m}_{ak})]\prod_{k=m+1}^n\cos(\gamma {e_{n}}_{ak})\\
        &= Q((s_m, d_m, e_m,t),a; \cdot)\prod_{k=m+1}^n\cos(\gamma {e_{n}}_{ak})\\
        &\leq Q((s_m, d_m, e_m,t),a;\cdot).
    \end{align}
    This concludes the proof that $Q((s_n,d_n,e_n,t),a;\cdot) \leq Q((s_m,d_m,e_m,t), a;\cdot)$.
\end{proof}
Note that the equality only holds when $e_{ai}=0$ for all nodes $i\in [m+1, n]$, meaning every node introduced is placed exactly on top of node $a$ and its probability is effectively zero.

\begin{conjecture}
\textit{(Size Dependence of $\gamma^*_\text{SIGS}$)}
$\gamma^*_{SIGS}$ is monotonically decreasing as TSP-size increases. \label{conjecture:opt-gamma-decreases}
\end{conjecture}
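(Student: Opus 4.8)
The plan is to reduce the conjecture to a statement about how a single-step ``greediness crossover'' in the policy shifts with instance size $n$, and then argue that the validation-selected $\gamma^*_{SIGS}$ tracks this crossover. Throughout I would work in the safe regime $0 < \gamma e_{ij} < \pi/2$ guaranteed by Lemma \ref{lem:safety-of-gamma-range}, where every Q-value is negative (Theorem \ref{thm:good-beta}) and the greedy action is the one of smallest magnitude. Taking logarithms of the magnitude $|Q(a)|$ from Equation \ref{eq:eqc-qval}, I decompose it into a \emph{greedy term} $g(a) = \log d_{ta} + \log\sin(\gamma e_{ta})$, which depends only on the edge from $t$ to $a$ and is increasing in $e_{ta}$, and a \emph{spread term} $P(a) = \sum_{k\neq t,a}\log\cos(\gamma e_{ak})$, a sum of $n-2$ negative contributions measuring how far $a$ sits from the remaining nodes. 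The policy selects $\arg\min_a\,[g(a)+P(a)]$, so the competition between preferring the nearest neighbour (minimizing $g$) and preferring a remote node (minimizing $P$) is exactly the ``locally optimal vs.\ locally suboptimal'' tradeoff named in the conjecture.

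The key observation is the asymmetric scaling in $n$: $g(a)$ involves a single edge and is $O(1)$, whereas $P(a)$ is a sum of $\Theta(n)$ terms. This is precisely the mechanism isolated by Lemma \ref{lem:reducing-qvals}, where passing from size $m$ to size $n$ multiplies the Q-value by an extra factor $\prod_{k=m+1}^n \cos(\gamma e_{ak})\in(0,1)$; in log-scale this adds $\Theta(n-m)$ negative contributions to $P(a)$. Using the small-angle expansion $\log\cos(\gamma e) = -\tfrac12\gamma^2 e^2 + O(\gamma^4 e^4)$, valid in the safe regime, I would write $P(a)\approx -\tfrac12\gamma^2\sum_{k\neq t,a} e_{ak}^2$, so the spread term and its variation across candidate actions scale like $\gamma^2$ times a sum growing with $n$. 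I would then define the crossover $\gamma_c(n)$ as the smallest $\gamma$ at which a remote competitor overtakes the nearest neighbour, i.e.\ where $g(a_{\mathrm{near}})+P(a_{\mathrm{near}}) = g(a_{\mathrm{far}})+P(a_{\mathrm{far}})$. Since the greedy gap is $O(1)$ and (to leading order) $\gamma$-independent, while the spread gap is $\Theta(\gamma^2 n)$, balancing the two forces $\gamma_c(n)=\Theta(n^{-1/2})$, a decreasing function of $n$.

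To pass from this per-step crossover to $\gamma^*_{SIGS}$, I would invoke Assumption \ref{assumption:non-resonance} and treat the edge weights as i.i.d.\ draws, replacing sums by expectations and controlling fluctuations by concentration. Under the induced-subgraph construction of Definition \ref{def:subgraph}, the continuous distribution makes the nodes exchangeable, so the expected spread term is $(n-2)\,\mathbb{E}[\log\cos(\gamma e)]$ and the expected crossover location becomes a deterministic, decreasing function of $n$. The remaining claim is that the minimizer of the mean optimality gap coincides monotonically with this crossover: the heuristic performs best when it is greedy enough to behave like nearest-neighbour yet retains just enough remote moves to escape nearest-neighbour's failures, which is exactly the regime $\gamma\approx\gamma_c(n)$. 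Assuming the mean gap is unimodal in $\gamma$ with minimizer pinned to a fixed ``greediness level'', $\gamma^*_{SIGS}$ inherits the $\Theta(n^{-1/2})$ decay of $\gamma_c(n)$, matching the empirical curve in Figure \ref{fig:gamma-to-tsp350}.

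I expect the main obstacle to be exactly this last bridge: showing that the \emph{global} rollout metric (mean optimality gap over full tours) is minimized at a $\gamma$ tracking the \emph{local} single-step crossover, and doing so \emph{monotonically} rather than merely decreasing on average. Two technical subtleties feed into it. First, the spread terms $P(a)$ of competing actions are not independent, since they are sums over overlapping sets of shared nodes, so the concentration argument must respect the correlation structure of the induced subgraph. Second, turning ``good policy'' into a crisp, $\gamma$-monotone objective requires either an explicit unimodality argument for the mean gap or a monotone coupling between greediness and tour quality, and both depend on the full MDP dynamics rather than on a single closed-form expression. Because of this, I would expect a fully rigorous monotonicity proof to be out of reach with the present tools---consistent with the statement being posed as a conjecture---and would instead target the asymptotic scaling $\gamma^*_{SIGS}=\Theta(n^{-1/2})$ as the strongest claim the closed-form analysis and Lemma \ref{lem:reducing-qvals} can support.
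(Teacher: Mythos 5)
This statement is a \emph{conjecture} in the paper, and the paper does not prove it; its justification is exactly the mechanism you describe: Lemma \ref{lem:reducing-qvals} together with the decomposition $Q=\sin(\pi\beta)\,A(\gamma)\,B(\gamma)$, where $A(\gamma)=d_{ta}\sin(\gamma e_{ta})$ and $B(\gamma)=\prod_{k\neq t,a}\cos(\gamma e_{ak})$, plus the argument that growing $n$ increases the weight of $B$ and hence shifts the tradeoff between locally optimal and locally suboptimal moves, supported empirically in Supplementary Information \ref{appendix:gamma-constant}. Your $g(a)+P(a)$ is precisely $\log A+\log B$, your asymmetric-scaling observation is the paper's ``increasing the weight of $B(\gamma)$,'' and your identification of the local-to-global bridge (single-step crossover versus the mean-gap minimizer over full rollouts) as the unprovable step is exactly why the paper stops at a conjecture---so your proposal matches the paper's reasoning essentially step for step, at the same level of rigor. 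The one substantive addition, the quantitative claim $\gamma^*_{SIGS}=\Theta(n^{-1/2})$, goes beyond anything the paper asserts and sits uneasily with the paper's own measurements: the selected values ($\gamma^*\approx 0.91,\,0.36,\,0.26$ at $n=10,100,350$; Figure \ref{fig:gamma-to-tsp350}, Table \ref{tab:discretization-error}) decay with an exponent closer to $0.3$ than $0.5$, plausibly because the greedy gap is itself $n$-dependent (the nearest-neighbour distance shrinks like $n^{-1/2}$, inflating that gap logarithmically) and the small-angle expansion of $\log\cos(\gamma e_{ak})$ degrades for the largest edges near the crossover, so that scaling should be presented as a rough heuristic rather than as the provable content of the conjecture.
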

From Theorem \ref{thm:gamma-greedy}, we can interpret $\gamma^*$ as the best possible tradeoff for the agent in terms of making locally optimal moves and locally sub-optimal moves. 

By Lemma \ref{lem:reducing-qvals}, when the number of TSP nodes increases, the magnitude of the Q-value generally decreases. We can view the Q-value as a tradeoff between the two terms $A(\gamma)$ and $B(\gamma)$:
\begin{equation}
    Q((s,d,e,t),a ;(\gamma, \beta)) = \sin(\pi\beta)\underbrace{d_{ta}\sin(\gamma e_{ta})}_{A(\gamma)} \underbrace{\prod_{k=1, k\neq t,a}^{n} \cos(\gamma e_{ak})}_{B(\gamma)}.
\end{equation}
Therefore, when we increase the number of nodes, we are effectively increasing the weight of $B(\gamma)$, thus affecting the best possible tradeoff of making locally optimal and locally suboptimal moves.

We provided a visualization in Supplementary Information \bref{appendix:gamma-constant}{E}, where we show that for the same value of $\gamma, \beta$ and when number of nodes increases, the nearest neighbor selected by the model's policy becomes increasingly non-locally optimal. 

\subsection{Simulation 3: Training beyond a Depth 1 EQC?} \label{sect:discussion}
\begin{table}[t]
\centering
\scriptsize
\caption{Mean test gap and total time (minutes) across TSP sizes (5 to 15) at EQC Depths (1 to 4). Best gaps using RL across Depths 1 to 4 are \underline{underlined}. Best gaps including the comparison with SIGS are in \textbf{bold}.}
\resizebox{\textwidth}{!}{%
\begin{tabular}{c|cc|cc|cc|cc|cc}
\toprule
EQC & \multicolumn{2}{c|}{TSP-5} & \multicolumn{2}{c|}{TSP-8} & 
   \multicolumn{2}{c|}{TSP-10} & \multicolumn{2}{c|}{TSP-13} & 
   \multicolumn{2}{c}{TSP-15} \\
Depth & Gap & Time(m) & Gap & Time(m) & Gap & Time(m) & Gap & Time(m) & Gap & Time(m) \\
\midrule
1 (RL) & \underline{1.032} & 0.296 & \underline{1.048} & 0.469 & 1.056 & 0.861 & \underline{1.069} & 0.755 & \textbf{\underline{1.080}} & 0.951 \\
2 & 1.033 & 18.6 & 1.050 & 91.8 & 1.060 & 303 & 1.072 & 808 & 1.090 & 3170 \\
3 & 1.030 & 50.7 & 1.055 & 251 & \underline{1.053} & 482 & \underline{1.069} & 1350 & 1.083 & 2300 \\
4 & 1.032 & 116 & 1.053 & 205 & 1.065 & 662 & 1.070 & 1190 & 1.089 & 2700 \\
\midrule
1 (SIGS) & \textbf{1.018} & 0.123 & \textbf{1.036} & 0.467 & \textbf{1.046} & 0.401 & \textbf{1.068} & 0.828 & 1.081 & 0.951 \\
\bottomrule
\end{tabular}}
\label{tab:eqc-depths1-to-4}
\end{table}

Table \ref{tab:eqc-depths1-to-4} shows the results of running EQCs at Depth 1 to Depth 4 on small TSP sizes, with training methodology discussed in Supplementary Information \bref{sect:experimental-config-qrl}{B.2}. While in theory, the quality of the agent's tours are supposed to improve with larger depth, we find that increasing depth in most cases only increases the cost of simulation, with a negligible difference in performance. For instance, at TSP 5,8,13 and 15, we find that the EQC at Depths 2 to 4 does not outperform the EQC at Depth 1, despite requiring significantly more time. It is only at TSP-10 where we find a drop in gap of $0.003$ (0.3\% improvement). For TSP 5,8,10 and 13, we find that the SIGS still produces the lowest gap on the test set (see bolded numbers in Table \ref{tab:eqc-depths1-to-4}), where RL using the Depth 1 EQC at TSP-15 slightly outperforms SIGS by a drop in gap of $0.001$. This negligible difference in small TSP sizes was also reflected in the validation performance between Depth 1 and Depth 4 of Skolik et al. (2023) \cite{Skolik2022EquivariantQC}. The mean validation performance of using an EQC at Depth 1 at TSP sizes 5, 10 and 20 were similar to using an EQC at Depth 4 (see Figure 5b and 5d of \cite{Skolik2022EquivariantQC}). We further analyze this phenomenon in Section \ref{sect:size-invariant-exp-deeper-eqcs}, showing that the Q-value against Distance distributions are similar across EQC depths from 1 to 4. 
\section{Discussion} \label{sect:size-invariant-exp}
This section contains empirical verifications of theorems presented in Section \ref{sect:size-invariant-theory}.

\begin{figure}[!t]
    \centering
        \begin{subfigure}[b]{0.47\textwidth}
        \includegraphics[width=\textwidth]{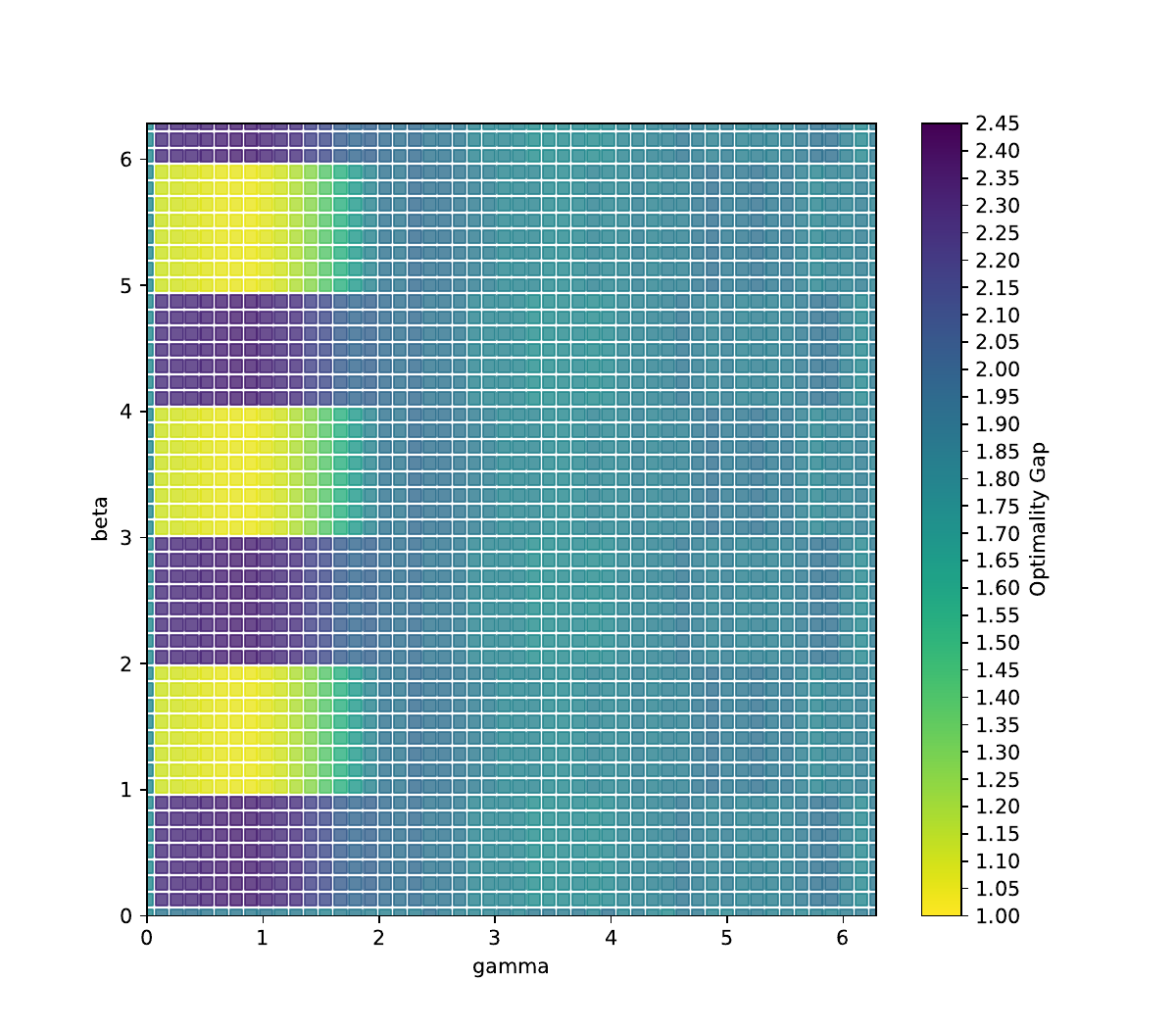}
        \caption{TSP-10}
        \label{fig:grid-search-heatmap-tsp10}
    \end{subfigure}
    \begin{subfigure}[b]{0.47\textwidth}
        \includegraphics[width=\textwidth]{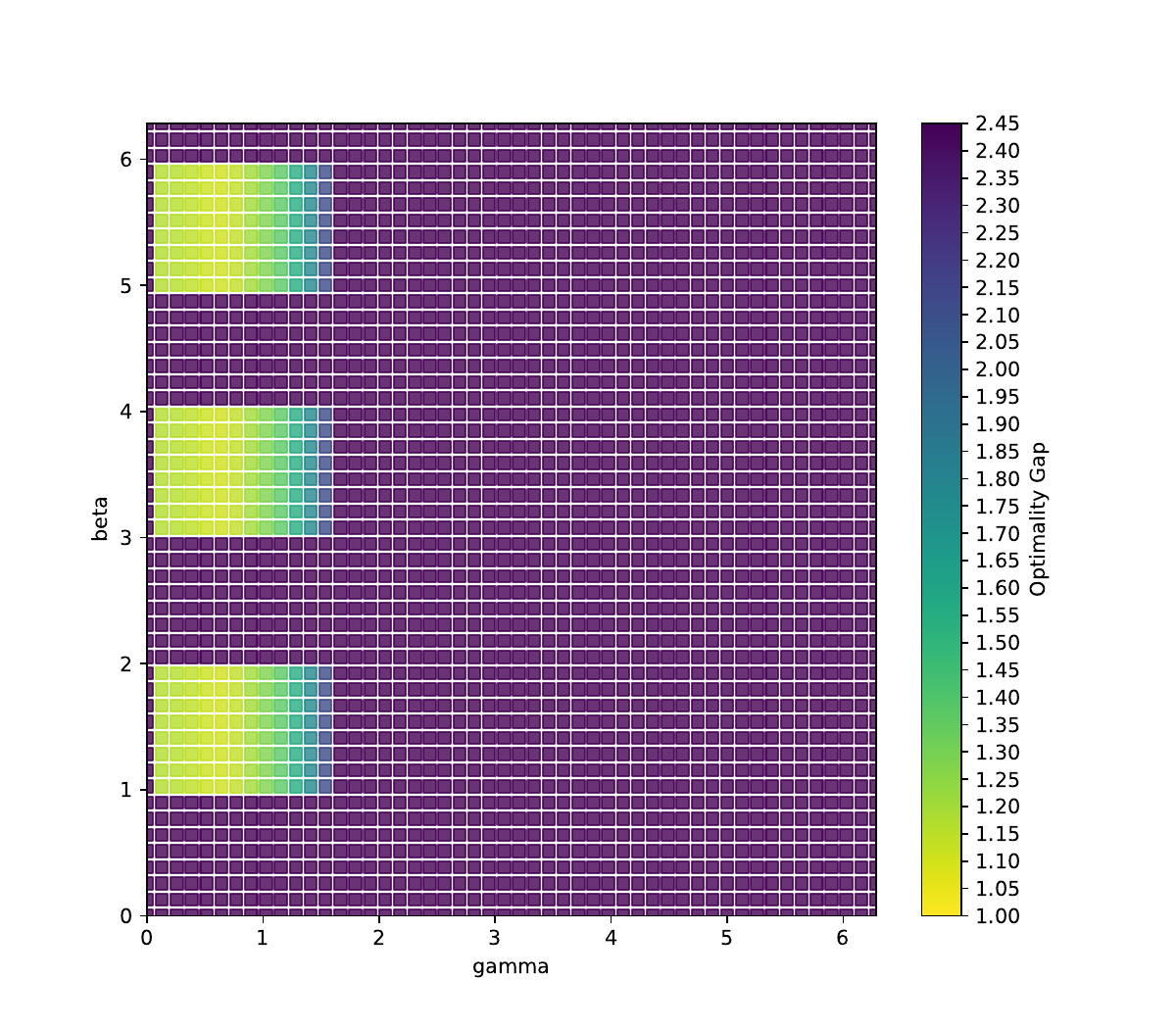}
        \caption{TSP-20}
        \label{fig:grid-search-heatmap-tsp20}
    \end{subfigure}
    \begin{subfigure}[b]{0.45\textwidth}
        \includegraphics[width=\textwidth]{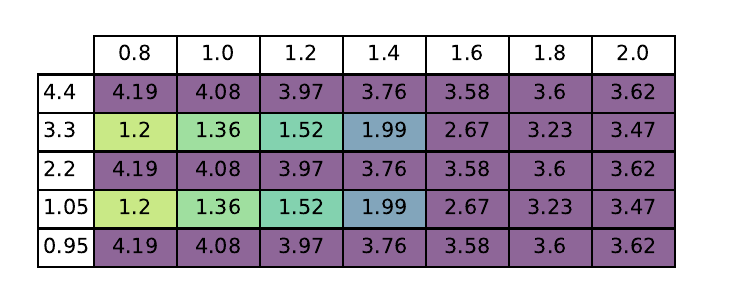}
        \caption{TSP-25}
        \label{fig:grid-search-heatmap-tsp25}
    \end{subfigure}
    \begin{subfigure}[b]{0.45\textwidth}
        \includegraphics[width=\textwidth]{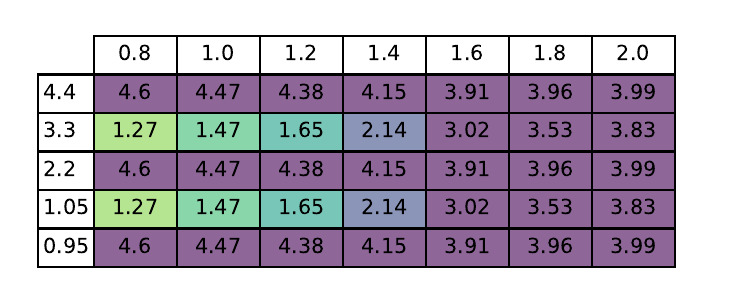}
        \caption{TSP-30}
        \label{fig:grid-search-heatmap-tsp30}
    \end{subfigure}
    \begin{subfigure}[b]{0.45\textwidth}
        \includegraphics[width=\textwidth]{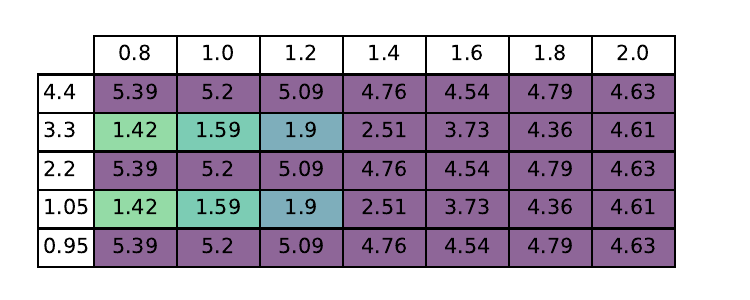}
        \caption{TSP-40}
        \label{fig:grid-search-heatmap-tsp40}
    \end{subfigure}
    \begin{subfigure}[b]{0.45\textwidth}
        \includegraphics[width=\textwidth]{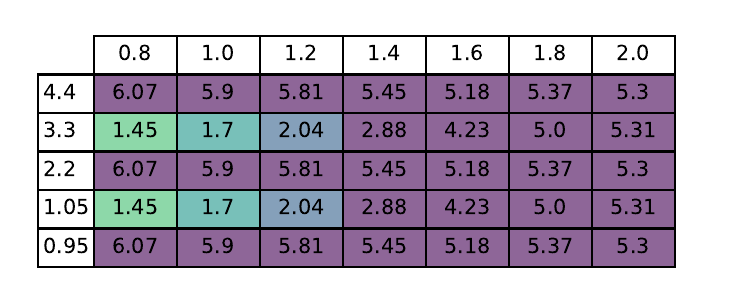}
        \caption{TSP-50}
        \label{fig:grid-search-heatmap-tsp50}
    \end{subfigure}
    \caption{Mean Optimality Gap Heatmap of Depth 1 EQC with respect to $\gamma,\beta$. These landscapes are Mean Gaps (see equation \ref{eq:mean-and-worst-gap}) over 50 instances for TSP-10 (Figure \ref{fig:grid-search-heatmap-tsp10}) and 30 instances for TSP-20 and above (Figure \ref{fig:grid-search-heatmap-tsp20} to \ref{fig:grid-search-heatmap-tsp50}). Smaller Optimality Gaps are color-coded as brighter colors, while larger optimality gaps are encoded with darker colors. TSP 10 (Figure \ref{fig:grid-search-heatmap-tsp10}) and 20 (Figure \ref{fig:grid-search-heatmap-tsp20}) heatmaps span the entire parameter space  $\gamma,\beta \in (0, 2\pi)$. TSP 25, 30, 40, and 50 (Figures \ref{fig:grid-search-heatmap-tsp25}-\ref{fig:grid-search-heatmap-tsp50}) are heatmaps of subsets of the parameter space to verify the size-invariant landscape. $\beta \in \{0.95,1.05, 2.2, 3.3,4.4\}$ is plotted on the vertical axis, while $\gamma \in \{0.8,1.0,\cdots, 2.0\}$ is plotted on the horizontal axis. As these heatmaps was created in the earlier phases of the project, we approximated the optimal tour to be the best of 30 runs of the simulated annealing implementation in \cite{dreo:hal-01341683} for TSP-20 to TSP-50. }
    \label{fig:grid-search-heatmap}
\end{figure}
\begin{figure}[t]
    \centering
    \begin{subfigure}[b]{0.48\textwidth}
        \includegraphics[width=\textwidth]{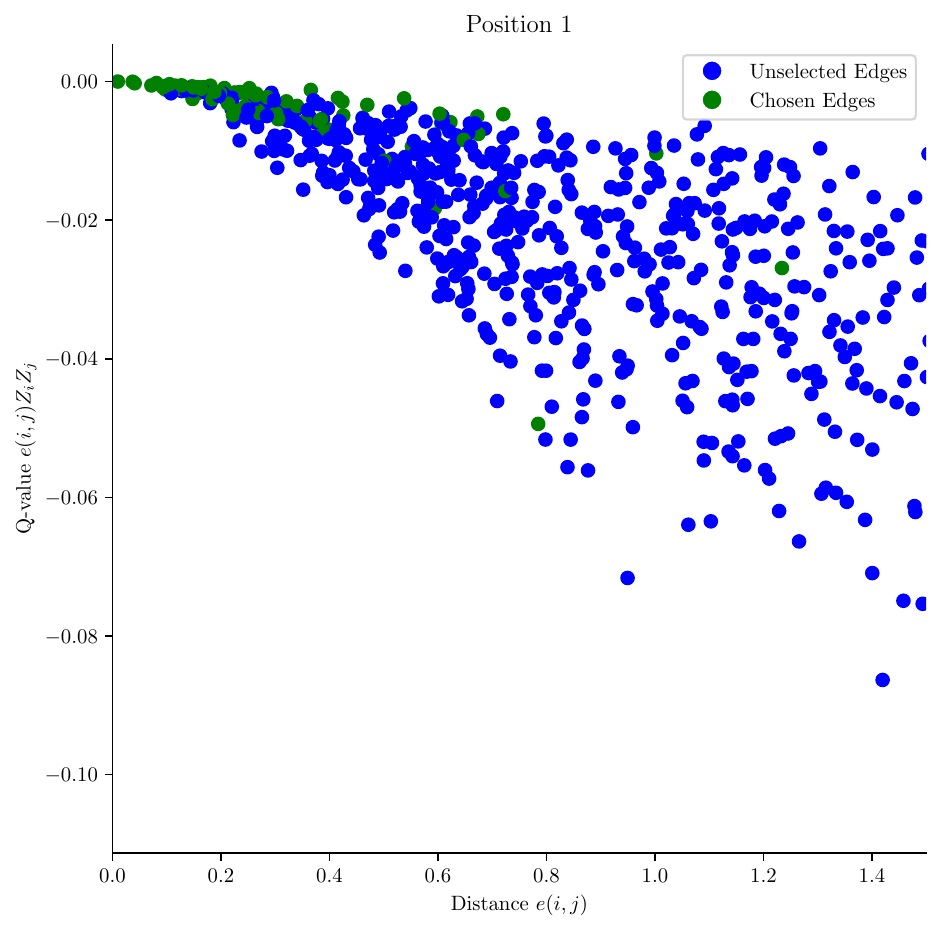}
        \caption{$\sin(\pi\beta)<0$}
        \label{fig:qval-beta-leq0}
    \end{subfigure}
    \begin{subfigure}[b]{0.48\textwidth}
        \includegraphics[width=\textwidth]{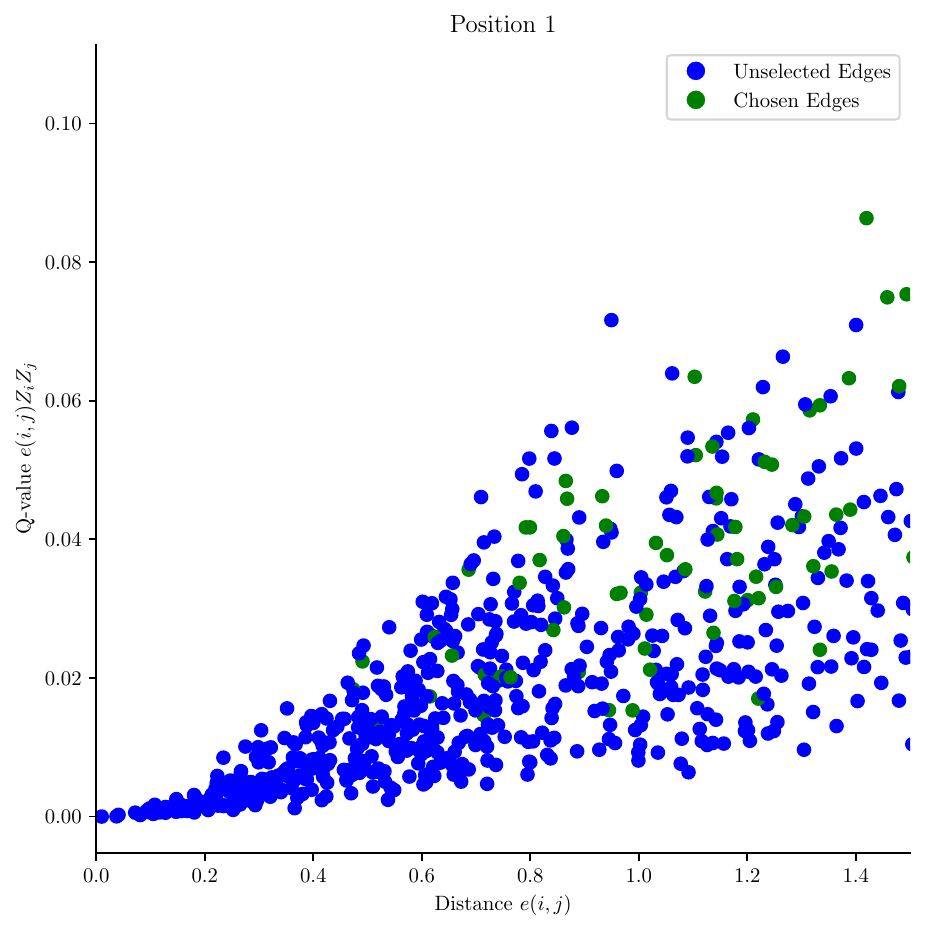}
        \caption{$\sin(\pi\beta)>0$}
        \label{fig:qval-beta-geq0}
    \end{subfigure}
    \caption{Q-value against Distance plots at Position 1 of the Tour (TSP-10). Based on 100 TSP-10 instances with uniformly random node placements in a unit square. Note that distances are scaled so the maximum length of an edge is $\sqrt 2$. We set $\gamma = 1$. Figures \ref{fig:qval-beta-leq0} and \ref{fig:nn-beta-leq0} is when $\sin(\pi\beta) < 0$ ($\beta = 1.1$), and Figures \ref{fig:qval-beta-geq0} and \ref{fig:nn-beta-geq0} is when $\sin(\pi\beta) > 0$ ($\beta = 0.9$). Since all TSP tours are initialized at node 1, Figure \ref{fig:qval-beta-leq0} and \ref{fig:qval-beta-geq0} show the distribution of Q-values when an agent visits an unexplored node $a$ against is Euclidean Distance from node 1. Green datapoints represent selected edges by the agent, while blue datapoints represent unselected edges. }
    \label{fig:beta-comp-qval}
\end{figure}
\begin{figure}[t]
    \centering
    \begin{subfigure}[b]{0.48\textwidth}
        \includegraphics[width=\textwidth]{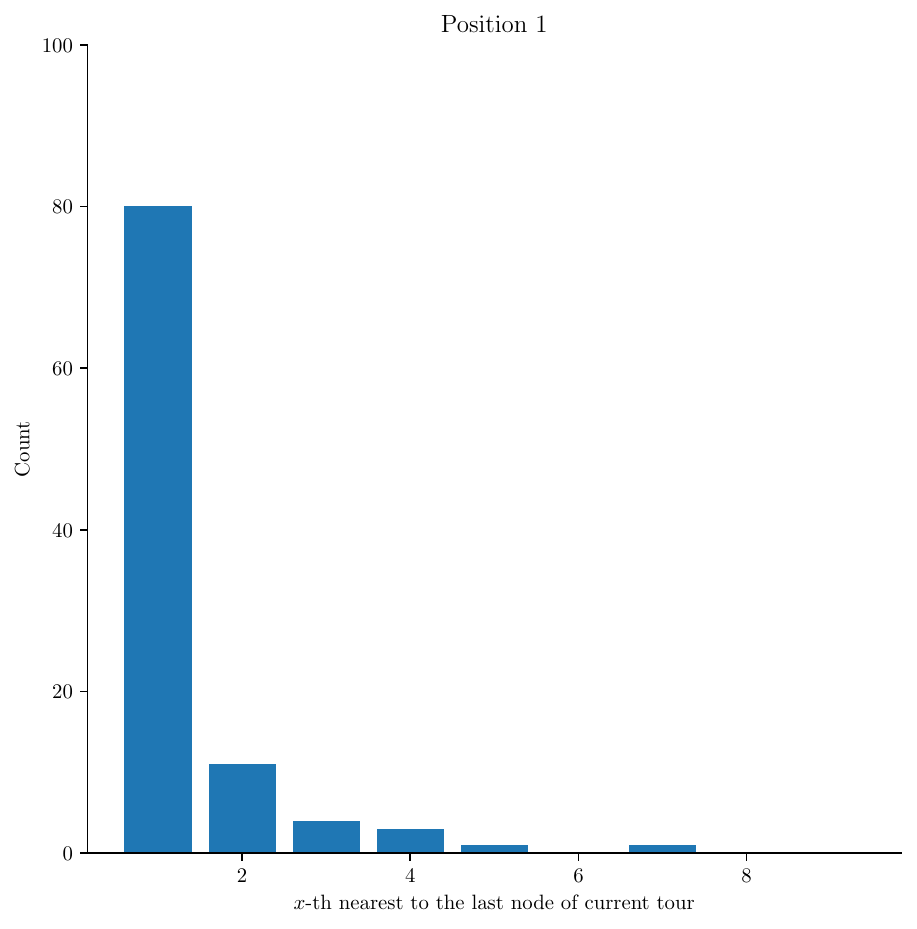}
        \caption{$\beta = 1.1,\sin(\pi\beta)<0$}
        \label{fig:nn-beta-leq0}
    \end{subfigure}
    \begin{subfigure}[b]{0.48\textwidth}
        \includegraphics[width=\textwidth]{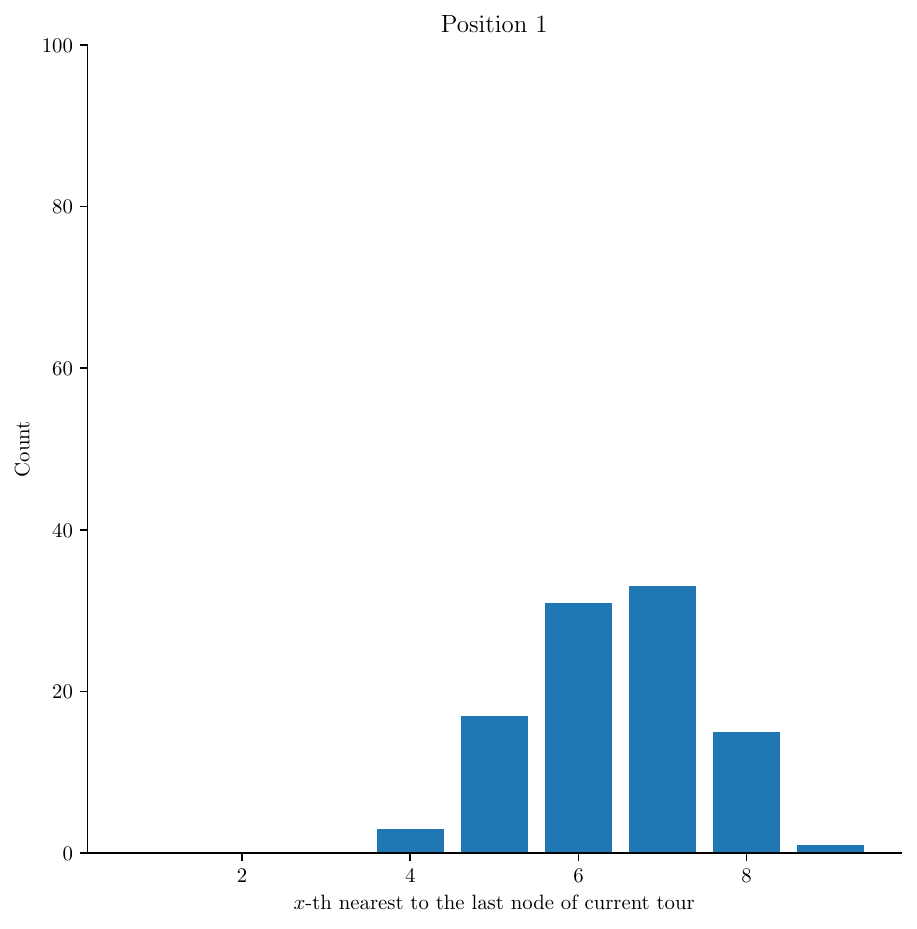}
        \caption{$\beta = 0.9, \sin(\pi\beta)>0$}
        \label{fig:nn-beta-geq0}
    \end{subfigure}
    \caption{Frequency of selecting the $x$-th Nearest Neighbor at Position 1 of the Tour (TSP-10). Based on 100 TSP-10 instances with uniformly random node placements in a unit square. Distances are scaled so the maximum length of an edge is $\sqrt 2$. We set $\gamma = 1$. Using the Figure \ref{fig:nn-beta-leq0} and \ref{fig:nn-beta-geq0} contains more local level information which stores the frequency of the number of times a node that is the $x$-th nearest neighbor to node 1 (in terms of Euclidean Distance) has the highest Q-value among the agent's options, and is therefore selected as part of its policy.}
    \label{fig:beta-comp-nn}
\end{figure}
\begin{figure}[!t]
    \centering
    \begin{subfigure}[b]{0.24\textwidth}
        \includegraphics[width=\textwidth]{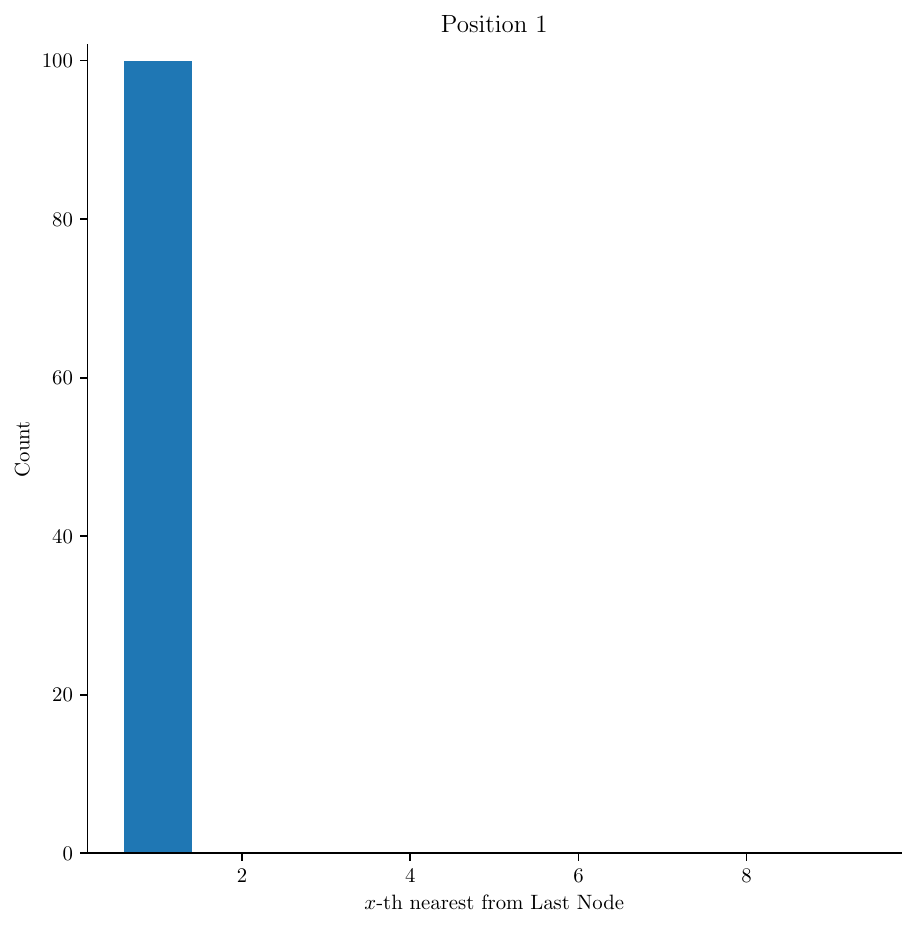}
        \caption{$\gamma = 0.001$}
        \label{fig:gamma-vis-g0.001}
    \end{subfigure}
    \begin{subfigure}[b]{0.24\textwidth}
        \includegraphics[width=\textwidth]{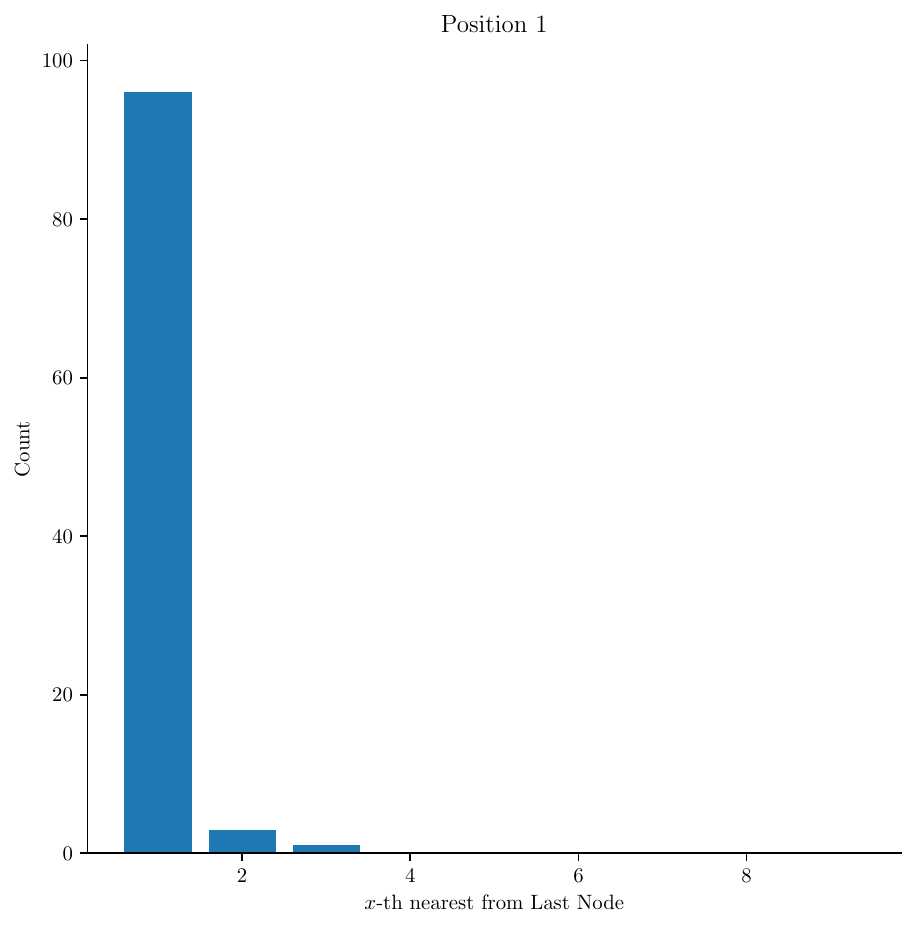}
        \caption{$\gamma = 0.5$}
        \label{fig:gamma-vis-g0.5}
    \end{subfigure}
    \begin{subfigure}[b]{0.24\textwidth}
        \includegraphics[width=\textwidth]{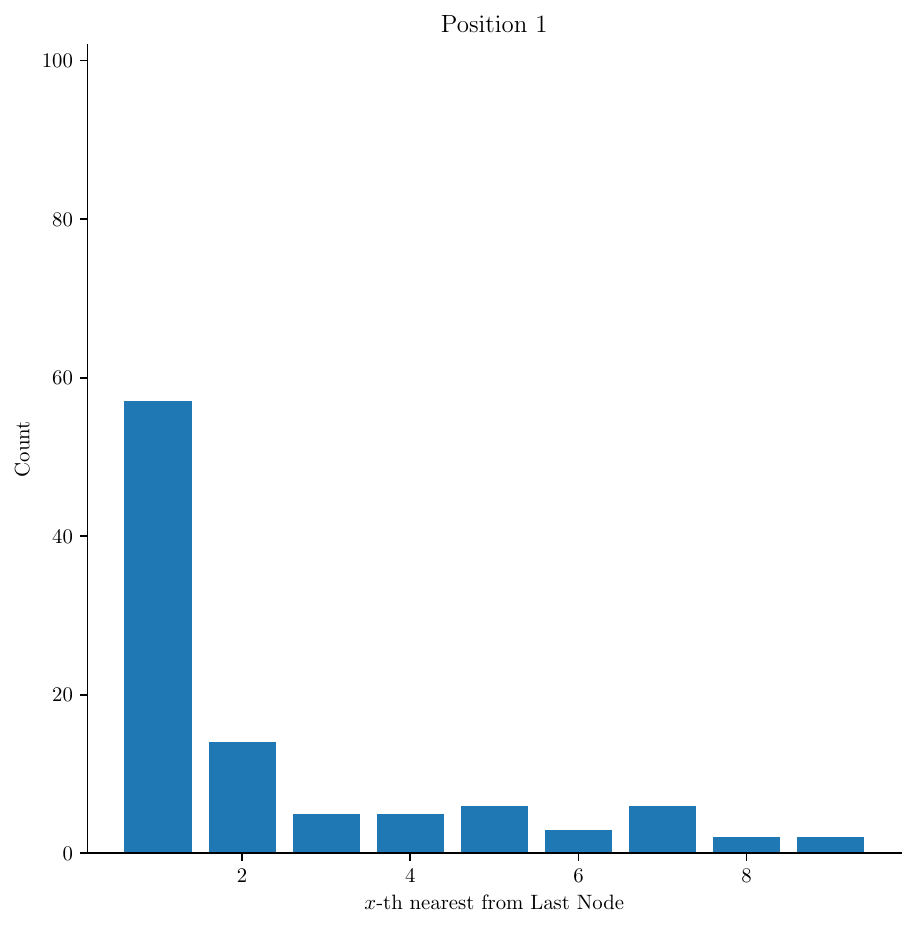}
        \caption{$\gamma = 1.3$}
        \label{fig:gamma-vis-g1.3}
    \end{subfigure}
    \begin{subfigure}[b]{0.24\textwidth}
        \includegraphics[width=\textwidth]{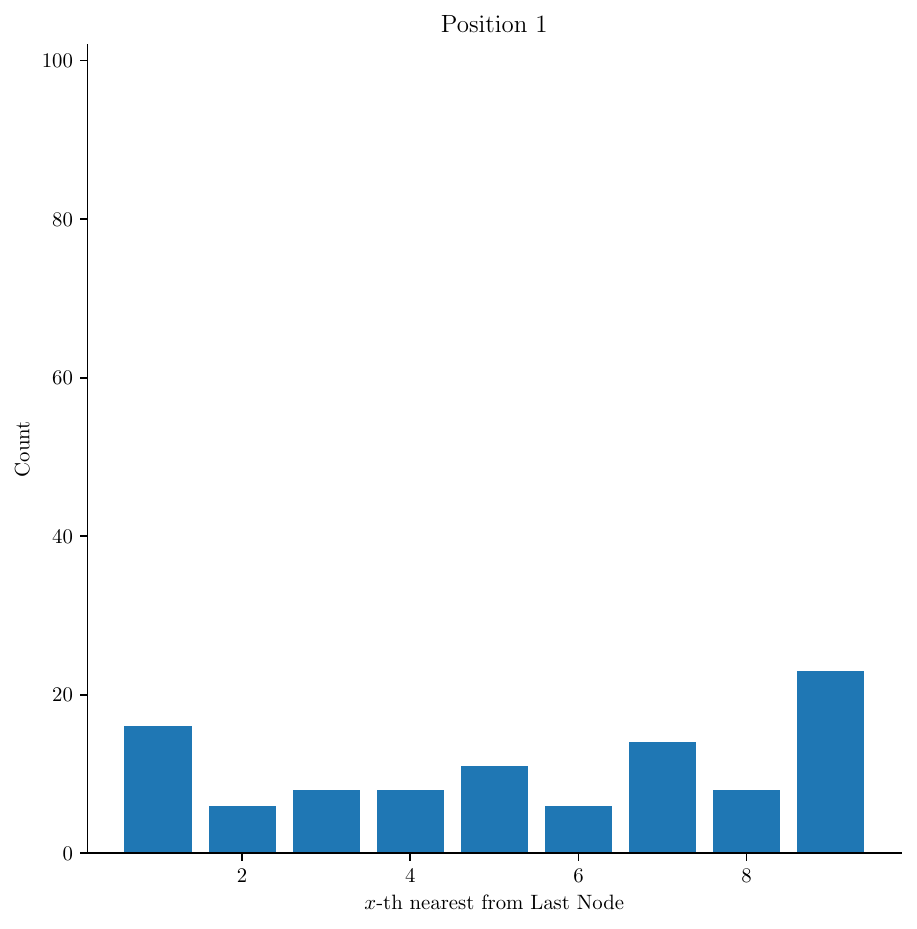}
        \caption{$\gamma = 1.63$}
        \label{fig:gamma-vis-g1.63}
    \end{subfigure}
    \caption{Effect of varying $\gamma$ on the Frequency of selecting the $x$-th Nearest neighbor at Tour Position 1 (TSP-10). Refer to the caption of \ref{fig:gamma-vis-qval} for more information.}
    \label{fig:gamma-ablation}
\end{figure}
\begin{figure}[!t]
    \centering
    \begin{subfigure}[b]{0.24\textwidth}
        \includegraphics[width=\textwidth]{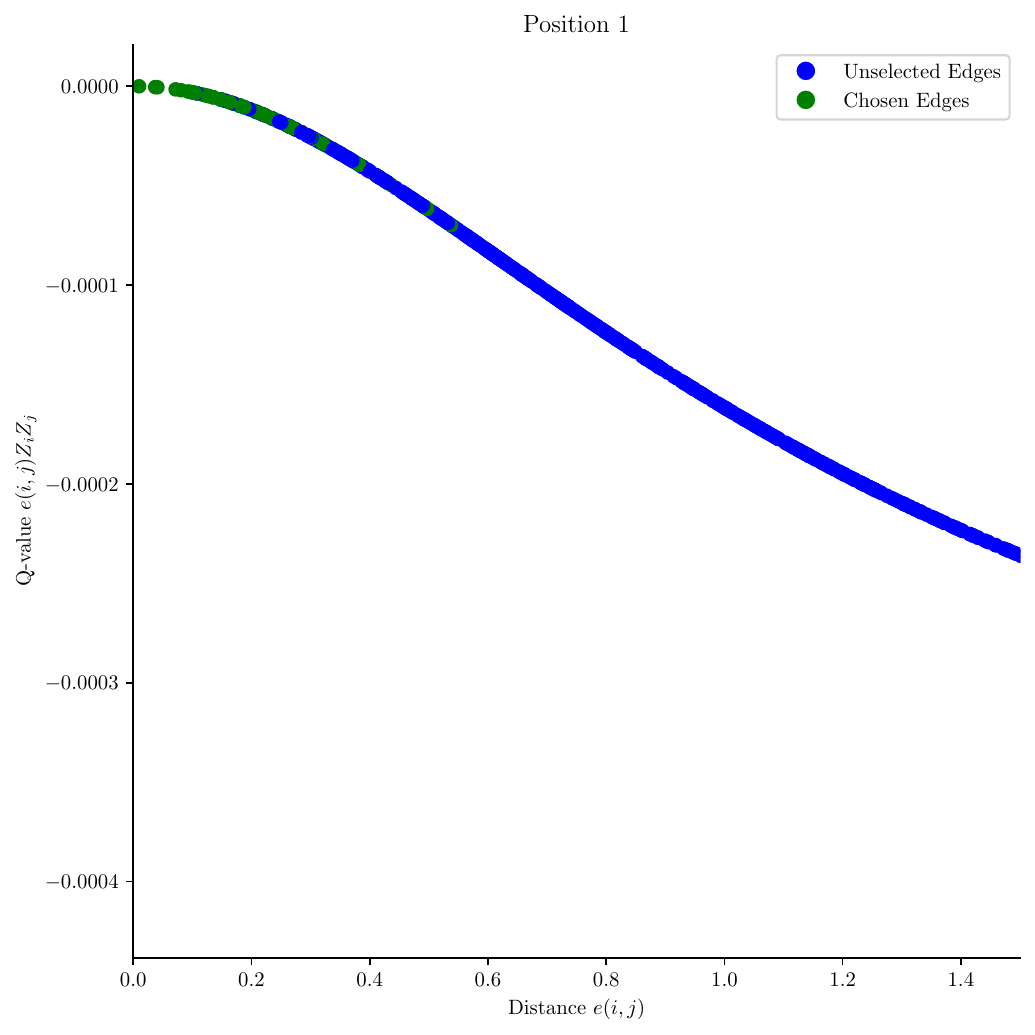}
        \caption{$\gamma = 0.001$}
        \label{fig:gamma-vis-qval-g0.001}
    \end{subfigure}
    \begin{subfigure}[b]{0.24\textwidth}
        \includegraphics[width=\textwidth]{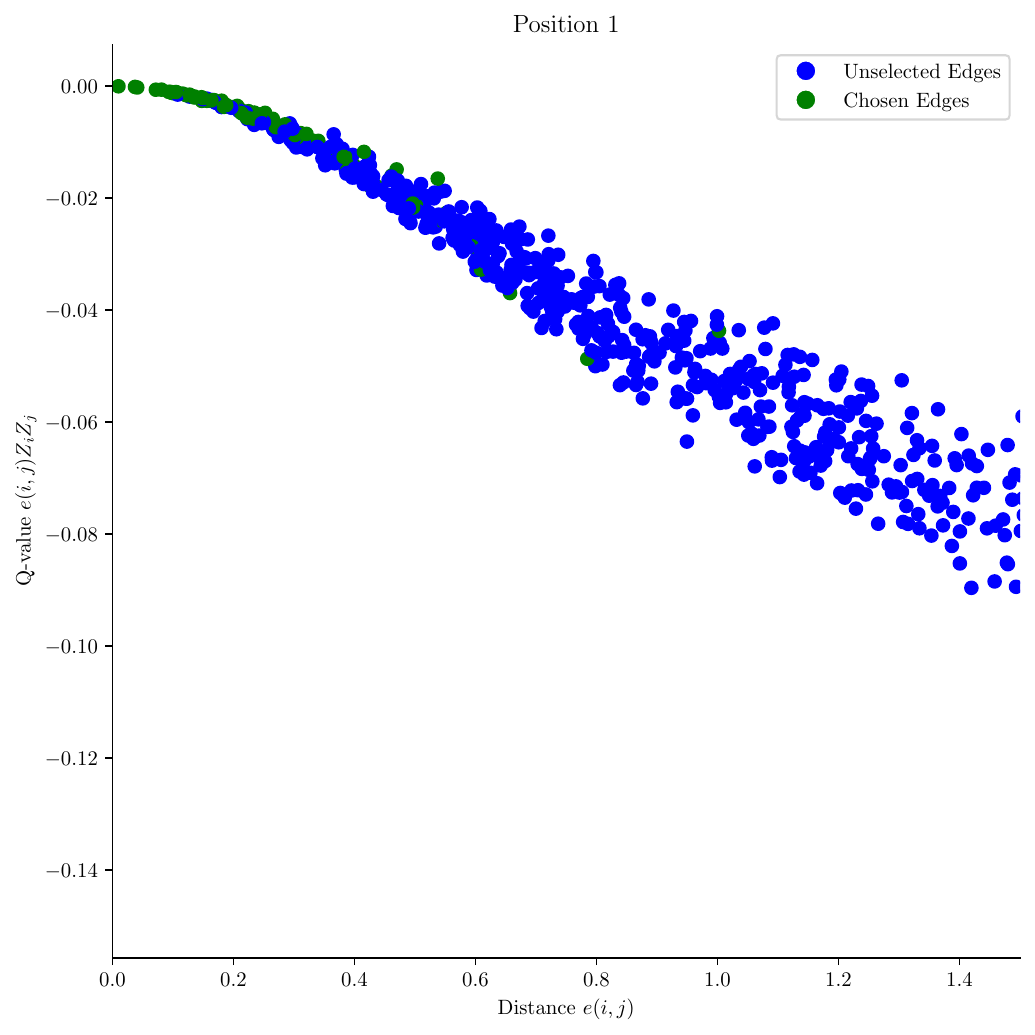}
        \caption{$\gamma = 0.5$}
        \label{fig:gamma-vis-qval-g0.5}
    \end{subfigure}
    \begin{subfigure}[b]{0.24\textwidth}
        \includegraphics[width=\textwidth]{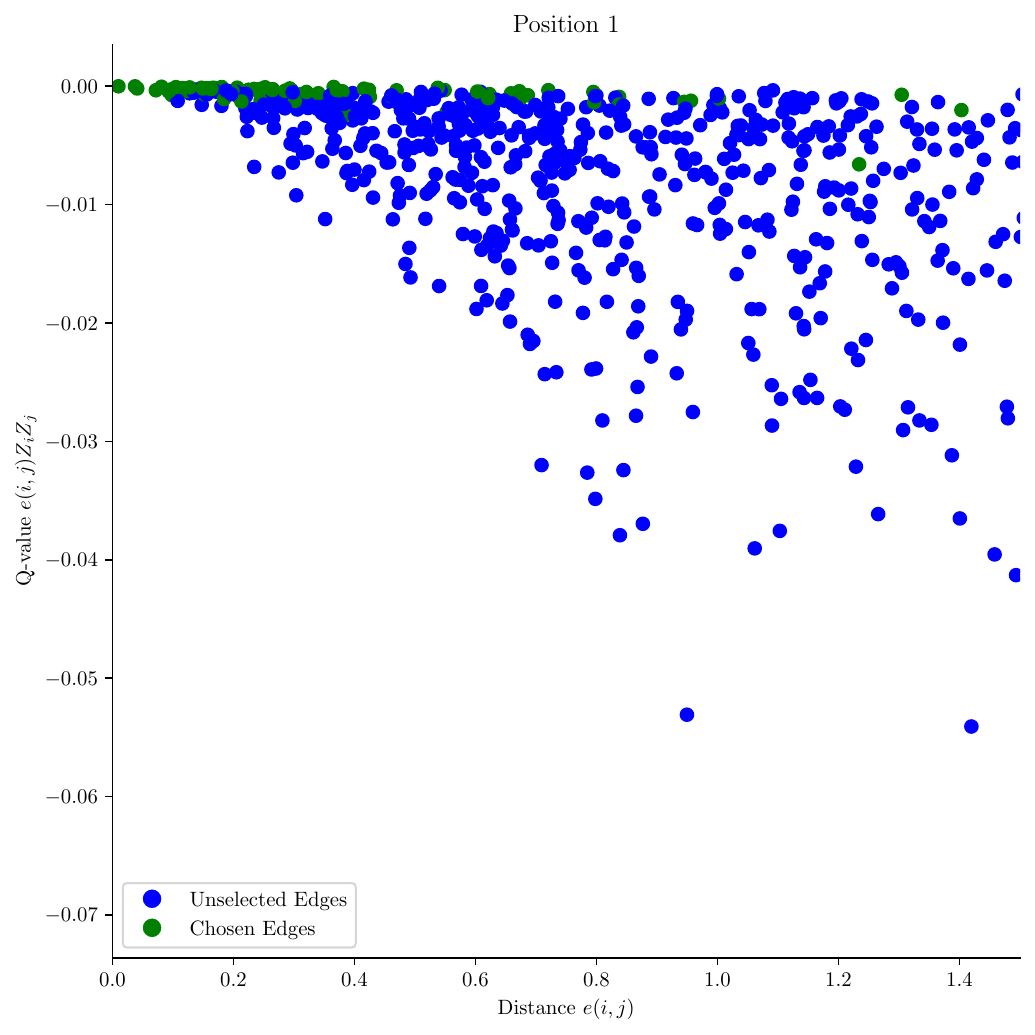}
        \caption{$\gamma = 1.3$}
        \label{fig:gamma-vis-qval-g1.3}
    \end{subfigure}
    \begin{subfigure}[b]{0.24\textwidth}
        \includegraphics[width=\textwidth]{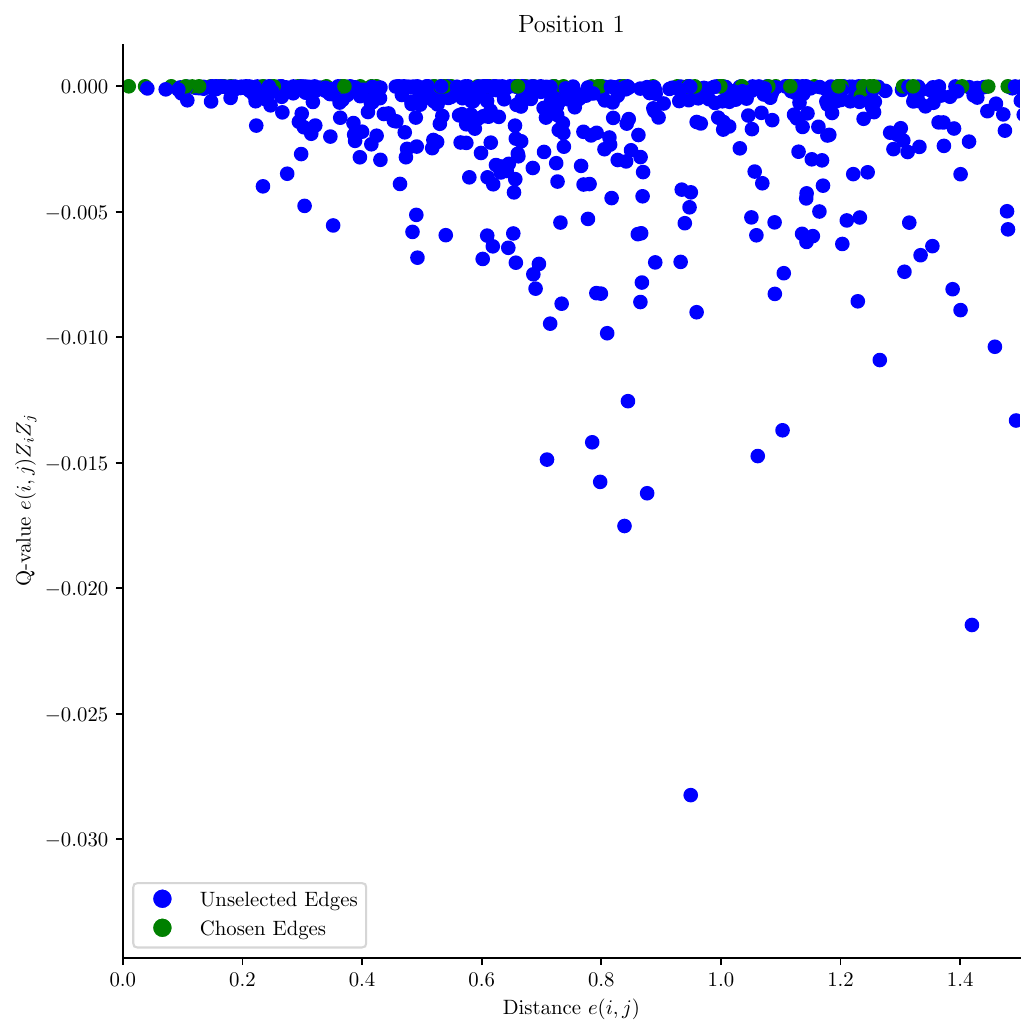}
        \caption{$\gamma = 1.63$}
        \label{fig:gamma-vis-qval-g1.63}
    \end{subfigure}
    \caption{Effect on varying $\gamma$ on the Q-value Distribution at Tour Position 1 (TSP-10). Figures \ref{fig:gamma-ablation} and \ref{fig:gamma-vis-qval} are based on the validation set for TSP-10. Distances are scaled so the maximum length of an edge is $\sqrt 2$ and we set $\beta = 1.1$, satisfying $\sin(\pi\beta)<0$. $\gamma = 1.63$ is close to the upper limit of $\gamma$ (ie. $\frac{\pi/2}{\tan^{-1}\sqrt2}\approx1.64$). Figure \ref{fig:gamma-ablation} and \ref{fig:gamma-vis-qval} are the same plots as Figure \ref{fig:beta-comp-nn} and \ref{fig:beta-comp-qval} respectively. Refer to the notes of Figures \ref{fig:beta-comp-nn} for more information.   }
    \label{fig:gamma-vis-qval}
\end{figure}

\subsection{Analysis for Theorem \ref{thm:good-beta}}
Figure \ref{fig:grid-search-heatmap} contains plots of Mean Gaps using a Depth 1 EQC with respect to parameters $\gamma$ and $\beta$. In all of the plots in Figure \ref{fig:grid-search-heatmap}, smaller optimality gaps are color-coded as brighter colors, while larger optimality gaps are encoded with darker colors. As we have only performed an exhaustive search of the optimization landscape between $0\leq \gamma,\beta \leq 2\pi$ for TSP 10 and 20, so for this analysis, we will focus on Figures \ref{fig:grid-search-heatmap-tsp10} and \ref{fig:grid-search-heatmap-tsp20}. 

The regions of good optimality gaps (ie. $\leq$ 1.05 (TSP-10) and $\leq 1.1$ (TSP-20))  correspond to the yellow regions of Figures \ref{fig:grid-search-heatmap-tsp10} and \ref{fig:grid-search-heatmap-tsp20}. These are precisely within the region of $1 < \beta < 2$, $3 < \beta < 4$ and $5 < \beta < 6$. These regions correspond to the values of $\beta$ where $\sin (\pi\beta) < 0$. There are also regions of poor optimality gaps (ie. $\geq$ 2.3) which correspond to the dark blue regions. These are between $0 < \beta < 1$, $2 < \beta < 3$, $4 < \beta < 5$ and so on. These regions correspond to the values of $\beta$ where $\sin(\pi\beta) > 0$. Additionally, \textit{within} each region of high optimality (or low optimality), the mean optimality gap is unchanged with respect to $\beta$. This is because $\sin(\pi\beta)$ at depth 1 is just a multiplier to the Q-value, and all Q-values are scaled proportionally, which does not affect the $\arg \max$ of the Q-value function. This is also empirical evidence that within $0< \gamma e_{ij} < \pi/2$ for all $i,j$, there is no optimal policy for $\sin(\pi\beta) >0$ (Theorem \ref{thm:good-beta}).

Figures \ref{fig:beta-comp-qval} and \ref{fig:beta-comp-nn} visualizes the Agent's Policy at Position 1 of the tour. To standardize all comparisons, we analyze the Agent's behavior at Position 1 of the tour. We have also provided analysis for other positions of the tour in Supplementary Information \bref{appendix:agent-vis-additional}{C}, but they are mostly similar and the analysis in this section also applies to the ones at other positions. When $\sin(\pi\beta) <0$ (see Figures \ref{fig:qval-beta-leq0} and \ref{fig:nn-beta-leq0})), the agent's policy is logical. Since all Q-values are negative, the agent is selecting among the smallest magnitude Q-values, which correspond to the nearest nodes of the tour. On the other hand, when $\sin(\pi\beta)>0$, the agent selects the \textit{further} neighbors most of the time. When $\sin(\pi\beta) > 0$ and all cosine terms in the product of \ref{eq:eqc-qval} are larger than zero, the largest edge weights have the largest Q-values, and are therefore part of the policy. 

Lastly, we also show that there is a positive relationship between the Q-value magnitude (ie. absolute value) and node distance. This has been explained in Lemma \ref{lem:monotonic-edgeweights}. However, on some occasions this positive relationship becomes less clear (such as Figure \ref{fig:gamma-vis-qval-g1.63}). This will be further clarified in the analysis for Theorem \ref{thm:gamma-greedy}. 

\subsection{Analysis for Theorem \ref{thm:gamma-greedy}} 
Figure \ref{fig:gamma-ablation} and \ref{fig:gamma-vis-qval} shows the effect of varying $\gamma$ on the Agent's policy. In this analysis, $\beta$ has been chosen such that $\sin(\pi\beta) < 0$. Similar to Figure \ref{fig:beta-comp-qval} and \ref{fig:beta-comp-nn}, all of these plots are recorded at position 1 of the tour. For an illustration of other tour positions, please refer to Supplementary Information \bref{appendix:agent-vis-additional}{C}. When $\gamma$ is near zero (but non-zero), the Q-value function models a sinusoidal curve (Figure \ref{fig:gamma-vis-qval-g0.001}) and the agent selects its nearest neighbor all of the time (Figure \ref{fig:gamma-vis-g0.001}). At $\gamma$ close to zero, the policy is essentially a nearest neighbor heuristic. When $\gamma$ increases, the agent begins to explore more of its nearest neighbors. When $\gamma = 1.63$ (Figure \ref{fig:gamma-vis-qval-g1.63}) (which is close to the maximum value of $\gamma$ as $\frac{\pi/2}{\tan^{-1} \sqrt2} \approx 1.64$, the agent begins to selects its furthest node more frequently than its nearest node. This validates the case where both small edge weights (nearest neighbor) or large edge weights (furthest neighbors) maximize the Q-value when $\gamma$ is large, resulting in the nearest and furthest neighbors becoming the top two most popular choices (Figure \ref{fig:gamma-vis-g1.63}) and a "V"-shaped distribution of Q-values (Figure \ref{fig:gamma-vis-qval-g1.63}), where smallest and largest edge weights generally have Q-values of smallest magnitude. These simulations hence validate Theorem \ref{thm:gamma-greedy}.

\subsection{Analysis for Proposition \ref{prop:gamma-range}} \label{sect:size-invariant-exp-gamma-range-justification}
\begin{figure}[!t]
    \centering
    \begin{subfigure}[b]{0.48\textwidth}
        \includegraphics[width=\textwidth]{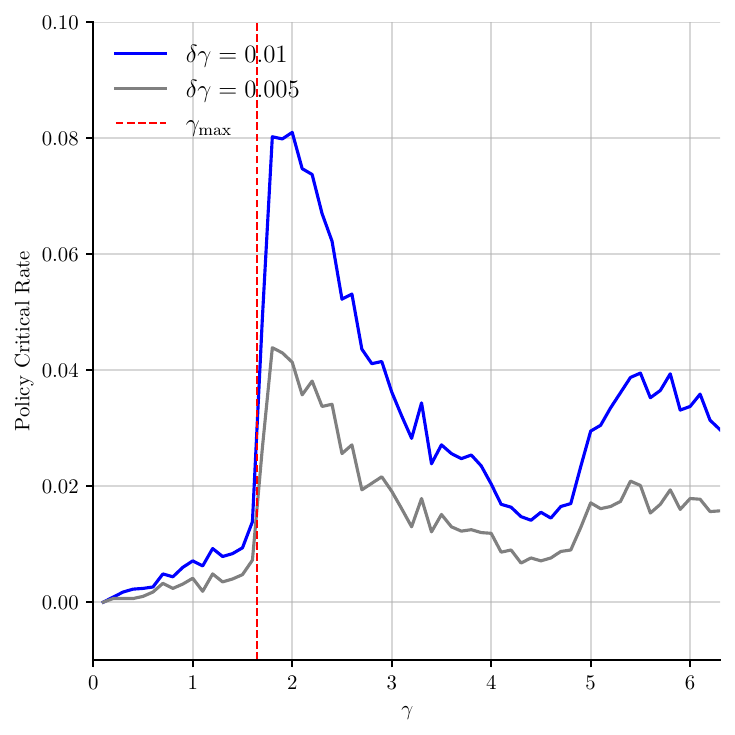}
        \caption{PC Rate}
        \label{fig:prop-gamma-range-flip-rate}
    \end{subfigure}
    \begin{subfigure}[b]{0.48\textwidth}
        \includegraphics[width=\textwidth]{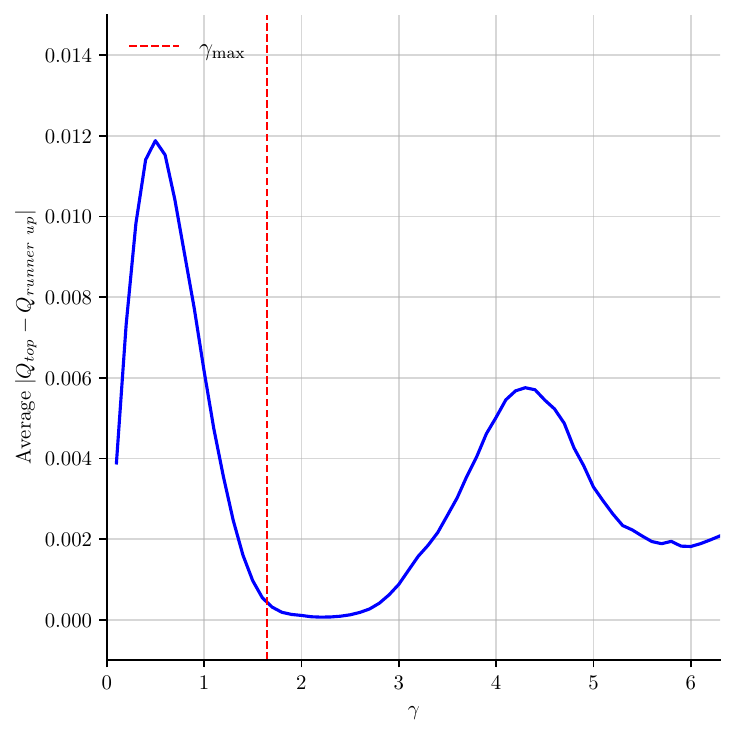}
        \caption{Average Margin}
        \label{fig:prop-gamma-range-avg-margin}
    \end{subfigure}
    \caption{Policy-Critical Flip Rates and Average Margin against $\gamma$, based on 500 TSP-10 instances with uniform node placements. The red line visualizes $\gamma = \gamma_{\max} \approx 1.64$, as we scaled the largest edge to have an Euclidean Distance $d_{\max} = \sqrt{2}$. Both of these visualizations were done on a grid from $\gamma \in [0.1, 6.3]$, with step size $\Delta \gamma = 0.1$, and $\beta = 1.1$. For computation of PC rate, we perturb $\gamma \in \Gamma$ by $\pm\Delta\gamma$, and estimate the flip rate by estimating the proportion of $\pi(\cdot;\gamma \pm \delta \gamma,\cdot) \neq \pi(\cdot;\gamma,\cdot)$ when there is at least two nodes unexplored in the graph (see Equation \ref{eq:flip-rate}). Average Margin is computed by the mean absolute difference of the top-two Q-values at a particular value of $\gamma$ (see Equation \ref{eq:avg-margin}). }
    \label{fig:prop-gamma-range}
\end{figure}
In order to visualize the training instability when $\gamma \in [\gamma_{\max}, 2\pi]$, we measure the Policy Critical Rate (PC Rate) and Average Margin on a dataset of 500 TSP-10 instances. PRC Rate (Equation \ref{eq:flip-rate}) shows the sensitivity of the $\arg \max$ to perturbations in $\gamma$ (higher value $\rightarrow$ more sensitive). Average Margin (Equation \ref{eq:avg-margin}) captures policy stability from a different angle: As the EQC is Lipschitz Continuous in $\gamma$, a larger margin implies that in most cases, a larger change in $\gamma$ is required for the policy function to change (smaller value $\rightarrow$ less stable).
\begin{equation}
    \text{PC Rate}(\gamma,\delta\gamma ) = \frac{1}{Z}\sum_{{\cal D}}\sum_{\text{tour pos}}1[\pi(\cdot;\gamma-\delta\gamma,\cdot) \neq \pi(\cdot;\gamma,\cdot)] + 1[\pi(\cdot;\gamma+\delta\gamma,\cdot) \neq \pi(\cdot;\gamma,\cdot)],
    \label{eq:flip-rate}
\end{equation}
where $[\cdot]$ denotes Boolean Function having value $1$ is the boolean condition is true, and $Z$ is a normalization constant.
\begin{equation}
    \text{Average Margin}(\gamma) = \frac{1}{Z}\sum_{{\cal D}}\sum_{\text{tour pos}}|Q(\cdot;\gamma,\cdot)_{(1)} - Q(\cdot;\gamma,\cdot)_{(2)}|,
    \label{eq:avg-margin}
\end{equation}
where $Z$ is a normalization constant and $Q(\cdot;\gamma,\cdot)_{(2)}$ denotes the first-largest and second-largest Q-values for a particular instance $(s,e)$ and last node $t$. 

In both metrics (Equations \ref{eq:flip-rate}-\ref{eq:avg-margin}), $Z$ denotes the total number of RL \textit{steps} (see Section \ref{sect:background-methodology-mdp}) when there was \textit{more than one} unexplored node.

Figure \ref{fig:prop-gamma-range} empirically validates Proposition \ref{prop:gamma-range} by showing that both the PC rate and the average margin exhibit strong dependence on $\gamma$. For small $\gamma < \gamma_{\max}$, both metrics remain stable: PC Rate remains near zero, indicating that small changes in the value of $\gamma$ do not affect the $\arg \max$ of Q-values, and the margin between the top two Q-values remains consistently higher than $0.004$. 

As $\gamma$ approaches $\gamma_{\max}$, the average margin begins to shrink gradually as the cosine term of the largest edge weight $\cos(\gamma e_{\max})$ is approaching zero. This is due to Theorem \ref{thm:gamma-greedy} discussed earlier, where we argued that both a small edge weight or a large distance from other nodes would be just as likely to maximize the Q-value. Therefore, it is expected that the average margin decreases as $\gamma$ approaches $\gamma_{\max}$ in the safe region. 

PC Rates peak at $\gamma \approx 1.7-1.8$ before decaying slowly. This is due to several $\cos(e_{ab}\gamma)$ factors approaching zero at close but distinct points, and the product terms in the Q-value to fluctuate rapidly. PC-Rate is high in areas where the Average Margin between the two Q-values is very small (ie. see regions $\gamma = 1.7-2.0$ in Figures \ref{fig:prop-gamma-range-flip-rate} and \ref{fig:prop-gamma-range-avg-margin}). Interestingly, we note that PC Rate decreases and the Average Margin becomes slightly more stable in the domain $\gamma \in [4,5]$. This may be because most of the cosine terms have changed sign and are negative in the domain $\pi/2 < \gamma e_{ab} < 3\pi/2$. Nevertheless, the policy is still comparatively less stable as the average margin is generally smaller in this region than in the safe $\gamma$ region of $\gamma \in (0, \gamma_{\max})$.

Overall, these results show that optimization in $[\gamma_{\max}, 2\pi]$ is fragile. Small perturbations in $\gamma$ can cause discontinuous changes in the greedy policy, validating our design decision to restrict the search to the \textit{safe domain} of $\gamma \in (0, \gamma_{\max})$.

\subsection{Policy Visualizations of Deeper EQCs} \label{sect:size-invariant-exp-deeper-eqcs}
\begin{figure}[!t]
    \centering
    \begin{subfigure}[b]{0.24\textwidth}
        \includegraphics[width=\textwidth]{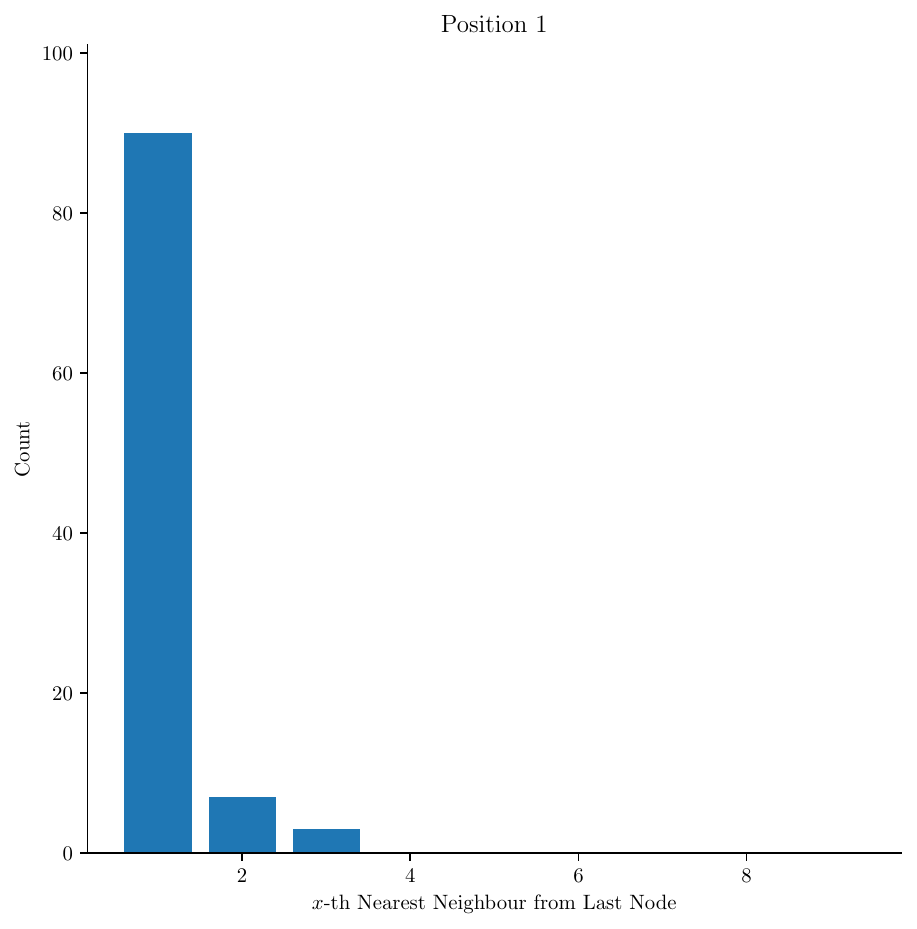}
        \caption{$d=1$}
        \label{fig:gamma-vis-d1}
    \end{subfigure}
    \begin{subfigure}[b]{0.24\textwidth}
        \includegraphics[width=\textwidth]{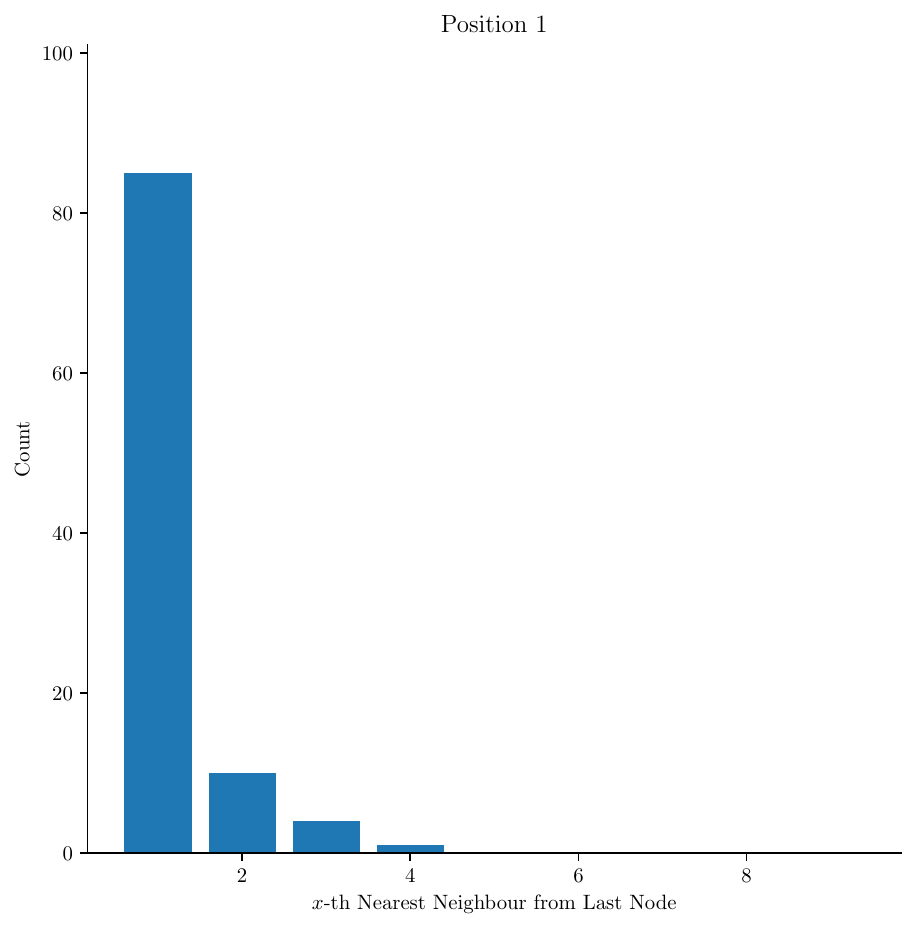}
        \caption{$d=2$}
        \label{fig:gamma-vis-d2}
    \end{subfigure}
    \begin{subfigure}[b]{0.24\textwidth}
        \includegraphics[width=\textwidth]{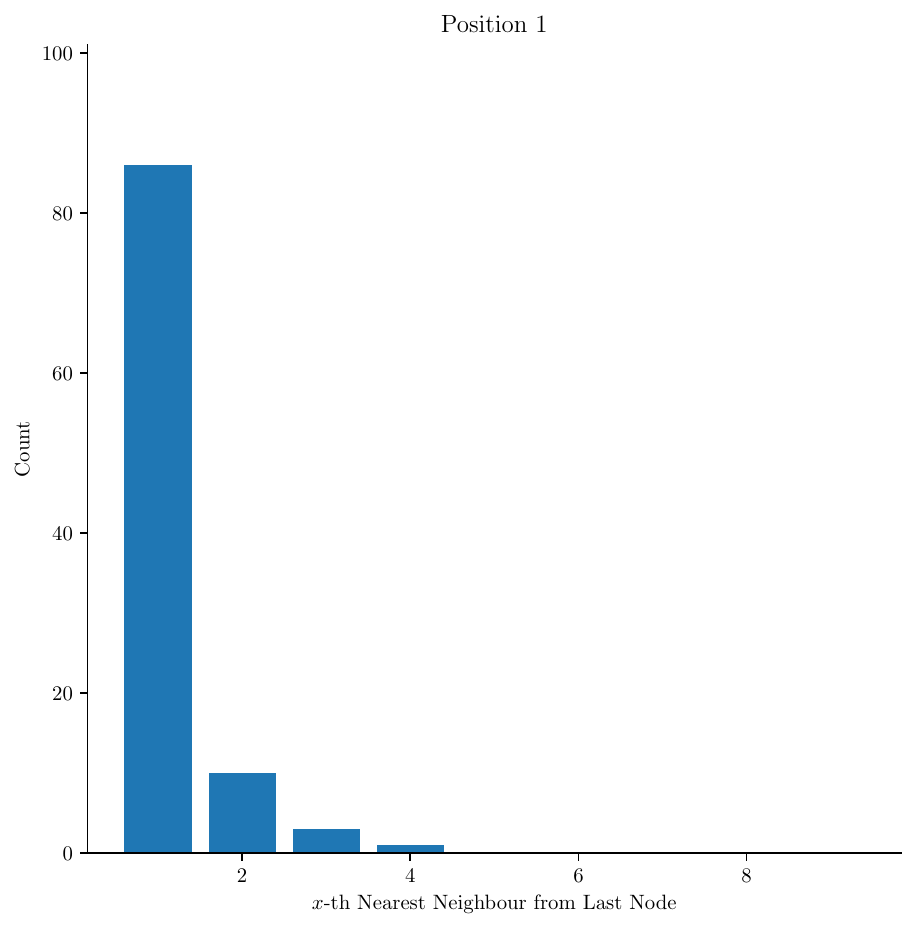}
        \caption{$d=3$}
        \label{fig:gamma-vis-d3}
    \end{subfigure}
    \begin{subfigure}[b]{0.24\textwidth}
        \includegraphics[width=\textwidth]{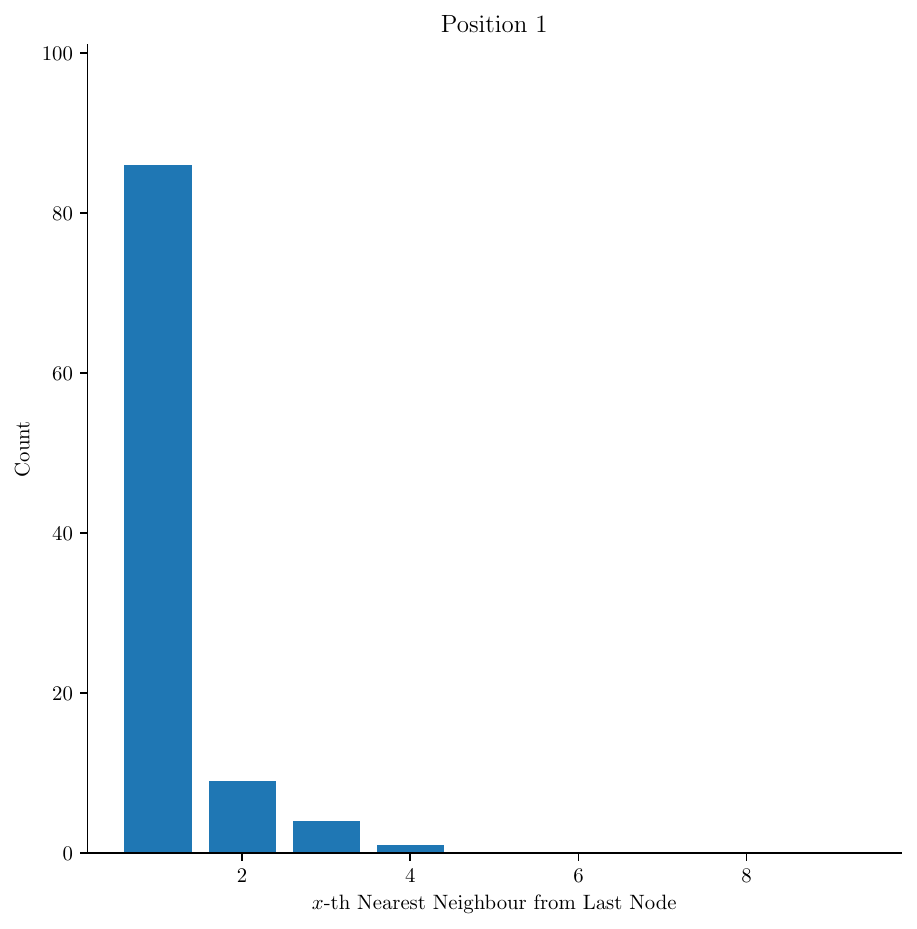}
        \caption{$d=4$}
        \label{fig:gamma-vis-d4}
    \end{subfigure}
    \caption{Frequency of the optimal policy found for EQC Depth $d$ selecting the $x$-th Nearest neighbor at Tour Position 1 (TSP-10). Visualizations are based on a randomly generated dataset of TSP-10 instances.}
    \label{fig:gamma-vis-d1-to-d4}
\end{figure}
\begin{figure}[!t]
    \centering
    \begin{subfigure}[b]{0.24\textwidth}
        \includegraphics[width=\textwidth]{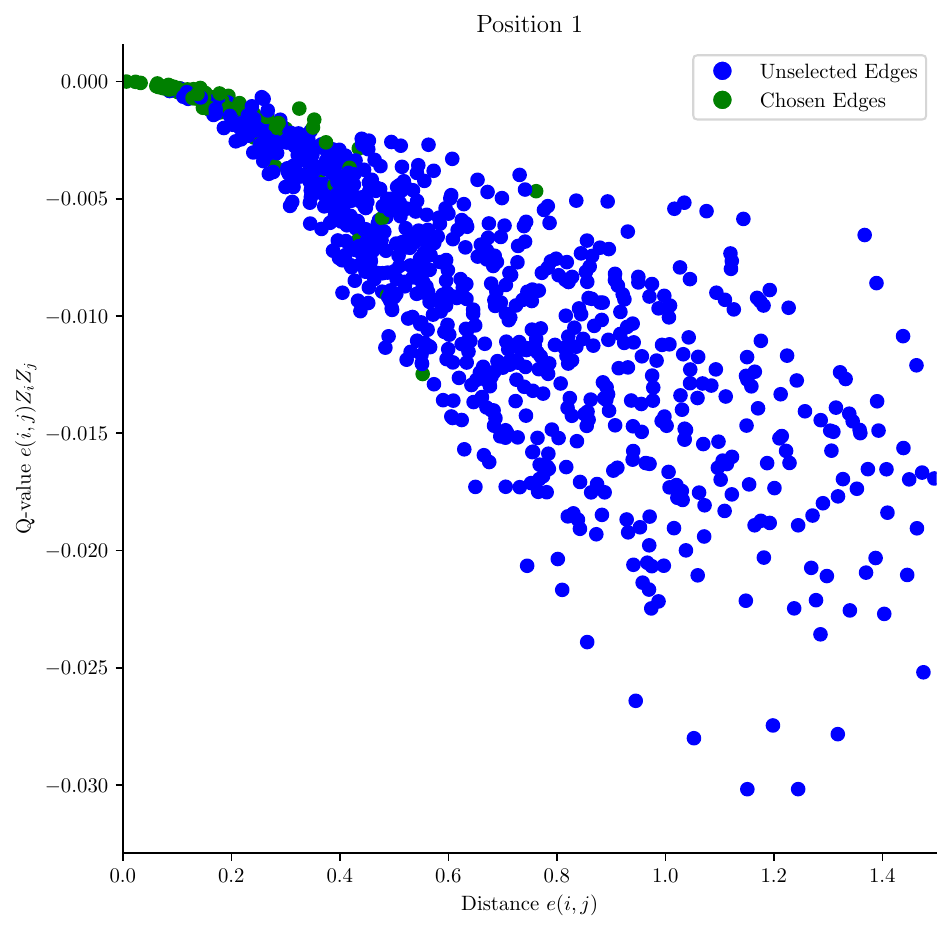}
        \caption{$d=1$}
        \label{fig:qval-d1}
    \end{subfigure}
    \begin{subfigure}[b]{0.24\textwidth}
        \includegraphics[width=\textwidth]{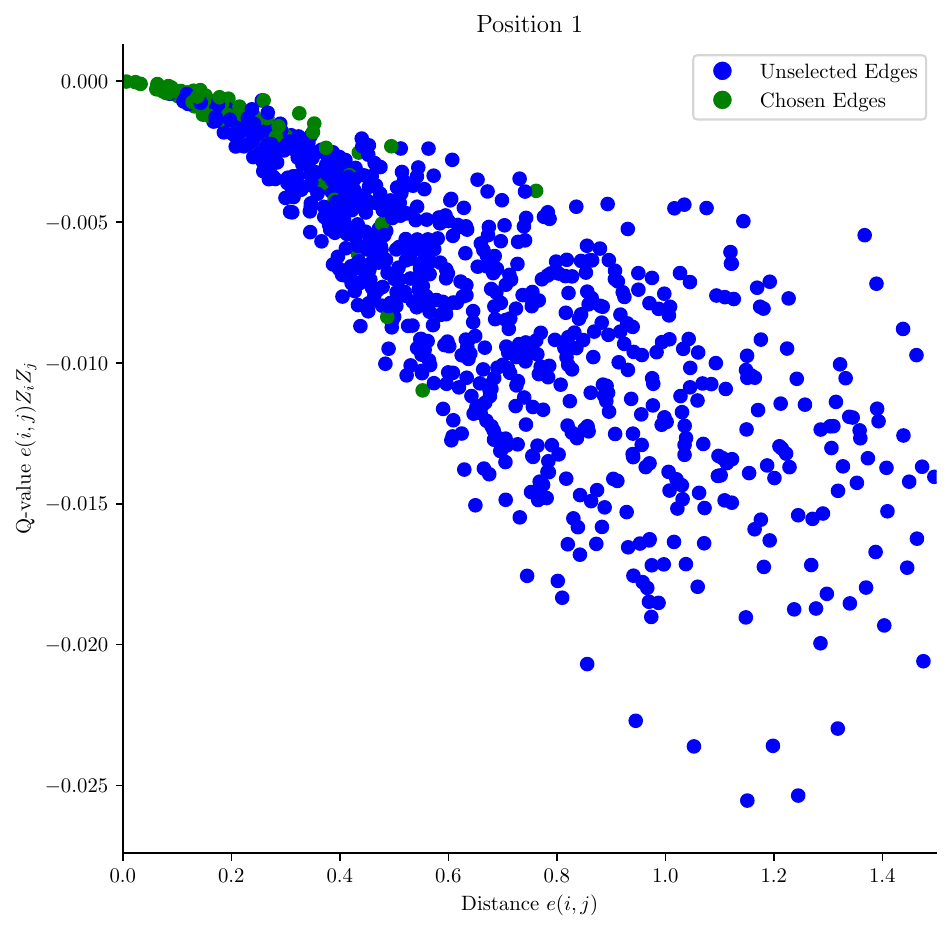}
        \caption{$d=2$}
        \label{fig:qval-d2}
    \end{subfigure}
    \begin{subfigure}[b]{0.24\textwidth}
        \includegraphics[width=\textwidth]{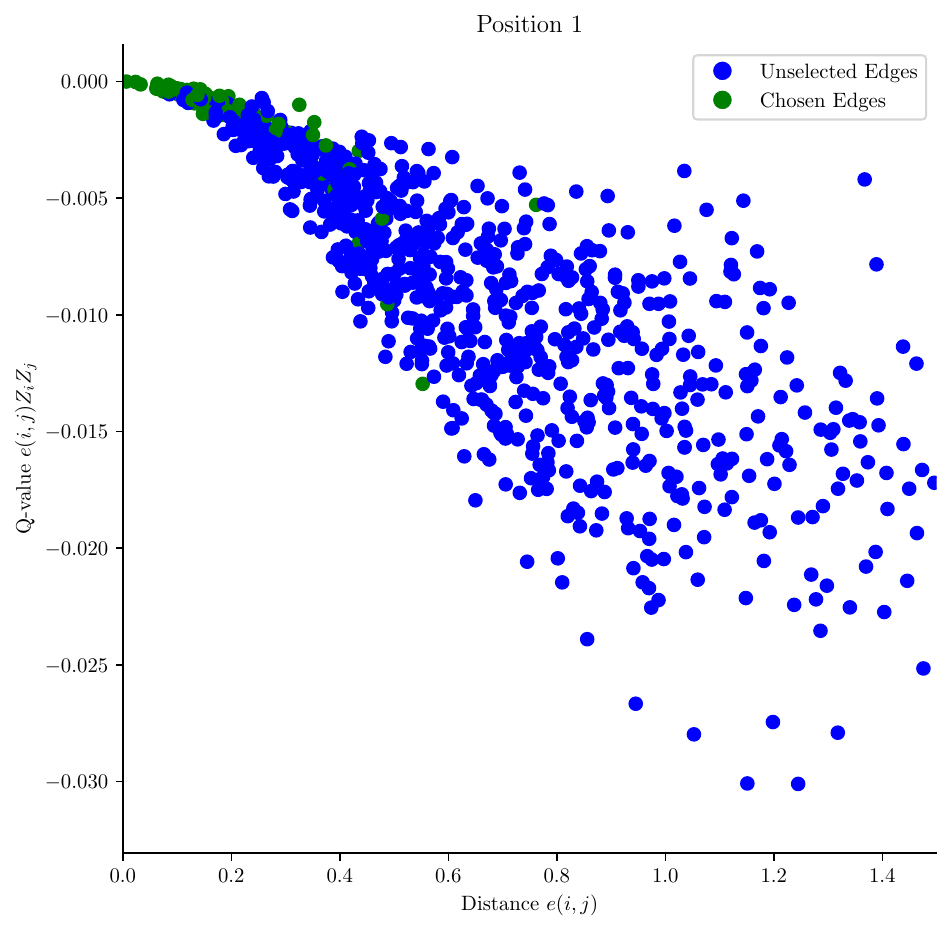}
        \caption{$d=3$}
        \label{fig:qval-d3}
    \end{subfigure}
    \begin{subfigure}[b]{0.24\textwidth}
        \includegraphics[width=\textwidth]{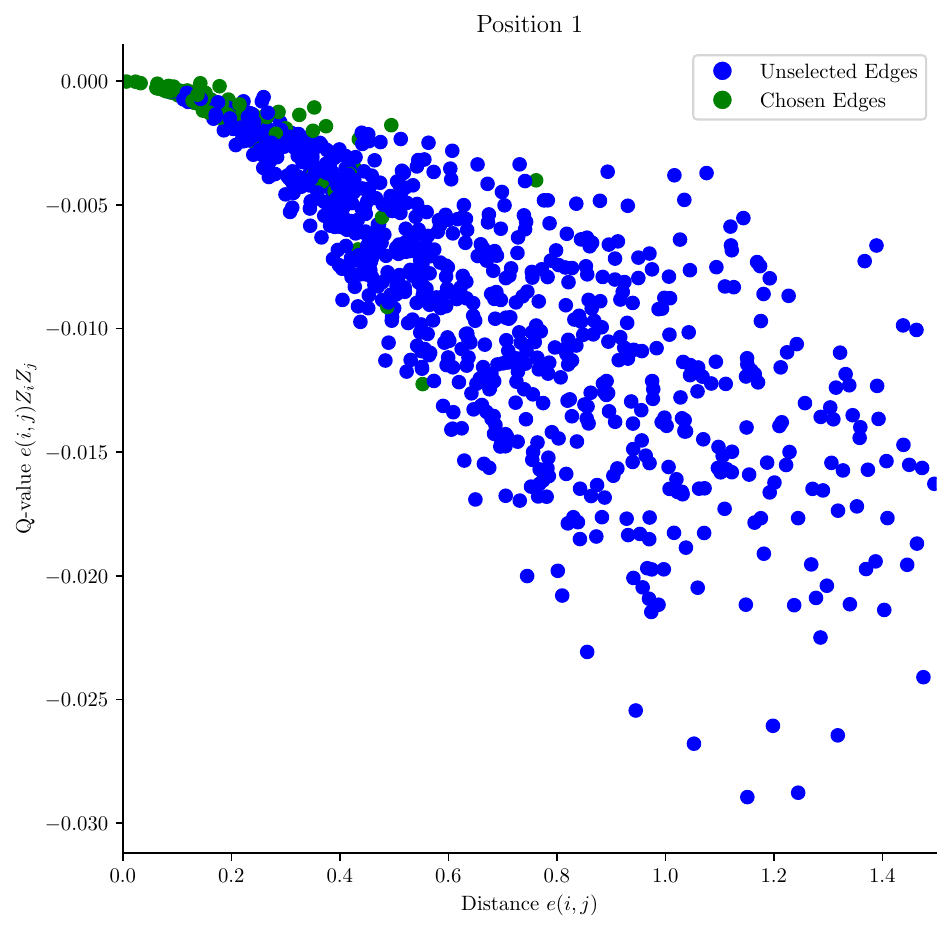}
        \caption{$d=4$}
        \label{fig:qval-d4}
    \end{subfigure}
    \caption{Q-value Distribution of the optimal policy found for EQC Depth $d$ at Tour Position 1 (TSP-10). Visualizations are created based on the TSP-10 Validation set.}
    \label{fig:qval-d1-to-d4}
\end{figure}

Figures \ref{fig:gamma-vis-d1-to-d4} and \ref{fig:qval-d1-to-d4} shows that the optimal policy at EQC Depth 1 is similar to the optimal policies at EQC Depths 2 to 4. We use the same policy visualization as per Section \ref{sect:size-invariant-exp} (but this time using an optimal EQC policy at Depths $d=1,2,3,4$). Surprisingly, we found that even at different depths, we found that similar trends exist in both the Frequency of nearest neighbor selection as well as Q-value distribution at depths larger than one. This suggests that the EQC at depths 2 to 4 might be converging to a similar policy as the Depth 1 EQC. Because the EQC at larger depths is still following a general trend of assigning shorter edges lower Q-values than longer edges (as illustrated in Figures \ref{fig:qval-d2}-\ref{fig:qval-d4}), we converge to a similar strategy of selecting the nearest neighbor most frequently, then the second nearest neighbor the next most frequently, and so on. This explains the small differences in Mean Test Gaps achieved by deeper EQCs as opposed to the EQCs at Depth 1.  

\section{Conclusion}\label{sect:conclusion}
In this work, we extended the present understanding of the EQC at Depth 1 beyond the concept of equivariance. We show that effective QRL policies reside within a small search space of the parameter landscape, and importantly, this region is \textit{size-invariant} - where different TSP-instance sizes would have a similar search space of parameters. Building on this insight, we introduced a \textit{Size-Invariant Grid Search} (SIGS), a lightweight training optimization that replicates the performance of QRL at Depth 1 while reducing runtime significantly. This allowed us to simulate Depth-1 EQCs on TSP instances up to 350 nodes, far beyond previously tractable limits of 20 nodes. We stress, however, that SIGS only applies to Depth-1 EQCs and not to deeper EQCs.

By pushing the evaluation to 350 nodes, we also revealed an inherent limitation: Solution quality degrades eventually, with mean gaps surpassing 20\% from optimality at TSP sizes larger than 150 nodes. While deepening the EQC might offer a remedy, our simulations demonstrated only minimal gains from Depth 2 to 4,  with learned policies appearing similar qualitatively. 

These findings have two key implications. First, the Depth 1 EQC should be regarded as an \textit{informative baseline and analysis tool} rather than a scalable solver in its own right. Second, SIGS provides a practical benchmarking tool for the QRL community, enabling an exploration of scaling behaviour in a noiseless setting at previously inaccessible sizes. 

Looking forward, we hope these results inspire future research in QRL for Combinatorial Optimization. Promising directions include extending size-invariant region analysis to deeper EQC architectures or alternative SPAs. Beyond TSP, our methodology invites exploration across a broader class of Combinatorial Optimization problems, testing whether size-invariant structures emerge in other domains. 

\ifarxivpreprint
The code and dataset generation scripts used in this work will be made publicly available upon full publication. 
\fi

\ifarxivpreprint
    

\else
    \section*{Data Availability}
    Dataset generation scripts of the TSP instances in this work as well as the computation of their optimal solutions can be found in the GitHub repository (\url{https://github.com/SMU-Quantum/nature-of-depth1-eqc})
    
    \section*{Code Availability}
    The full code used to generate numerical results and figures in this work can be found in the Github Repository (\url{https://github.com/SMU-Quantum/nature-of-depth1-eqc}). 
\fi

\section*{Acknowledgements} We gratefully acknowledge Prof.\ Nobuyoshi Asai from the University of Aizu for generously providing compute resources. We also would like to thank Mr. Monit Sharma - our ``walking PennyLane documentation'' since the project's infancy - for guidance on implementation choices on backend simulators in PennyLane. We would also like to thank Prof.\ Paul Griffin for insightful feedback that strengthened the early drafts of the paper. 

\ifarxivpreprint\else
\section*{Funding} This work is funded by various internal grants from the Singapore Management University. 

\section*{Author Contributions}
All authors conceived the idea for this work. J.T. conducted the theoretical analysis and performed the numerical simulations and, together with X.W. and H.C., validated and finetuned the results. X.W. provided guidance for J.T. in understanding quantum computing concepts, and provided valuable discussion and conceptual input that supported the refinement of the size-invariant properties as well as the interpretation of the experimental results. H.C. acquired funding and provided supervision for this project. J.T. wrote the first draft of the paper, and all authors contributed to the review and the editing of the final version.

\section*{Competing Interests}
The authors declare no competing interests.
\fi

\bibliographystyle{naturemag}
\bibliography{citations}

\ifsupp
    \pagebreak
    \appendix
\section{Detailed Derivation of Proposition 1} \label{appendix:thm-gamma-range-proof}
\noindent\textit{Restatement of Proposition: (Instability for $\gamma \in [\gamma_{\max}, 2\pi]$) Assume that edge weights are drawn from a continuous distribution (see Assumption \bref{assumption:non-resonance}{1}). Fix a state $(s,d,e,t)$ and let $\gamma^\# \in {\cal Z}_{(s,d,e,t)}$. Then at $\gamma = \gamma^\#$, the relative ordering of two actions' Q-values from $t$ inverts.}

\noindent \textit{Let $\delta$ denote a small number, $\delta > 0$. If at $\gamma^\#-\delta$ or at $\gamma^\# +\delta$, either of these two actions has the top-1 Q-value, this inversion is {policy critical} at $\gamma^\#$ (see Definition \bref{def:policy-critical}{3}). }

\begin{proof}

    Let our current last node be some node $t$ and we take action $a$ (ie. visit node $a$) using edge $t\rightarrow a$. Let $b$ be a node where $b \neq t,a$ which has a scaled edge weight $e_{ab}$ from node $a$, and we define $\gamma_{ab}^\#$ such that it satisfies $\cos(\gamma_{ab}^\# e_{ab}) = 0$. We can see that $\gamma_{ab}^\# \in {\cal Z}_{(s,e),t}$.

    \noindent Under Assumption 1, with high probability, all elements in the cosine-zero set ${\cal Z}_{(s,d,e,t)}$ are unique.

    \noindent We define $\delta > 0$ such that in the range of $\gamma \in (\gamma_{ab}^\# - \delta, \gamma_{ab}^\#  + \delta)$, no other terms in the cosine-zero set for ${\cal Z}_{(s,d,e,t)}$ changes sign. We show that the \textbf{relative ordering} of the Q-values is reversed before and after $\gamma = \gamma_{ab}^\#$.
    
    \noindent The Q-values affected by the zero-crossing of $\cos(\gamma_{ab}^\# e_{ab}) = 0$ are $Q((s,e,t), a;\cdot)$ and $Q((s,e,t),b; \cdot)$. We show that the resultant ordering of Q-values before and after $\gamma_{ab}^\#$ is changed. With sufficiently small $\delta > 0$,
    \begin{equation}
        \frac{|Q((s,e,t),a;\gamma_{ab}^{\#}-\delta ,\beta)|}{|Q((s,e,t),b;\gamma_{ab}^{\#}-\delta, \beta)|}\approx \frac{|Q((s,e,t),a;\gamma_{ab}^\# + \delta,\beta)|}{|Q((s,e,t),b;\gamma_{ab}^\# + \delta ,\beta)|},
    \end{equation}
    which implies an inversion in the ordering of the Q-values $Q((s,d,e,t),a;\cdot)$ and $Q((s,d,e,t),b;\cdot)$. Without loss of generality, if 
    \begin{equation}
        Q((s,d,e,t),a;(\gamma^\#_{ab} - \delta, \beta)) < Q((s,d,e,t),b;(\gamma^\#_{ab} - \delta, \beta)),
    \end{equation} 
    then after the zero-crossing of $\cos(\gamma_{ab}^\# e_{ab}) =0$, we have:
    \begin{equation}
        Q((s,d,e,t),a;(\gamma^\#_{ab} + \delta, \beta)) > Q((s,d,e,t),b;(\gamma^\#_{ab} + \delta, \beta))
    \end{equation}
    
    \noindent If at $\gamma =\gamma_{ab}^\# - \delta$, if either
    \begin{equation}
        \pi((s,d,e,t); (\gamma^\#_{ab}-\delta , \beta)) = a\ \text{or} \ \pi((s,d,e,t), (\gamma^\#_{ab}-\delta , \beta)) = b
    \end{equation}
    is satisfied, then due to the change in relative ordering of Q-values, 
    \begin{equation}
        \pi((s,d,e,t);(\gamma^\#_{ab} - \delta, \beta)) \neq \pi((s,d,e,t); (\gamma^\#_{ab} + \delta, \beta)),
    \end{equation}
    which is policy critical at $\gamma = \gamma_{ab}^\#$. 
\end{proof}
\pagebreak
\section{Simulation Configurations} \label{sect:experimental-config}
This section outlines the simulation configurations discussed in Sections \bref{sect:size-invariant-method}{4} to \bref{sect:size-invariant-exp}{5}.

\begin{table}[H]
\centering
\scriptsize
\caption{
Default simulation configurations for QRL training and dataset splits. When simulating with RL at Depth 1, we use the Analytical Expression (see Equation \bref{eq:eqc-qval}{2}) by implementing it classically in PyTorch \cite{torch2017}. This results in a more efficient classical simulation, as there is no need to compute the quantum state vector. If a PennyLane simulator is used, then it would be the "lightning.qubit" simulator. The second part of the table consists of essential compute specifications for our simulations. We used a GPU for simulations with a Depth 1 EQC (written classically), but a CPU with large memory for simulations with a EQC Depth $>1$. 
}
\label{tab:rl-config}
\begin{tabular}{ll|ll}
\hline
\multicolumn{2}{c}{\textbf{Configuration}}  & \multicolumn{2}{c}{\textbf{Configuration}}   \\
\hline
Max. Episodes & 20000 & Learning Rate & $10^{-3}$ \\
Initial $\gamma_1\cdots \gamma_p$, $\beta_1\cdots \beta_p$ & 1.0 & Memory Buffer Size & 10000 \\
Early stopping patience & 50 updates (500 eps) & PennyLane Simulator & Lightning.Qubit* \\
Early Stopping Min Decrease & -0.001 & Discount Rate & 1 (No Discounting) \\
Train step interval & 10 episodes & $\epsilon$-greedy initialization & 1.00  \\
Target Network update & 30 episodes & $\epsilon$-greedy decay factor & 0.999 \\
Batch Size & 10 & $\epsilon$-greedy minimum & 0.01  \\
\midrule
EQC Depth 1 CPU & Intel Core i9-13000KF & System RAM &  64 GB \\
EQC Depth 1 GPU &  NVIDIA-RTX 4090 & vRAM & 32 GB\\
EQC Depth $>1$ CPU & Intel Xeon Gold 5222 & System RAM & 754 GB \\ 
EQC Depth $>1$ GPU & Not Used & vRAM & Not Applicable \\
\bottomrule
\hline
\end{tabular}
\end{table}
\subsection{Datasets}\label{sect:experimental-config-datasets}
The datasets consist of two-dimensional Euclidean TSP instances, where node coordinates are sampled uniformly from $[0,1]^2$. For TSP sizes below 15, optimal tours are obtained using the Held-Karp Dynamic Programming algorithm \cite{5a5ec1f2-057b-31e3-9132-41b3d9faf61c}; for larger sizes, we adopt the best of 20 runs of the Python implementation \cite{elkai-2023} of LKH-3\cite{lkh3-2017}. 

Each dataset (train, validation, test) is generated from independent random seeds and contains instances of a single fixed TSP size. Unless otherwise stated, the training set comprises 500 instances; validation set contains 100 instances for sizes 5-15, 50 instances for sizes 16-30 and 30 instances for sizes 31-350; and the test set comprises of 1000 instances.


\subsection{QRL Training}\label{sect:experimental-config-qrl}
Unless otherwise specified, QRL simulations follow the training procedure of Skolik et al. (2023)\cite{Skolik2022EquivariantQC}, with the addition of a validation procedure every parameter update. Every 10 episodes, a training step is ran by randomly sampling a minibatch 10 transitions from the replay buffer with replacement. This updates the parameters of the behaviour policy. We set a hard limit of 20000 episodes, but training is early stopped where the mean vaidation gap fails to decrease for 50 consecutive parameter updates (equivalent to 500 episodes, since one training step is performed every 10 episodes). 

This training strategy is used whenever we refer to "RL" results in this paper, and it is also applied in all simulations with EQC having a depth greater than one. The full simulation configurations can be found in Table \ref{tab:rl-config}. 

\pagebreak
\section{Agent's Policy Visualization at More Positions in the RL Tour} \label{appendix:agent-vis-additional}
Figure \ref{fig:tsp10-visualization-additional} and \ref{fig:tsp100-visualization-additional} shows the policy visualizations of TSP-10 and TSP-100 at Position 1 (initial position) and at positions where 25\%, 50\% and 75\% of the tour has occurred. Across different positions, both the Qvalue-Distance relationship (top row) and the frequency of selecting the $x$-th nearest neighbor (bottom row) exhibit a consistent distribution over different tour positions.
\begin{figure}[H]
    \centering
    \begin{subfigure}[b]{0.24\linewidth}
        \centering
        \includegraphics[width=\linewidth]{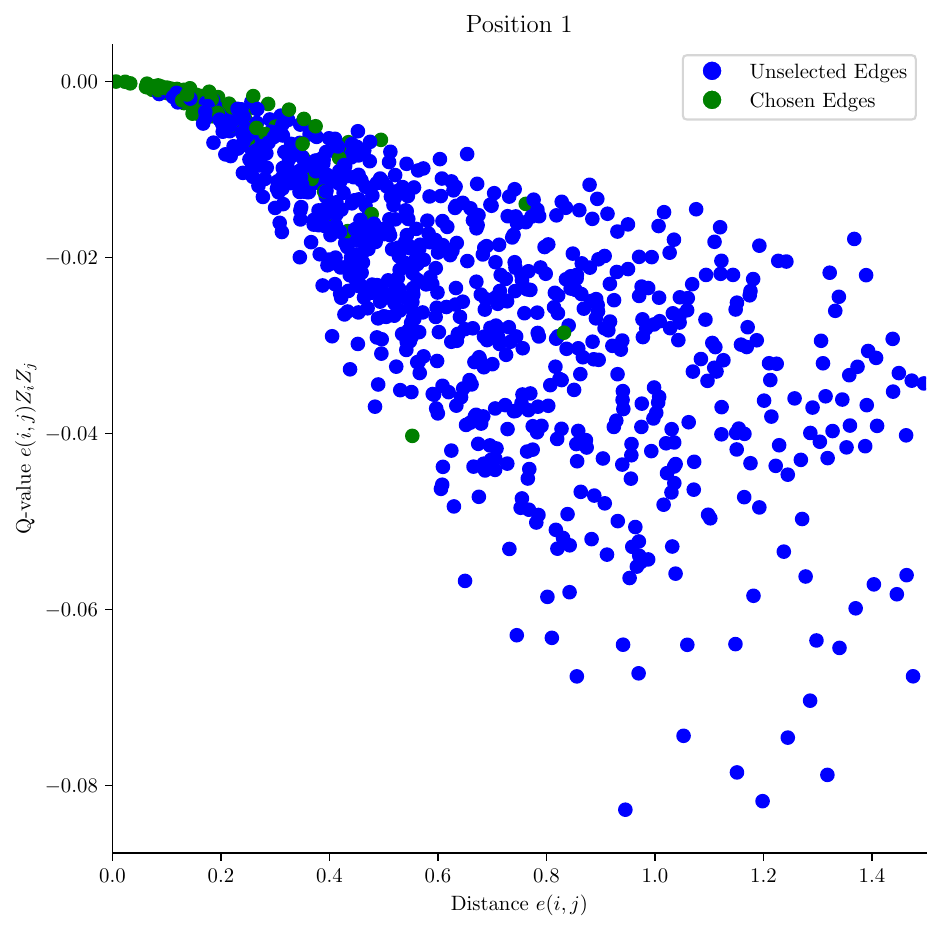}
    \end{subfigure}
    \begin{subfigure}[b]{0.24\linewidth}
        \centering
        \includegraphics[width=\linewidth]{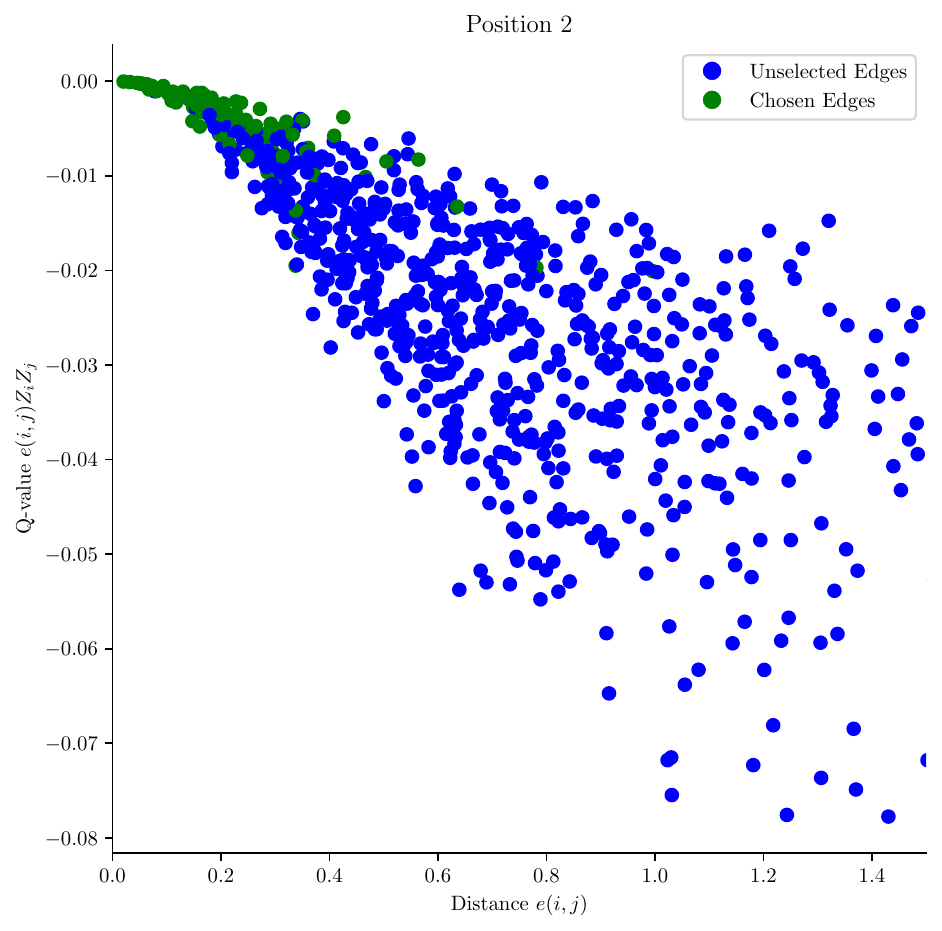}
    \end{subfigure}
    \begin{subfigure}[b]{0.24\linewidth}
        \centering
        \includegraphics[width=\linewidth]{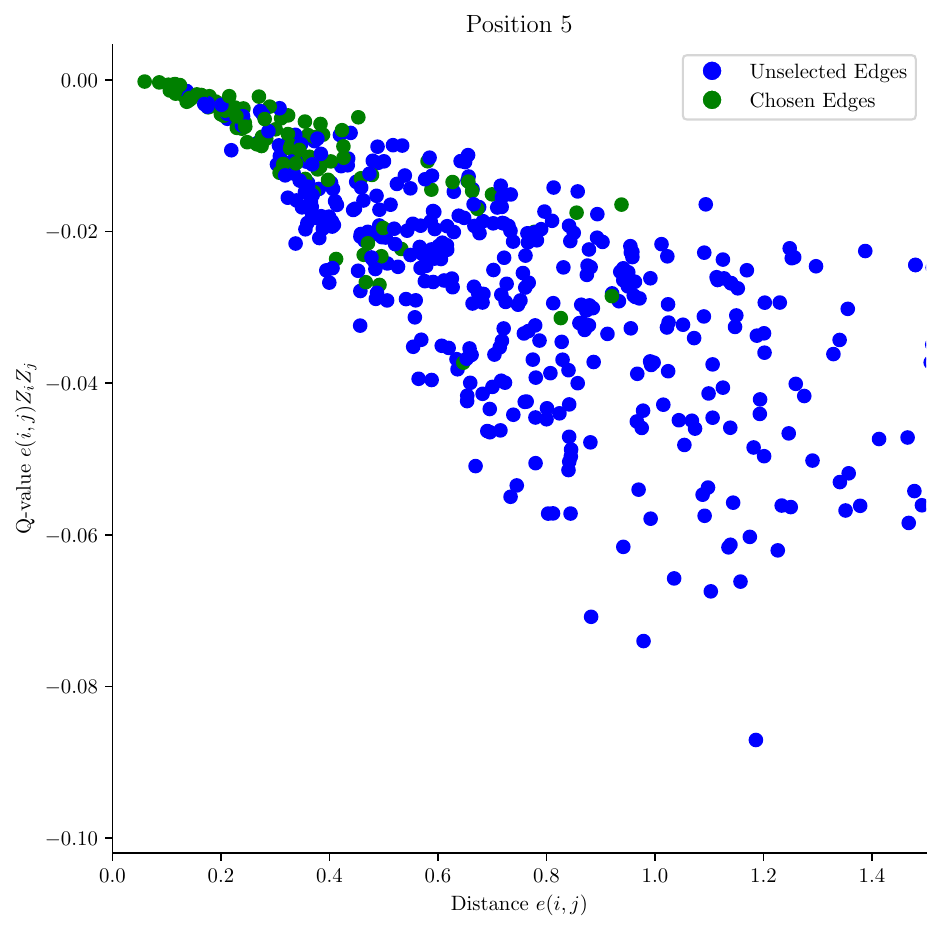}
    \end{subfigure}
    \begin{subfigure}[b]{0.24\linewidth}
        \centering
        \includegraphics[width=\linewidth]{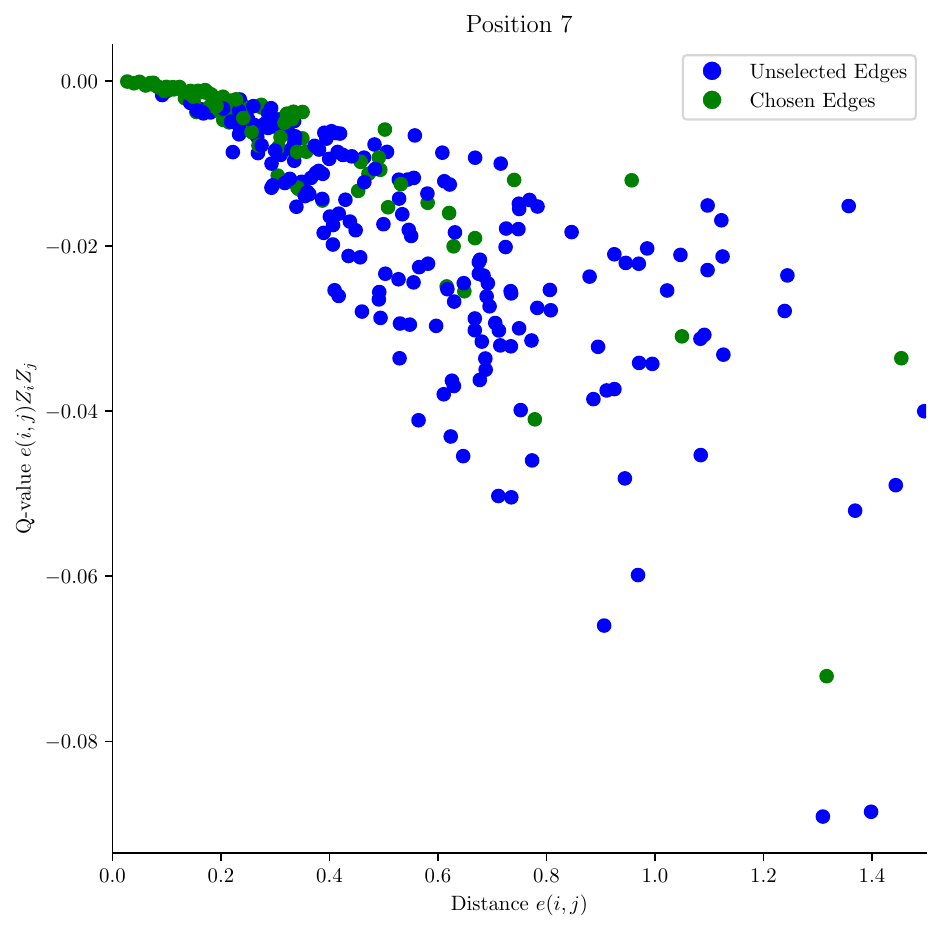}
    \end{subfigure}
    \hfill
    \begin{subfigure}[b]{0.24\linewidth}
        \centering
        \includegraphics[width=\linewidth]{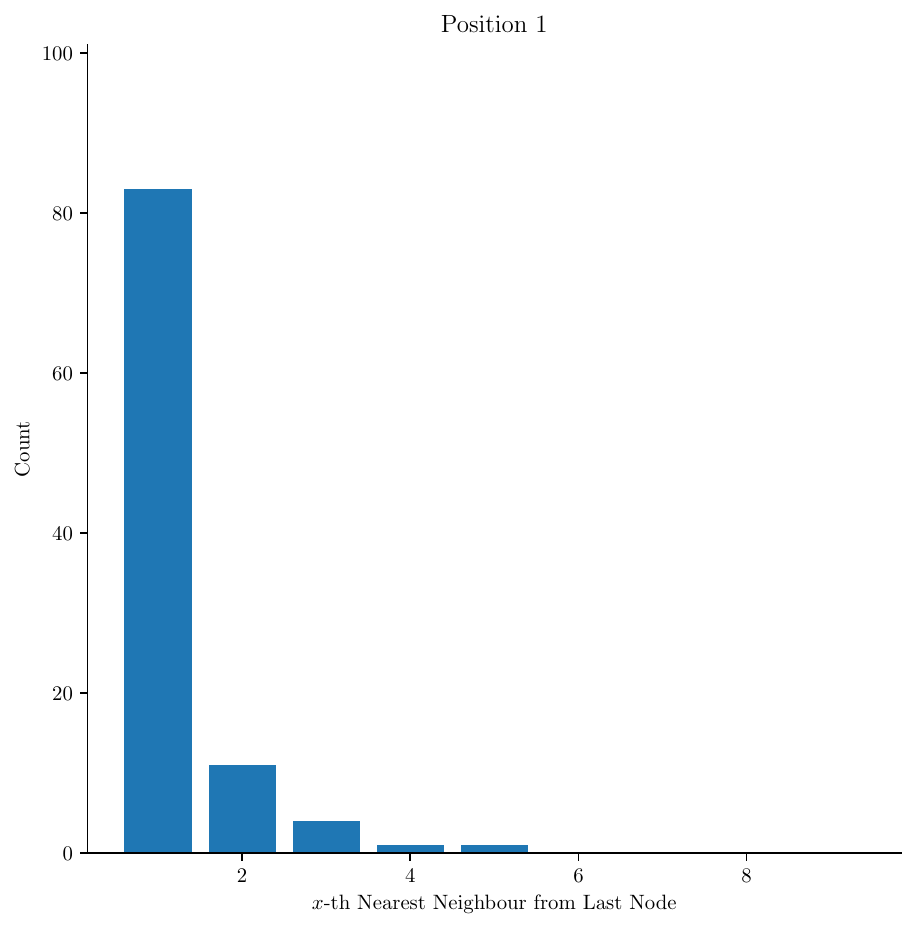}
        \caption{Tour Position 1}
    \end{subfigure}
    \begin{subfigure}[b]{0.24\linewidth}
        \centering
        \includegraphics[width=\linewidth]{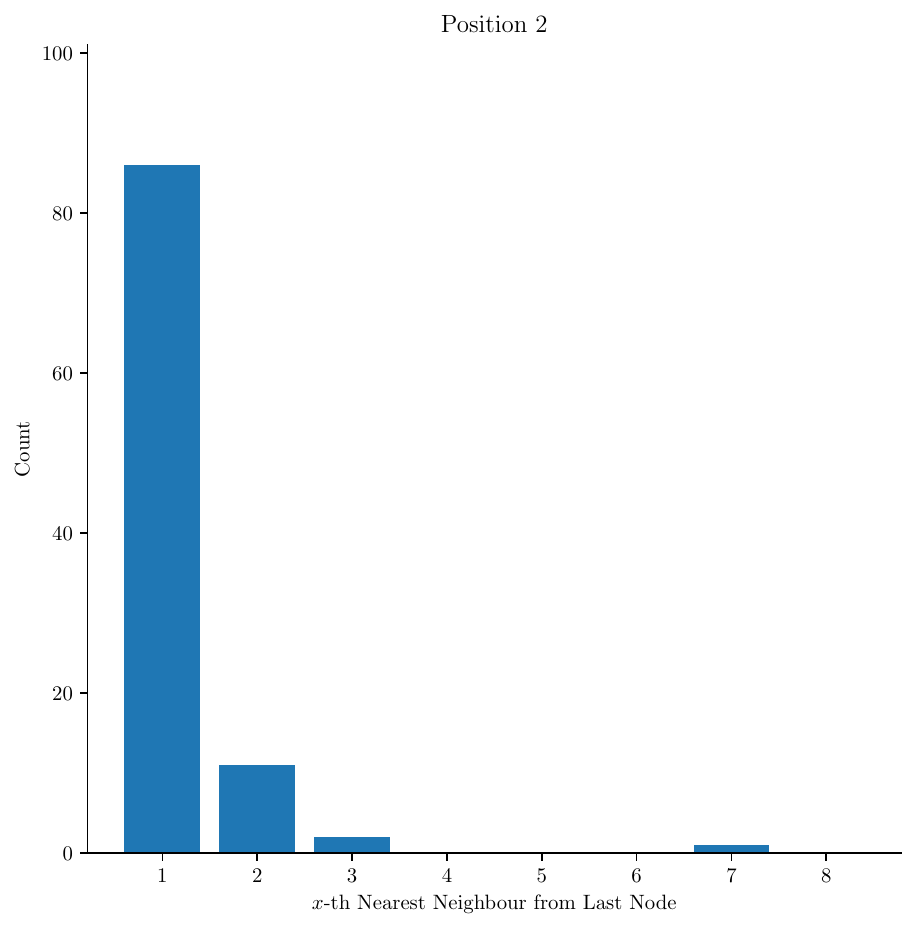}
        \caption{Tour Position 2}
    \end{subfigure}
    \begin{subfigure}[b]{0.24\linewidth}
        \centering
        \includegraphics[width=\linewidth]{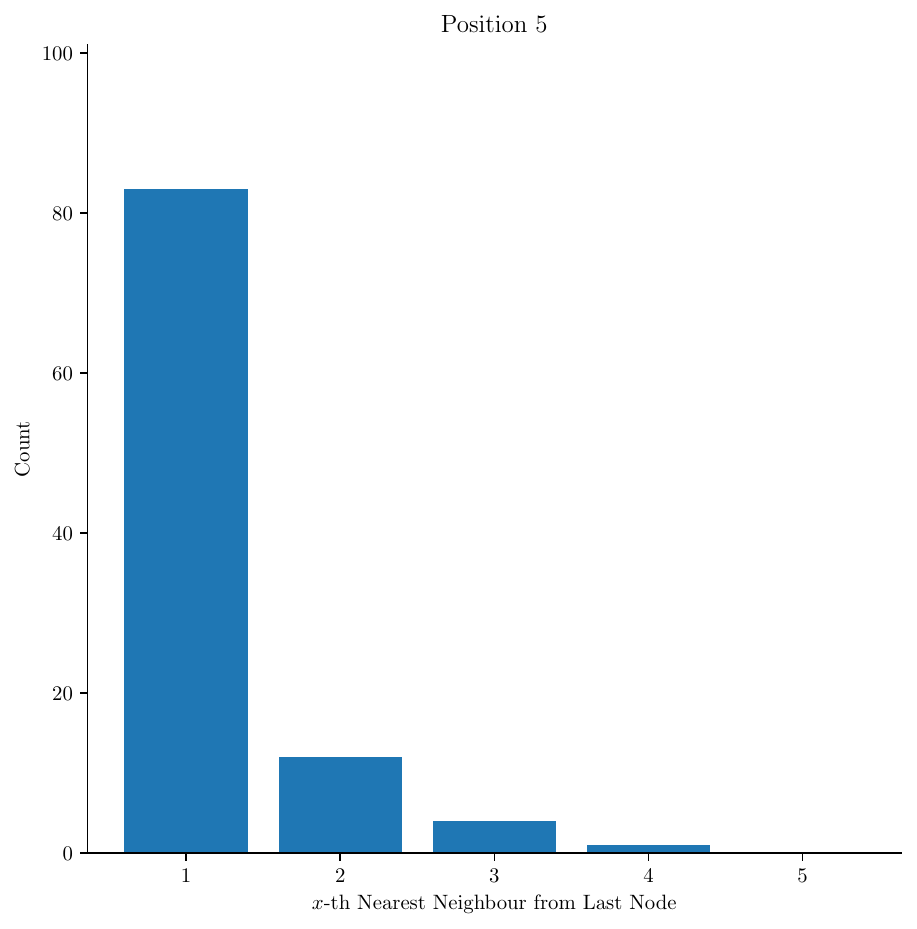}
        \caption{Tour Position 5}
    \end{subfigure}
    \begin{subfigure}[b]{0.24\linewidth}
        \centering
        \includegraphics[width=\linewidth]{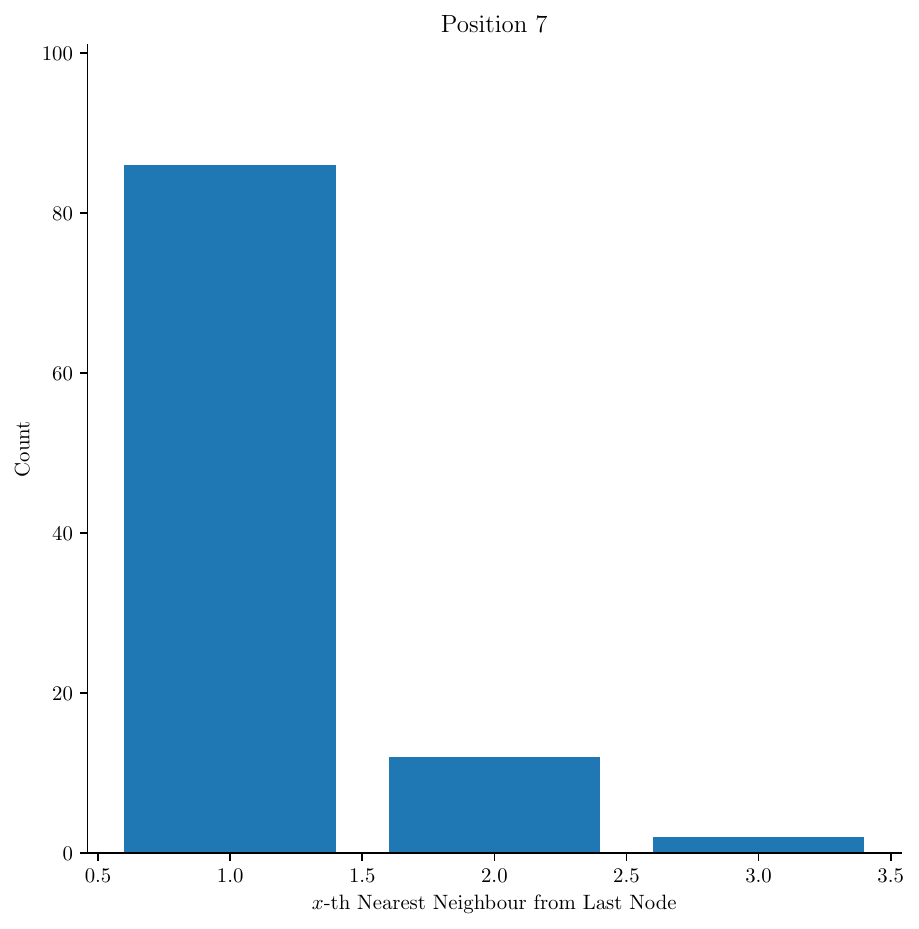}
        \caption{Tour Position 7}
    \end{subfigure}
    \caption{Visualization of TSP-10 policies using Q-value against Distance plots (top row) and Frequency of selecting the $x$-th Nearest Neighbour (bottom) at different positions of the tour. These plots are generated using the $\gamma$ found using the SIGS method for TSP-10, $\gamma= 0.9$, and $\beta = 1.1$.}
    \label{fig:tsp10-visualization-additional}
\end{figure}
\begin{figure}[H]
    \centering
    \begin{subfigure}[b]{0.24\linewidth}
        \centering
        \includegraphics[width=\linewidth]{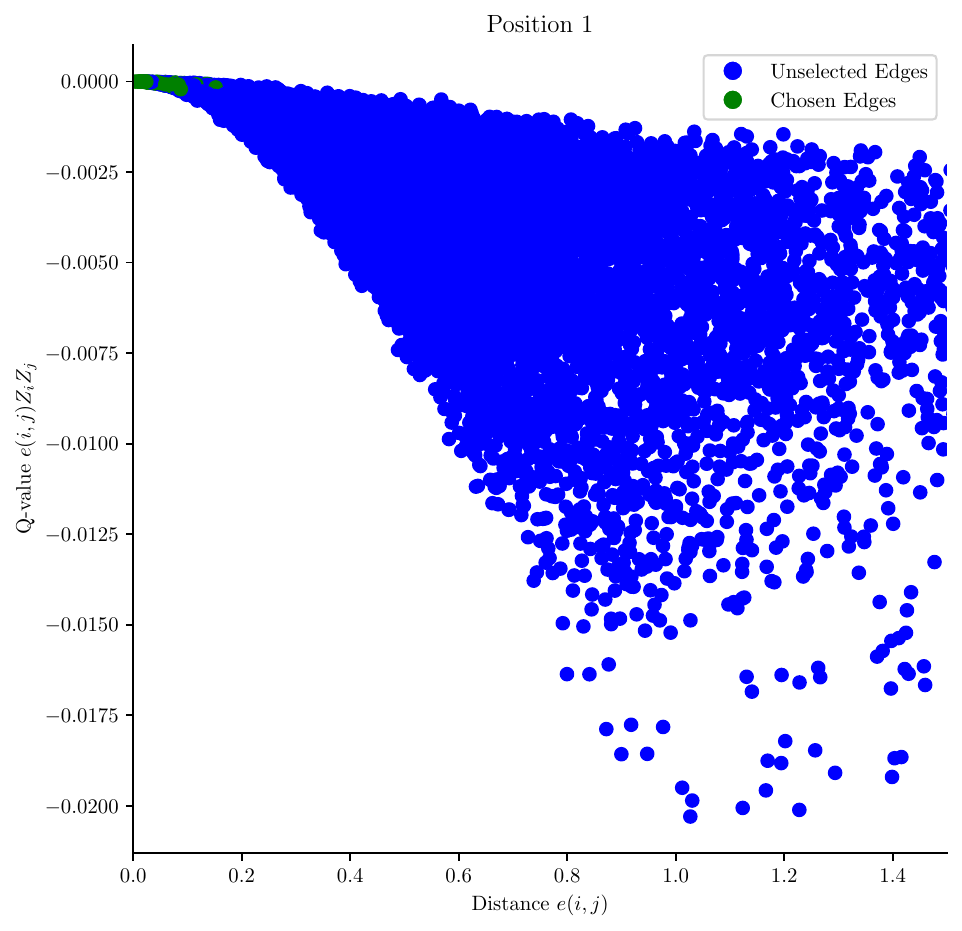}
    \end{subfigure}
    \begin{subfigure}[b]{0.24\linewidth}
        \centering
        \includegraphics[width=\linewidth]{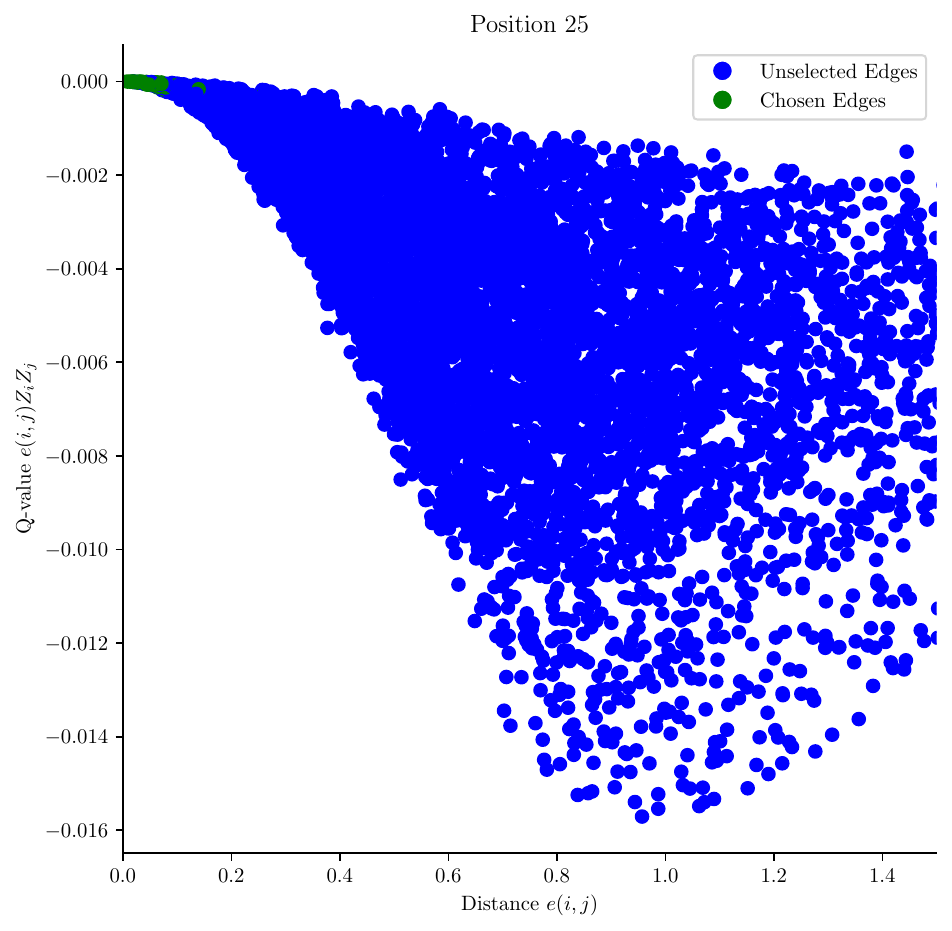}
    \end{subfigure}
    \begin{subfigure}[b]{0.24\linewidth}
        \centering
        \includegraphics[width=\linewidth]{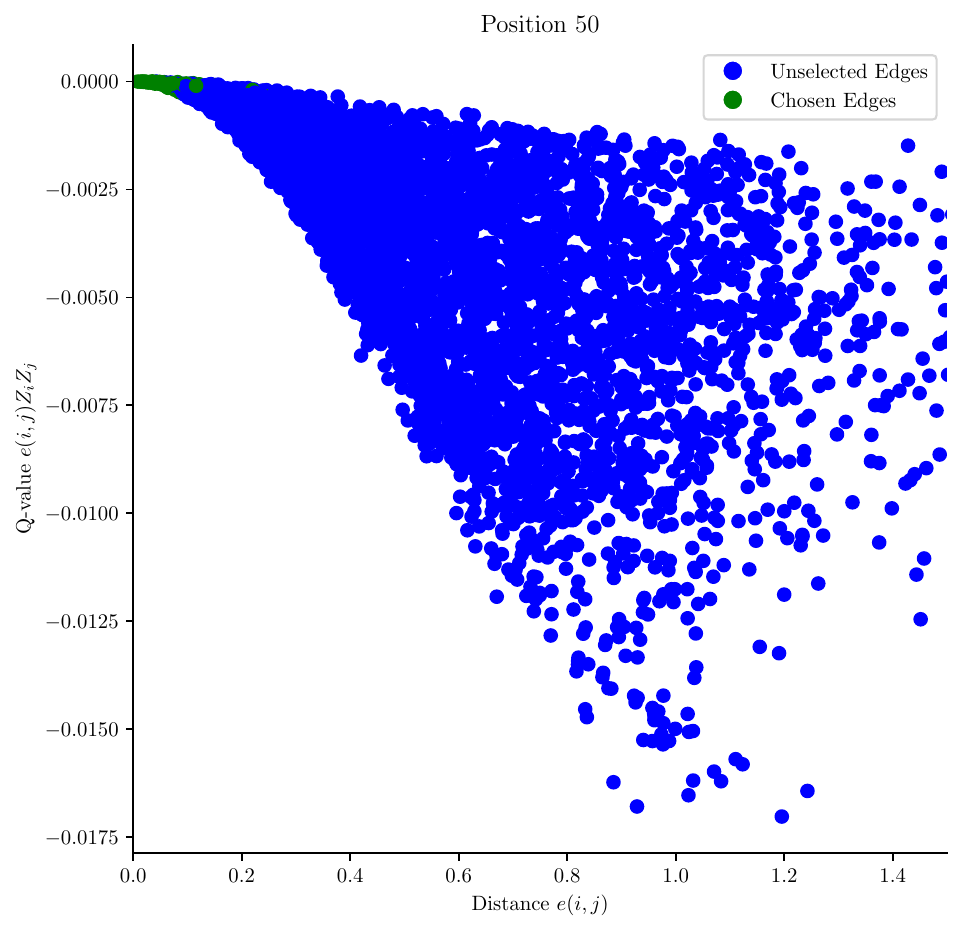}
    \end{subfigure}
    \begin{subfigure}[b]{0.24\linewidth}
        \centering
        \includegraphics[width=\linewidth]{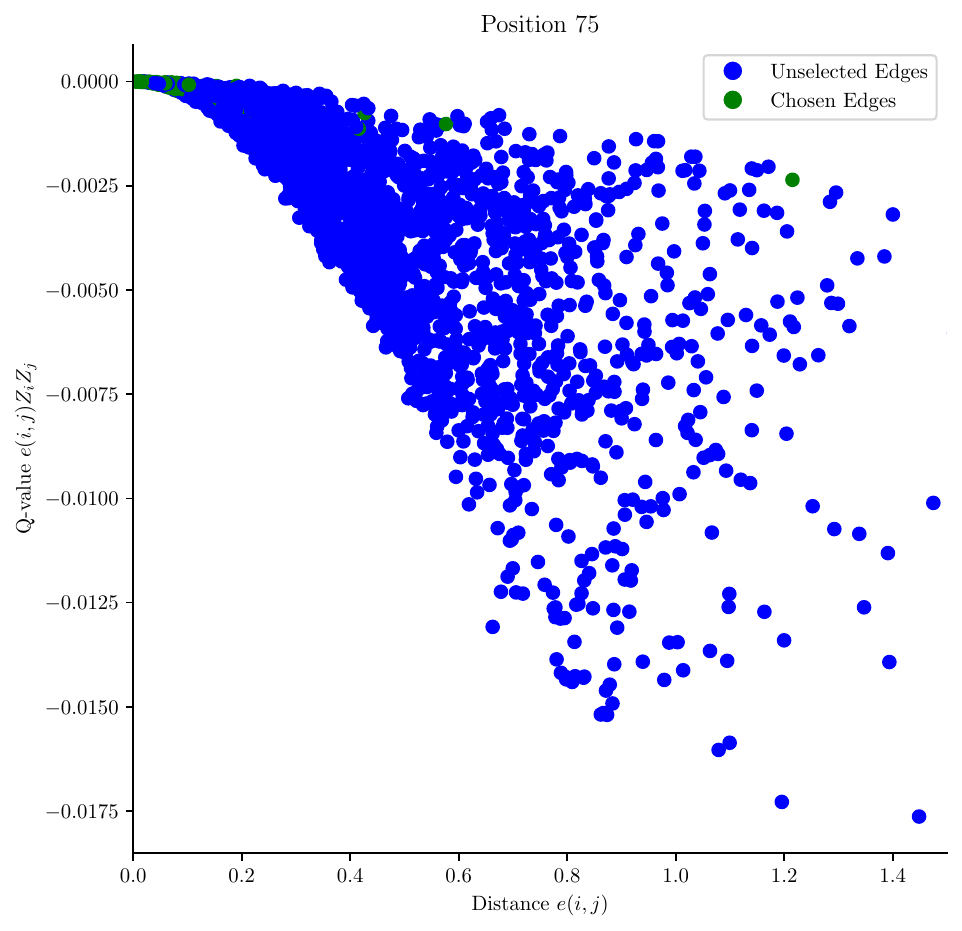}
    \end{subfigure}
    \hfill
    \begin{subfigure}[b]{0.24\linewidth}
        \centering
        \includegraphics[width=\linewidth]{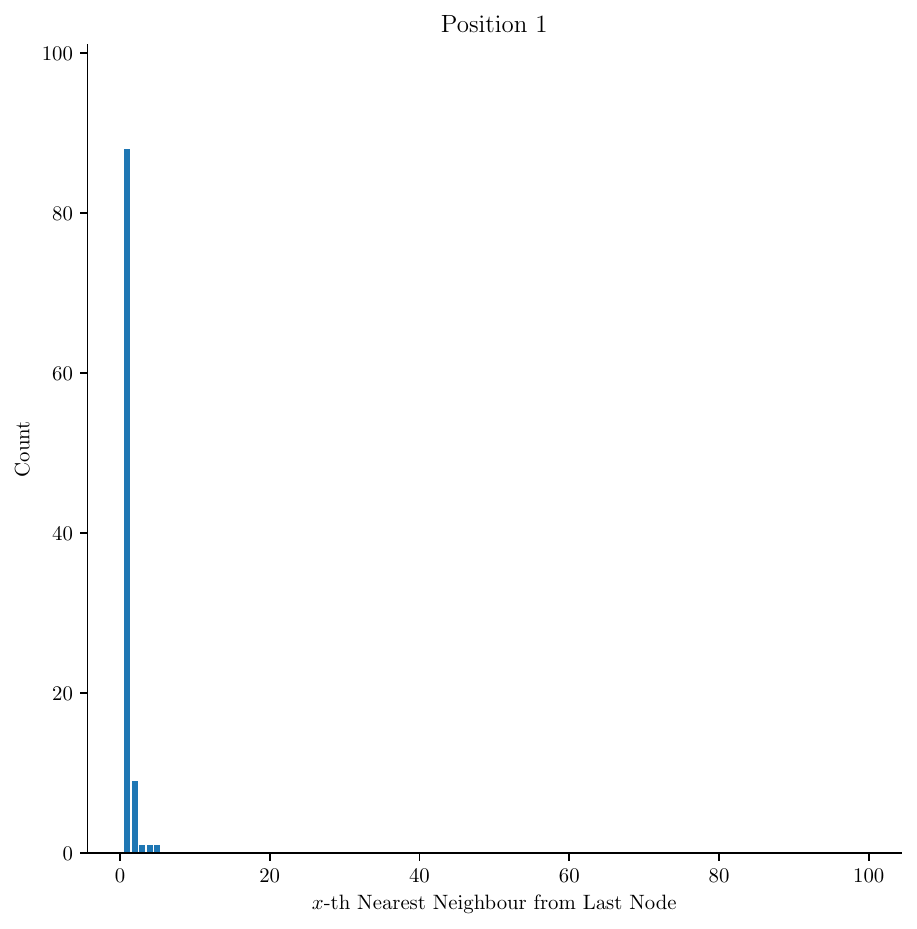}
        \caption{Tour Position 1}
    \end{subfigure}
    \begin{subfigure}[b]{0.24\linewidth}
        \centering
        \includegraphics[width=\linewidth]{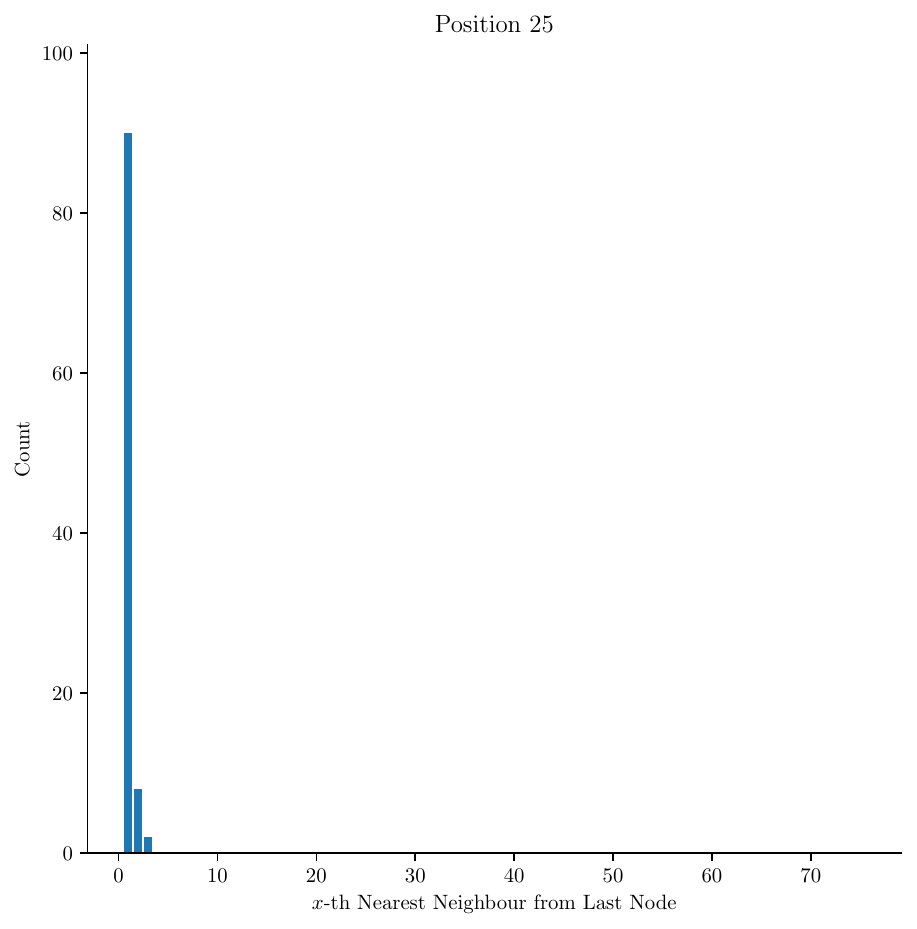}
        \caption{Tour Position 25}
    \end{subfigure}
    \begin{subfigure}[b]{0.24\linewidth}
        \centering
        \includegraphics[width=\linewidth]{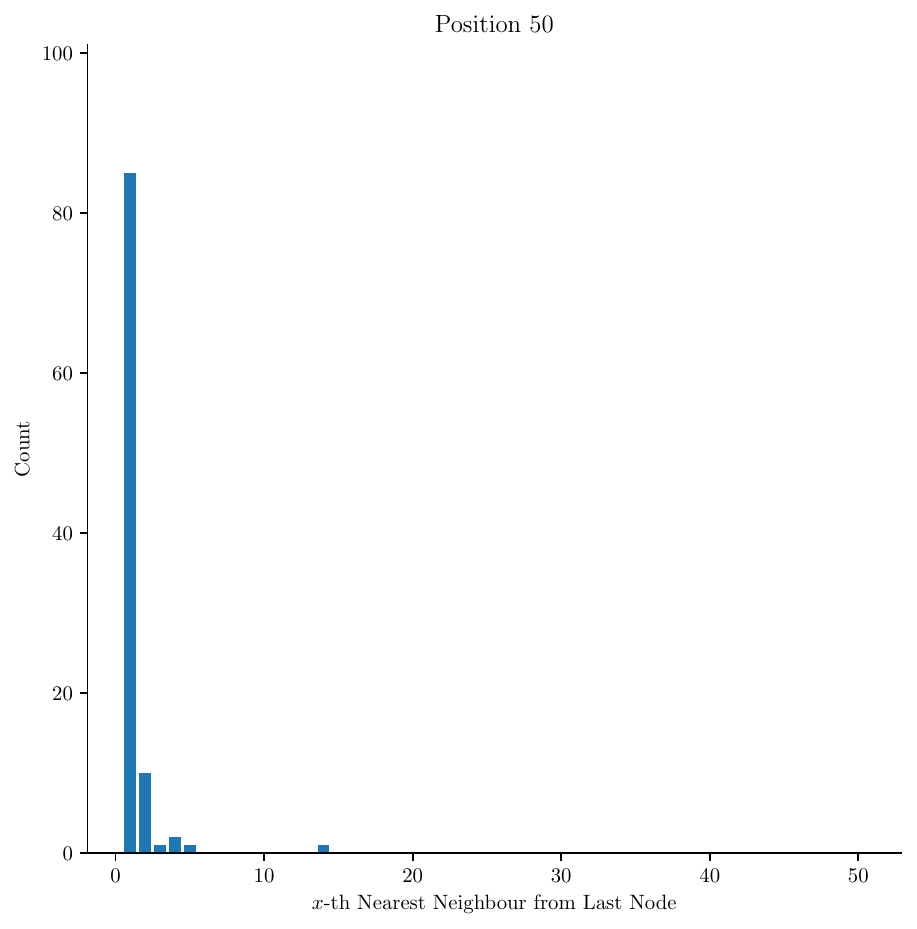}
        \caption{Tour Position 50}
    \end{subfigure}
    \begin{subfigure}[b]{0.24\linewidth}
        \centering
        \includegraphics[width=\linewidth]{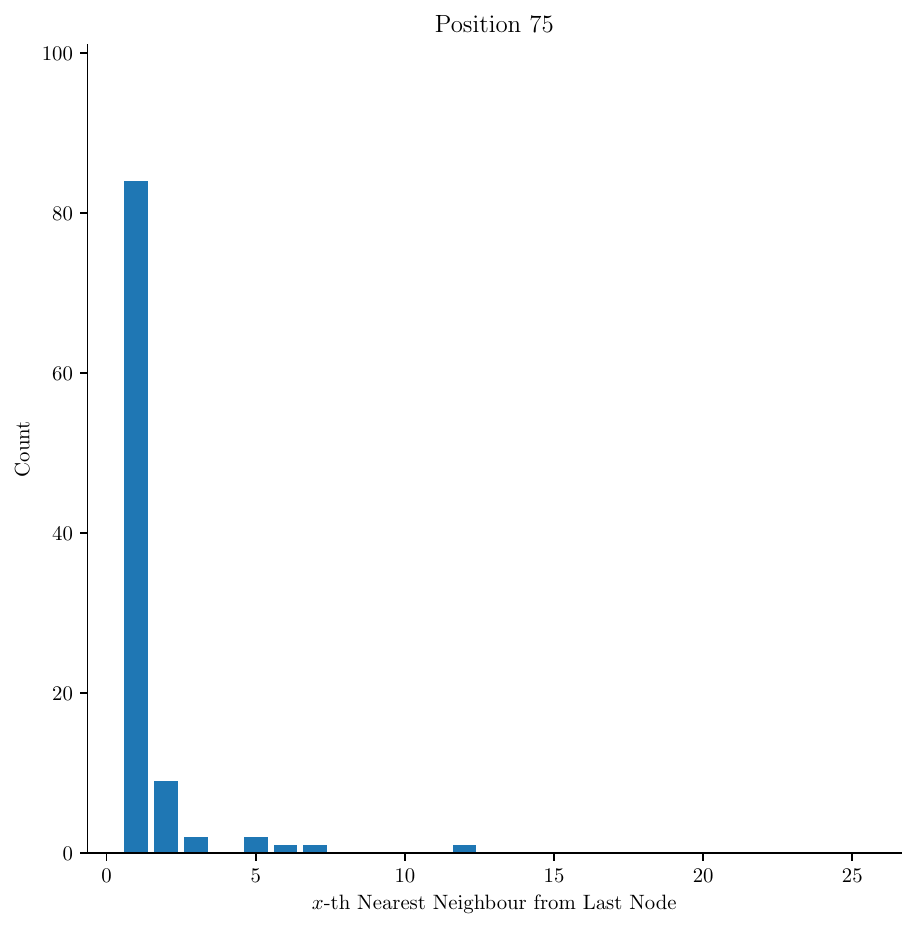}
        \caption{Tour Position 75}
    \end{subfigure}
    \caption{Visualization of TSP-100 policies using Q-value against Distance plots (top row) and Frequency of selecting the $x$-th Nearest Neighbour (bottom) at different positions of the tour. These plots are generated using the $\gamma$ found using the SIGS method for TSP-100, $\gamma= 0.4$, and $\beta = 1.1$.}
    \label{fig:tsp100-visualization-additional}
\end{figure}
\section{Full Comparison of RL and SIGS } \label{appendix:size-invariant-tm-full-comp}
\begin{table}[H]
\centering
\caption{Full comparison between RL at EQC Depth 1 and SIGS. Gaps reported are based on a held out test set for each TSP size, and additionally include the Worst Gap (Equation \bref{eq:mean-and-worst-gap}{5}), Best Gap (Lowest gap achieved on a single instance in a Dataset) and the number of episodes required to reach early stopping (Episodes). Additionally, we provide a breakdown of total time (that was reflected in Figure \bref{fig:rl-vs-SIGS}{3}). Episode Collection Time (Ep Time) refers to the total amount of time the agent spends in collecting transitions for train steps. This is dependent on the number of episodes the agent has ran for before early stopping. Train Step time (TS Time) refers to the total amount of time we spent on computing gradient updates, which is dependent on the number of episodes before early stopping, since $\frac{\text{episodes}}{10}$ is the number of train steps (as outlined in Supplementary Information \ref{sect:experimental-config-qrl}). Validation Time (Val Time) refers to the total amount of time we spent on validating the agent's performance (after every train step the agent is evaluated). Test Time refers to the total amount of time spent on computing the performance of the chosen model on the test set. For SIGS, there is no episode collection and train step calculations, therefore the total time breakdown only consists of the time for validation and testing. }
\scriptsize
\begin{tabular}{lccccccccccc}
\toprule
\textbf{TSP Size} & \textbf{5} & \textbf{10} & \textbf{20} & \textbf{30} & \textbf{40} & \textbf{50} & \textbf{60} & \textbf{70} & \textbf{80} & \textbf{90} &\textbf{100} \\
\midrule
\multicolumn{12}{l}{\textbf{RL} (EQC Depth 1 - PennyLane Simulator)}\\
\multicolumn{12}{l}{\textit{Test Gaps}}\\
Mean & 1.032 & 1.056 & 1.104 & - & - & - & - & - & - & - & -\\
Worst & 1.280 & 1.335 & 1.370 & - & - & - & - & - & - & - & -\\
Best & 1.000 & 1.000 & 1.000 & - & - & - & - & - & - & - & -\\
Episodes & 500 & 820 & 770 & - & - & - & - & - & - & - & -\\
\multicolumn{12}{l}{\textit{Wall Times (min)}}\\
\underline{Total} & 10.8 & 175 & 8980 & - & - & - & - & - & - & - & -\\
Ep & 0.568 & 5.08 & 3720 & - & - & - & - & - & - & - & -\\
TS & 4.78 & 99.8 & 2720 & - & - & - & - & - & - & - & -\\
Val & 4.39 & 61.7 & 2020 & - & - & - & - & - & - & - & -\\
Test & 1.00 & 8.59 & 517 & - & - & - & - & - & - & - & -\\
\midrule
\multicolumn{12}{l}{\textbf{RL} (EQC Depth 1 - Analytical Expression)}\\
\multicolumn{12}{l}{\textit{Test Gaps}}\\
Mean $G_1$ & 1.032 & 1.057 & 1.105 & \underline{1.135} & 1.150 & \underline{1.160} & 1.174 & 1.179 & \underline{1.182} & 1.188 & 1.189 \\
Worst & 1.280 & 1.335 & 1.370 & 1.467 & 1.392 & 1.458 & 1.382 & 1.389 & 1.443 & 1.354 & 1.382 \\
Best & 1.000 & 1.000 & 1.000 & 1.000 & 1.000 & 1.016 & 1.005 & 1.021 & 1.039 & 1.024 & 1.064 \\
Episodes & 500 & 810 & 770 & 3050 & 2690 & 4320 & 5000 & 4290 & 4700 & 5200 & 4670 \\
\multicolumn{12}{l}{\textit{Wall Times (min)}}\\
\underline{Total} & 0.788 & 5.63 & 7.42 & 42.5 & 36.7 & 68.5 & 93.9 & 97.1 & 119 & 146 & 151 \\
Ep  & 0.0370 & 0.298 & 0.630 & 7.24 & 8.54 & 18.7 & 26.6 & 26.6 & 33.7 & 42.3 & 42.7 \\
TS & 0.0530 & 0.111 & 0.117 & 0.504 & 0.458 & 0.701 & 0.801 & 0.687 & 0.732 & 0.871 & 0.792 \\
Val  & 0.537 & 4.36 & 5.06 & 32.8 & 24.6 & 45.9 & 62.8 & 65.7 & 80.2 & 98.0 & 102 \\
Test & 0.143 & 0.830 & 1.58 & 1.90 & 3.00 & 3.03 & 3.49 & 3.97 & 4.24 & 4.78 & 5.06 \\
\midrule
\multicolumn{12}{l}{\textbf{SIGS} (EQC Depth 1)} \\
\multicolumn{12}{l}{\textit{Test Gaps}}\\
Mean, $G_2$ & \underline{1.018} & \underline{1.046} & \underline{1.095} & 1.141 & \underline{1.149} & 1.161 & \underline{1.169} & \underline{1.177} & 1.183 & \underline{1.187} & \underline{1.187} \\
Worst & 1.233 & 1.316 & 1.354 & 1.415 & 1.361 & 1.367 & 1.410 & 1.405 & 1.388 & 1.368 & 1.388 \\
Best & 1.000 & 1.000 & 1.000 & 1.001 & 1.000 & 1.016 & 1.005 & 1.021 & 1.030 & 1.024 & 1.061 \\
\multicolumn{12}{l}{\textit{Wall Times (min)}}\\
\underline{Total} & 0.123 & 0.401 & 0.847 & 0.903 & 1.419 & 1.070 & 2.855 & 3.676 & 4.562 & 4.648 & 5.433 \\
Val  & 0.080 & 0.108 & 0.407 & 0.415 & 0.669 & 0.362 & 0.956 & 1.236 & 1.583 & 1.654 & 1.817 \\
Test & 0.041 & 0.063 & 0.440 & 0.488 & 0.750 & 0.708 & 1.896 & 2.433 & 2.977 & 2.991 & 3.613 \\
\midrule
\multicolumn{12}{l}{$\Delta = |G_1-G_2|$}\\
$\Delta$ & 0.014 & 0.011 & 0.009 & 0.006 & 0.002 & 0.001 & 0.005 & 0.003 & 0.001 & 0.001 & 0.002 \\
\bottomrule
\end{tabular}
\label{tab:rl-vs-SIGS-full}
\end{table}
We observe that \textit{Train Step} (TS) time dominates runtime as TSP-size grows, when using the PennyLane simulator for QRL. This is due to the quantum circuit simulator having to compute, and take gradients of, the entire quantum state vector $|\psi\rangle\in \mathbb{C}^{2^n}$ and unitary ${\hat U}\in \mathbb{C}^{2^n \times 2^n}$ (where $n$ is the number of nodes). This can be avoided if we simulate the EQC using Analytical Expression. This way the largest matrix we use in our calculations is $\mathbb{R}^{|E|\times|E|}$, where $|E|$ is the number of edges, which is $\mathbb{R}^{{\cal O}(n^4)}$, which is much more efficient in terms of computation time and memory. The following time reduction is also reflected in the TS Time for TSP-20 (2720 min - on PennyLane Simulator to 0.111 min using the Analytical Expression, where both runs ran for 770 episodes). 

We also observe that the \textit{Worst gap on the test set} (Worst) is approximately $20\%-30\%$ higher than the \textit{Mean gap on the test set} (Mean). We suspect that they may be a certain type of node placements where the optimal EQC policies are poor at. Further exploration into this is beyond the scope of this work.
\section{Additional visualizations showing the effect of increasing TSP size at the same value of \texorpdfstring{$\gamma$}{gamma}}\label{appendix:gamma-constant}
\begin{figure}[H]
    \centering
    \begin{subfigure}[b]{0.32\textwidth}
        \centering
        \includegraphics[width=\textwidth]{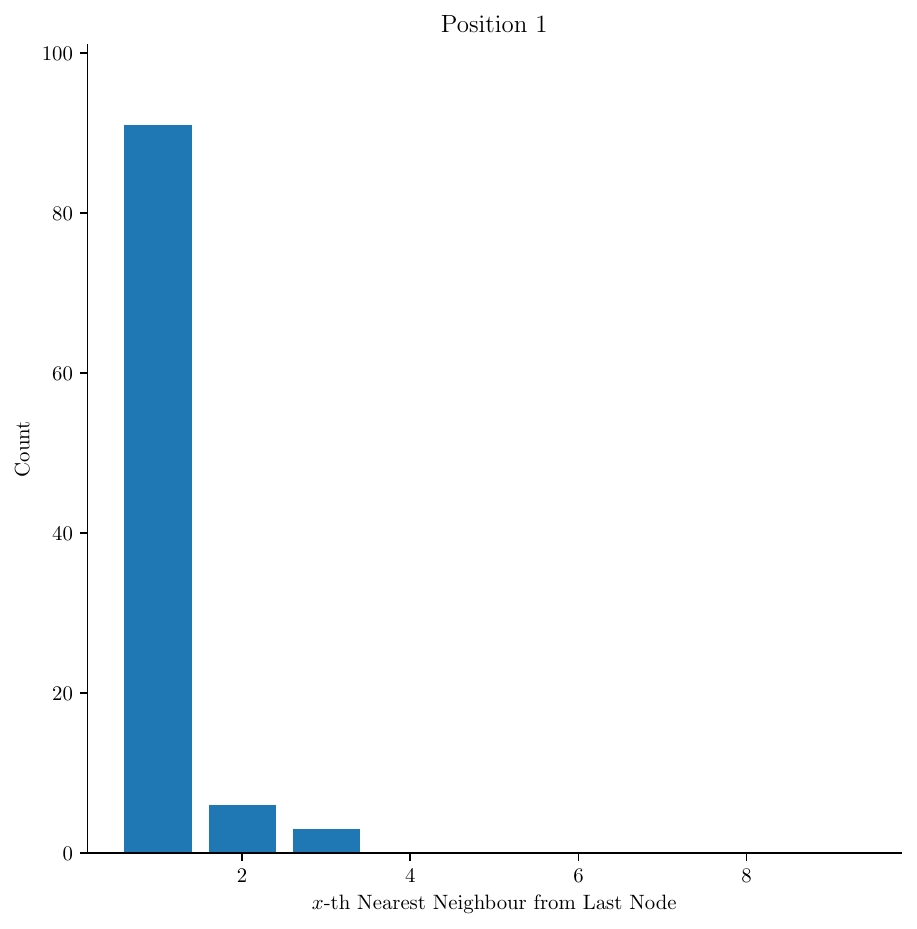}
        \caption{TSP-10, using $\gamma^*_{\text{TSP-10}} = 0.9$}
        \label{fig:gamma-comp-multi-size-tsp10}
    \end{subfigure}
    \begin{subfigure}[b]{0.32\textwidth}
        \centering
        \includegraphics[width=\textwidth]{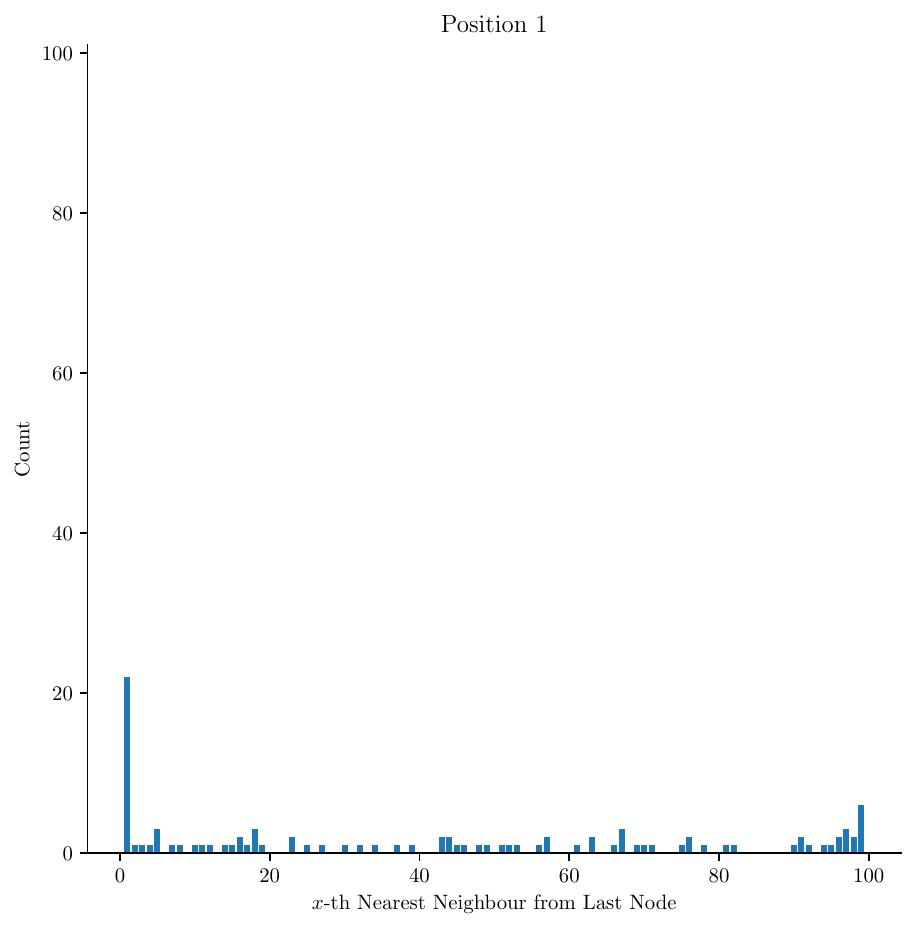}
        \caption{TSP-100, using  $\gamma^*_{\text{TSP-10}} = 0.9$}
        \label{fig:gamma-comp-multi-size-tsp100-g0.9}
    \end{subfigure}
    \begin{subfigure}[b]{0.32\textwidth}
        \centering
        \includegraphics[width=\textwidth]{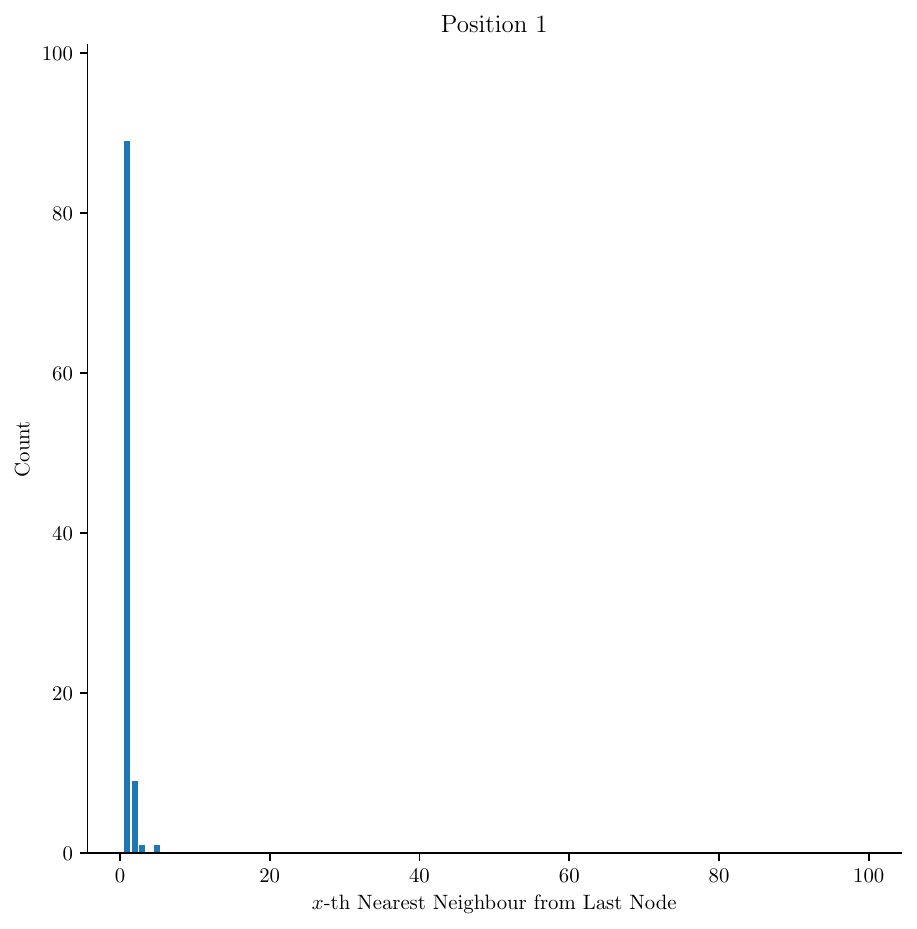}
        \caption{TSP-100 using $\gamma^*_{\text{TSP-100}} = 0.4$}
        \label{fig:gamma-comp-multi-size-tsp100-g0.4}
    \end{subfigure}
    \caption{Effect of using the best $\gamma$ found for TSP-10 to evaluate on TSP-100 instances on the Frequency of selecting the $x$-th Nearest Neighbour at Position 1 of the Tour. A detailed description of these histogram can be found in the description of Figure 7. Let $\gamma^*_{\text{TSP-}x}$ denote the best $\gamma$ found via the SIGS procedure on TSP instances of $x$ nodes. These plots were generated on the validation set for TSP-10 and a randomly generated dataset of 100 uniform TSP instances for TSP-100 (for TSP-100 we generated a new dataset due to the small size of the validation set used for training).}
    \label{fig:gamma-comp-multi-size}
\end{figure}
Figure \ref{fig:gamma-comp-multi-size} shows the effect of using the best $\gamma$ found for TSP-10 ($\gamma^*_{\text{TSP-10}} = 0.9$) to evaluate TSP-100 instances. Figure \ref{fig:gamma-comp-multi-size-tsp10} demonstrates that for TSP-10, the best policy selects the nearest neighbour most of the time, followed by the second, and third nearest neighbours at Position 1 of the tour. When the same $\gamma^*_{\text{TSP-10}}$ is used on TSP-100 instances (Figure \ref{fig:gamma-comp-multi-size-tsp100-g0.9}), the agent selects the nearest neighbour only about 20\% of the time and instead chooses many other neighbours. This behavior arises from the increased weight of the additional cosine terms as TSP size grows (see Conjecture \bref{conjecture:opt-gamma-decreases}{1}). 

For comparison, Figure \ref{fig:gamma-comp-multi-size-tsp100-g0.4} shows the behavior under the best $\gamma$ found for TSP-100 ($\gamma^*_{\text{TSP-100}}=0.4$). Here the agent's policy once again focuses only on the nearest neighbors, producing a similar distribution to TSP-10. This illustrates why the optimal $\gamma^*$ decreases with problem size, as first noted in Section \bref{sect:size-invariant-method-exp-scale-tsp350}{4.3}).  

\section{SIGS: Error Bound on Gridsearch Discretization Error} \label{appendix:size-invariant-method-discretization-error}
\begin{definition}
    \textit{(Mean Gap Function)} Let $f(\gamma)$ denote the mean gap achieved by a Depth 1 EQC with a parameter setting $\gamma \in (0, \gamma_{\max})$.
    \label{def:mean-gap-f}
\end{definition}
Note that $\gamma_{\max}$ was defined as $\gamma_{\max} \coloneq \frac{\pi/2}{\max_{i< j}e_{ij}}$ (see Lemma \bref{lem:safety-of-gamma-range}{2}).
\begin{definition}
    \textit{(True Minimizer of $f$)} Let $\gamma^*$ denote the true minimizer of $f$, where $\gamma^*\in(0, \gamma_{\max})$. 
    \label{def:true-minimizer}
\end{definition}
\begin{definition}
     \textit{(Gridsearch Boundary)} Let $\Gamma$ denote the a uniform grid over the interval $\gamma \in (0, \frac{\pi/2}{\tan^{-1}\sqrt{2}})$ with a spacing of $\Delta\gamma$. Define $\gamma_-$ and $\gamma_+$ such that
     \begin{align}
     \gamma_- & = \max\{ \gamma \in \Gamma : \gamma \leq \gamma^*\}\\
     \gamma_+ & = \gamma_- + \Delta\gamma,
     \end{align}
     which ensures that $\gamma^*\in[\gamma_-,\gamma_+]$. 
    \label{def:gamma-plus-minus}
\end{definition}
\begin{definition}
    (Discretization Error) Let $\gamma^*_{SIGS}$ be the value of $\gamma$ chosen using SIGS, and let $\gamma^*$ be defined as per Definition \ref{def:true-minimizer}. Then define the discretization error from running SIGS with a step size $\Delta\gamma$ be defined as 
    \begin{equation}
        \varepsilon(\Delta\gamma) = f(\gamma^*_{SIGS} ) - f(\gamma^*)
    \end{equation}
    \label{def:discretization-error}
\end{definition}
\begin{proposition}
    \textit{(Error bound for the grid-search approximation)} The Discretization Error (see Definition \ref{def:discretization-error}) is no worse than the better of its two adjacent grid-points:
    \begin{equation}
     0 \leq \varepsilon(\Delta\gamma) \leq \min \{f(\gamma_-) - f(\gamma^*), f(\gamma_+) - f(\gamma^*)\}.
    \end{equation}    
    \label{prop:error-bound-on-gamma-gridsearch}
\end{proposition}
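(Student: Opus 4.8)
The plan is to prove both inequalities from two essentially definitional facts: that $\gamma^*$ is the global minimizer of $f$ over the entire interval $(0,\gamma_{\max})$ (Definition \ref{def:true-minimizer}), and that $\gamma^*_{SIGS}$ is, by construction, the minimizer of $f$ restricted to the finite grid $\Gamma \subset (0,\gamma_{\max})$. Because both minimizations involve the same objective $f$, the argument collapses to a short chain of inequalities and requires no continuity, Lipschitz, or other analytic property of $f$.

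First I would establish the lower bound. Since $\gamma^*_{SIGS} \in \Gamma \subset (0,\gamma_{\max})$ is an admissible point of the global problem and $\gamma^*$ minimizes $f$ over the whole interval, we have $f(\gamma^*_{SIGS}) \geq f(\gamma^*)$, so that $\varepsilon(\Delta\gamma) = f(\gamma^*_{SIGS}) - f(\gamma^*) \geq 0$. This step uses only global optimality of $\gamma^*$ and the admissibility of the chosen grid point.

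Next, for the upper bound, I would note that both bracketing points lie on the grid: $\gamma_- \in \Gamma$ by Definition \ref{def:gamma-plus-minus}, and $\gamma_+ = \gamma_- + \Delta\gamma$ is the adjacent grid point of the uniform grid $\Gamma$. Since $\gamma^*_{SIGS}$ minimizes $f$ over all of $\Gamma$, it dominates both bracketing evaluations simultaneously, giving
\begin{equation}
f(\gamma^*_{SIGS}) = \min_{\gamma \in \Gamma} f(\gamma) \leq \min\{f(\gamma_-),\, f(\gamma_+)\}.
\end{equation}
Subtracting $f(\gamma^*)$ from both sides and using Definition \ref{def:discretization-error} yields
\begin{equation}
\varepsilon(\Delta\gamma) \leq \min\{f(\gamma_-) - f(\gamma^*),\, f(\gamma_+) - f(\gamma^*)\},
\end{equation}
which combined with the lower bound gives the claim.

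The argument is elementary, so the only genuine subtleties — the ``hard part,'' such as it is — are two definitional alignments rather than any estimate. The first is confirming that $\gamma_+$ is always a valid grid point lying inside the searched interval, the sole edge case being when $\gamma^*$ falls in the last cell near $\gamma_{\max}$; I would handle this by appealing to how $\Gamma$ is constructed in Definition \ref{def:gamma-plus-minus}. The second is ensuring that the $f$ that SIGS minimizes over the grid is literally the same $f$ appearing in the definition of $\gamma^*$, so that the grid-optimality of $\gamma^*_{SIGS}$ may be compared directly against the bracketing values. Once these are in place, the two inequalities follow immediately without invoking any regularity of $f$.
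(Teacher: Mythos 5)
Your proof is correct and follows essentially the same route as the paper's: both rest on the two definitional facts that $\gamma^*_{SIGS}$ minimizes $f$ over the grid $\Gamma$ (so $f(\gamma^*_{SIGS}) \leq \min\{f(\gamma_-), f(\gamma_+)\}$) and that $\gamma^*$ is the global minimizer (giving $\varepsilon(\Delta\gamma) \geq 0$), then subtract $f(\gamma^*)$. Your added attention to the edge case of $\gamma_+$ near $\gamma_{\max}$ is a minor refinement the paper glosses over, but the argument is otherwise identical.
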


\begin{proof}
    By definition of grid-search, $\gamma^*_{SIGS}$ can be expressed as 
    \begin{equation}
        \gamma^*_{SIGS} = \arg \min_{\gamma \in \Gamma} f(\gamma).
        \label{eq:gamma-SIGSforce}
    \end{equation}
    As such, 
    \begin{equation}
        f(\gamma_{SIGS}^*) \leq \min\{f(\gamma_-), f(\gamma_+)\}.
    \end{equation}
    Since $\gamma^*$ is the global minimizer, and $f(\gamma^*)\leq f(\gamma_-), f(\gamma_+)$, and the local discretization error is bounded by: 
    \begin{equation}
        0 \leq f(\gamma_{SIGS}^*) - f(\gamma^*) \leq \min\{f(\gamma_-)-f(\gamma^*), f(\gamma_+) -f(\gamma^*)\},
    \end{equation}
    \begin{equation}
        \Rightarrow 0 \leq \varepsilon(\Delta\gamma) \leq \min\{f(\gamma_-)-f(\gamma^*), f(\gamma_+) -f(\gamma^*)\}.
    \end{equation}
\end{proof}
\begin{figure}[H]
    \centering
    \begin{subfigure}[b]{0.32\linewidth}
        \centering
        \includegraphics[width=\linewidth]{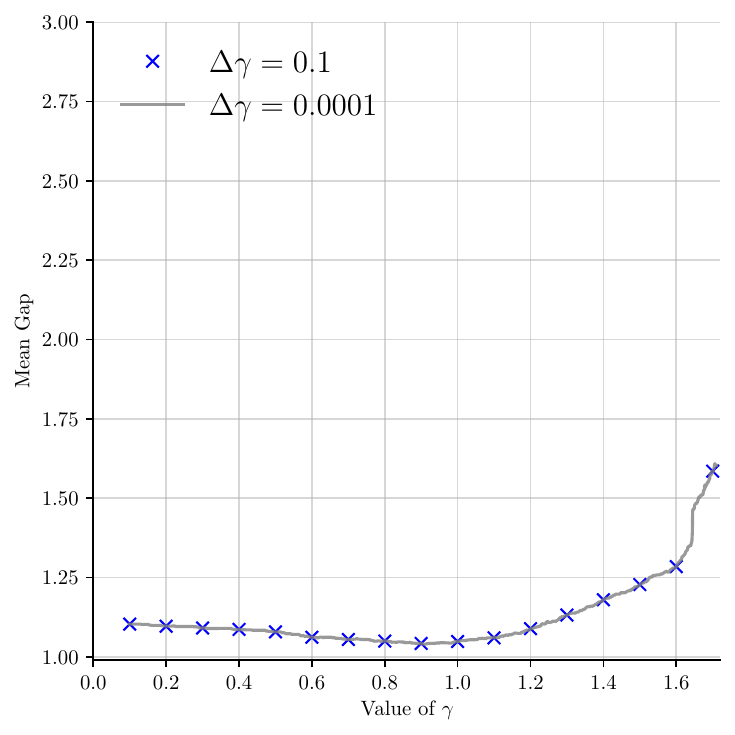}
        \caption{TSP-10}
        \label{fig:discretization-error-gamma-0.1-tsp10}
    \end{subfigure}
    \begin{subfigure}[b]{0.32\linewidth}
        \centering
        \includegraphics[width=\linewidth]{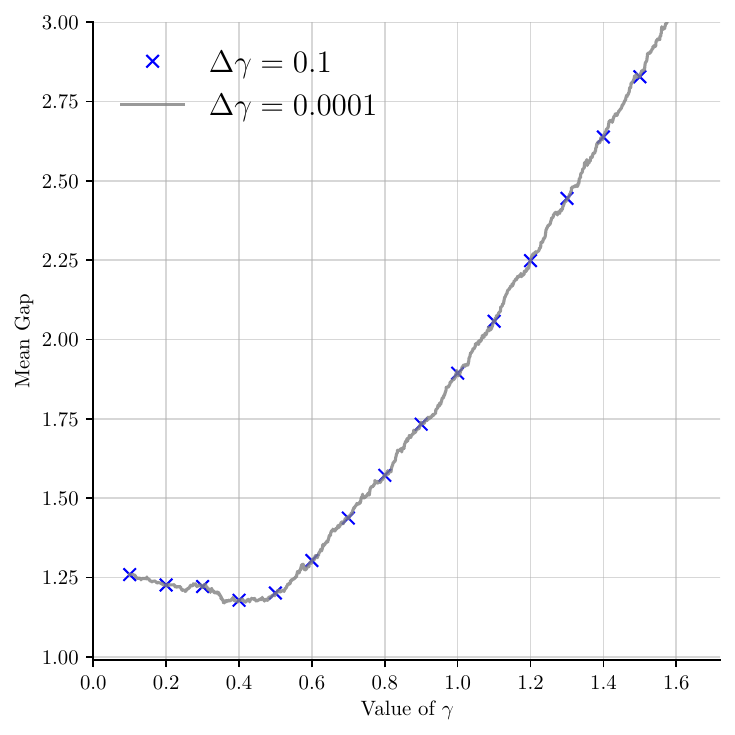}
        \caption{TSP-100}
        \label{fig:discretization-error-gamma-0.1-tsp100}
    \end{subfigure}
    \begin{subfigure}[b]{0.32\linewidth}
        \centering
        \includegraphics[width=\linewidth]{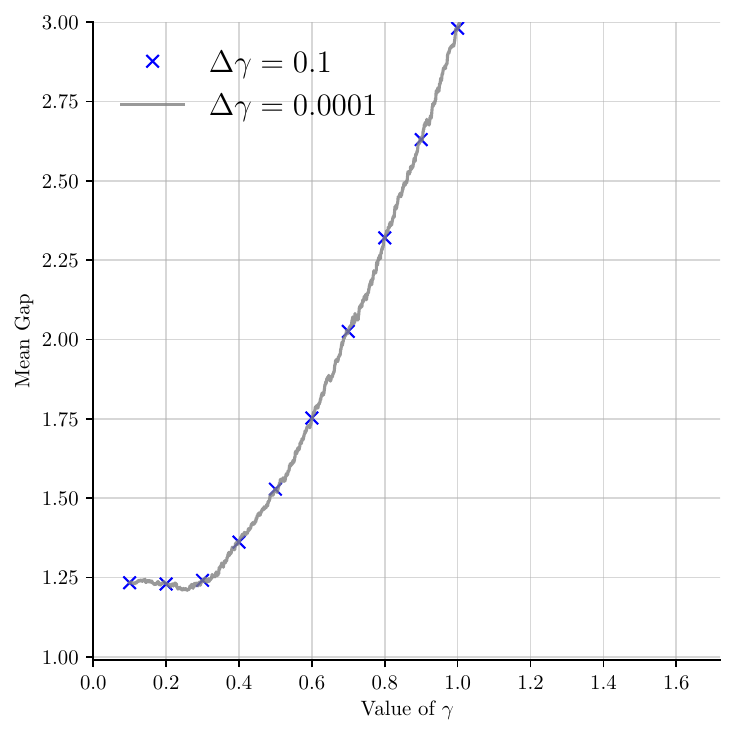}
        \caption{TSP-350}
        \label{fig:discretization-error-gamma-0.1-tsp350}
    \end{subfigure}
    \caption{Mean Gap as a function of $\gamma$, comparing a gridsearch with step-sizes $\Delta\gamma = 0.1$ (blue crosses) and $\Delta\gamma =0.0001$ (gray line). Note that the vertical axes of all plots are standardized. The Mean Gaps reported in the plot correspond to the mean gap of the validation set, since the validation set is used in SIGS to select parameter $\gamma$.}
    \label{fig:discretization-error-gamma-0.1}
\end{figure}
Proposition \ref{prop:error-bound-on-gamma-gridsearch} showed that the error introduced by gridsearch is bounded by the difference in mean gap values at the two nearest grid-points surrounding the true minimizer $\gamma^*$. This guarantee is only effective if the expression $\min\{f(\gamma_-) - f(\gamma^*), f(\gamma_+) - f(\gamma^*)\}$ is sufficiently small. To assess the practical impact of discretization, we empirically evaluate the error across different choices of $\Delta\gamma$ on TSP-10.

Table \ref{tab:discretization-error} reports results for $\Delta\gamma = 0.1, 0.01$ and $0.001$, using a finer grid $\Delta\gamma = 0.0001$ as a proxy for $\gamma^*$. Using this proxy, we determine $\gamma^* \approx \tilde{\gamma^*} = 0.9111$. At $\Delta\gamma = 0.1$, the grid-search solution is at most $0.00218$ worse in mean gap compared to $\gamma^*$, corresponding to a less than 1\% relative error, while requiring under one minute of runtime. This error increases slightly as TSP size increases: at TSP-100, we observe a discretization error of $0.00875$ and $0.0200$ at TSP-100 and TSP-350 respectively. Nevertheless, we find a discretization error at TSP-350 of $0.02$ (relative error $\approx$ 1.65\%) to be an acceptable tradeoff between wall clock efficiency and solution quality. 

Refining the step size to $\Delta\gamma=0.01$ incurs reduces the discretization error significantly at all 3 TSP-sizes: the relative error decreases by 0.15\%, 0.66\% and 1.47\% at TSP-10, TSP-100 and TSP-350 respectively. As such, SIGS could potentially be improved by doing a coarse-to-fine gridsearch: where we first use $\Delta\gamma = 0.1$ (coarse grid) to find the best $\gamma$ in the coarse grid, and search within the range of $[\gamma - \Delta\gamma, \gamma+\Delta\gamma]$ using a finer grid. We suspect that this would reduce the relative error further for simulations up to TSP-350.

However, we leave the coarse-to-fine gridsearch for future work as we suspect that eventually the coarse-grid $\Delta\gamma = 0.1$ may become irrelevant beyond TSP-350. For TSP-100 (Figure \ref{fig:discretization-error-gamma-0.1-tsp100}) and TSP-350 (Figure \ref{fig:discretization-error-gamma-0.1-tsp350}), beyond $\gamma^*$ ($\gamma^*_{\text{TSP-100}}=0.359$, $\gamma^*_{\text{TSP-350}} =0.2584$), the mean gap on the validation set rises quickly. Under our size-invariant analysis, although $\gamma$ is still in the safe region $\gamma\in (0,\gamma_{\max})$, it is possible that due to Conjecture \bref{conjecture:opt-gamma-decreases}{1}, the range of good candidate values for $\gamma$ is also narrowing. As such, a more rigorous analysis would be required to determine if there is a better strategy for analysis beyond TSP-350, which we leave for future work. 

\begin{table}[H]
    \caption{Discretization Error across Multiple Magnitudes of $\Delta\gamma$. $\tilde{\gamma^*}$ is a proxy for $\gamma^*$ using a more refined gridsearch with step size $\Delta\gamma=0.0001$. Then we compute the upper bound of $\varepsilon(\Delta\gamma)$ as per Proposition \ref{prop:error-bound-on-gamma-gridsearch}. The values reported under the $\sup \varepsilon(\Delta\gamma)$ column are mean optimality gaps on the validation set, which is used to select the best value of $\gamma$ in the SIGS procedure. Relative error is given by $\frac{\sup \varepsilon(\Delta\gamma)}{f(\tilde{\gamma^*})}$, expressed as a percentage, and the "Time" column consists of the total time for selecting the parameter $\gamma$ and testing on the held out test set. } 
    \centering
    \small
    \begin{tabular}{cccccccc}
         $\Delta\gamma$ & $\gamma_-$ & $\gamma_+$ & $f(\gamma_-)$ & $f(\gamma_+)$ & $\sup \varepsilon(\Delta\gamma)$ & Relative Error & Time (min) \\
         \midrule
         \multicolumn{8}{l}{\textbf{TSP-10}, $\tilde{\gamma^*}=0.9111$, $f(\tilde{\gamma^*}) = 1.0405$}\\
         0.1 & 0.9 & 1.0 & 1.0426 & 1.0491 & 0.00218 & 0.21\% & 0.789\\
         0.01 & 0.91 & 0.92 & 1.0410 & 1.0424 & 0.000560 & 0.054\% & 7.10\\
         0.001 & 0.911 & 0.912 & 1.0405 & 1.0405 & 0.00 & 0.00\% & 59.9 \\
         \midrule
         \multicolumn{8}{l}{\textbf{TSP-100}, $\tilde{\gamma^*}=0.3590$, $f(\tilde{\gamma^*}) = 1.1701$}\\
         0.1 & 0.3 & 0.4 & 1.2220 & 1.1788 & 0.00875 & 0.75\% & 2.73\\
         0.01 & 0.35 & 0.36 & 1.1916 & 1.1712 &  0.00107 & 0.092\% & 19.2\\
         0.001 & 0.359 & 0.360 & 1.1701 & 1.1712 & 0.00 & 0.00\% & 132 \\
         \midrule
         \multicolumn{8}{l}{\textbf{TSP-350}, $\tilde{\gamma^*}=0.2584$, $f(\tilde{\gamma^*}) = 1.2103$}\\
         0.1 & 0.2 & 0.3 & 1.2303 & 1.2303 & 0.0200 & 1.65\% & 95.8 \\
         0.01 & 0.25 & 0.26 & 1.2142 & 1.2125 & 0.00221 & 0.18\% & 695\\
         0.001 & 0.258 & 0.259 & 1.2113 & 1.2113 & 0.00101 & 0.083\% & 3680 \\
         \bottomrule
    \end{tabular}
    \label{tab:discretization-error}
\end{table}
\pagebreak

\fi

\end{document}